\newcommand{\p}{\partial}
\newcommand{\rank}{\mathop{\rm rank}\nolimits}
\newcommand{\pr}{{\rm P}}
\newcommand{\checked}[1][\null]{\ensuremath{\boldsymbol{\surd}}}
\newcommand{\todo}[1][\null]{\ensuremath{\clubsuit}}
\newcommand{\noprint}[1]{}
\newtheorem{theorem}{Theorem}
\newtheorem{corollary}{Corollary}
\newtheorem{proposition}{Proposition}
{\theoremstyle{definition}
\newtheorem{definition}{Definition}

\newtheorem*{remark*}{Remark}
}
\newcommand{\JJ}{\mathcal{J}}
\newcommand{\vv}{\mathbf{v}}
\newcommand{\nn}{\mathbf{\nabla}}
\newcommand{\ve}{\varepsilon}
\newcommand{\Ad}{\mathrm{Ad}}
\newcommand{\ZZ}{\mathcal{Z}}
\newcommand{\YY}{\mathcal{Y}}
\newcommand{\XX}{\mathcal{X}}
\newcommand{\DDD}{\mathcal{D}}
\begin{document}

\par\noindent {\LARGE\bf
Symmetry preserving parameterization schemes
\par}
{\vspace{4mm}\par\noindent {\bf Roman O.\ Popovych~$^\dag$ and Alexander Bihlo~$^\ddag$
} \par\vspace{2mm}\par}

{\vspace{2mm}\par\noindent {\it
$^\dag$~Institute of Mathematics of NAS of Ukraine, 3 Tereshchenkivska Str., 01601 Kyiv, Ukraine\\
$\phantom{^\dag}$~~Wolfgang Pauli Institute, Nordbergstra{\ss}e 15, A-1090 Vienna, Austria\\
}}
{\noindent \vspace{2mm}{\it
$\phantom{^\dag}$~\textup{E-mail}: rop@imath.kiev.ua
}\par}

{\vspace{2mm}\par\noindent {\it
$^{\ddag}$~Centre de recherches math\'{e}matiques, Universit\'{e} de Montr\'{e}al, C.P.\ 6128, succ.\ Centre-ville,\\
$\phantom{^\ddag}$~Montr\'{e}al (QC) H3C 3J7, Canada\\
}}
{\noindent \vspace{2mm}{\it
$\phantom{^\dag}$~\textup{E-mail}: alexander.bihlo@univie.ac.at
}\par}

\vspace{5mm}\par\noindent\hspace*{8mm}\parbox{140mm}{\small
Methods for the design of physical parameterization schemes that possess certain invariance properties are discussed. These methods are based on different techniques of group classification and provide means to determine expressions for unclosed terms arising in the course of averaging of nonlinear differential equations. The demand that the averaged equation is invariant with respect to a subalgebra of the maximal Lie invariance algebra of the unaveraged equation leads to a problem of inverse group classification which is solved by the description of differential invariants of the selected subalgebra. Given no prescribed symmetry group, the direct group classification problem is relevant. Within this framework, the algebraic method or direct integration of determining equations for Lie symmetries can be applied. For cumbersome parameterizations, a preliminary group classification can be carried out. The methods presented are exemplified by parameterizing the eddy vorticity flux in the averaged vorticity equation. In particular, differential invariants of (infinite dimensional) subalgebras of the maximal Lie invariance algebra of the unaveraged vorticity equation are computed. A hierarchy of normalized subclasses of generalized vorticity equations is constructed. Invariant parameterizations possessing minimal symmetry extensions are described and a restricted class of invariant parameterization is exhaustively classified. The physical importance of the parameterizations designed is discussed.
}\par\vspace{5mm}


\section{Introduction}

The problem of parameterization is one of the most important issues in modern dynamic meteorology and climate research~\cite{kaln02Ay,sten07Ay}. As even the most accurate present days numerical models are not capable to resolve all small scale features of the atmosphere, there is a necessity for finding ways to incorporate these unresolved processes in terms of the resolved ones. This technique is referred to as parameterization. The physical processes being parameterized in numerical weather and climate prediction models can be quite different, including e.g.\ cumulus convection, momentum, heat and moisture fluxes, gravity wave drag and vegetation effects. The general problem of parameterization is intimately linked to the design of closure schemes for averaged (or filtered) nonlinear equations. By averaging, a nonlinear differential equation becomes unclosed, that is, there arise additional terms for which no prognostic or diagnostic equation exist. These terms must hence be re-expressed in a physically reasonable way to be included in the averaged equations.

It has been noted in~\cite{stul88Ay} that every parameterization scheme ought to retain some basic properties of the unresolved terms, which must be expressed by the resolved quantities. These properties include, just to mention a few, correct dimensionality, tensorial properties, invariance under changes of the coordinate system and invariance with respect to Galilean transformations. While the formulation of a parameterization scheme with correct dimensions is in general a straightforward task, not all parameterization schemes that have been used in practice are indeed Galilean invariant. An example for this finding is given by the classical Kuo convection scheme~\cite{kuo65Ay,kuo74Ay}. In this scheme, it is assumed that the vertically integrated time-change of the water vapor at a point locally balances a fraction of the observed precipitation rate~\cite[pp.\ 528]{eman94Ay}. This also implies that the moisture convergence is proportional to the precipitation rate. However, while the precipitation rate is clearly a Galilean invariant quantity, the moisture convergence depends on the motion of the observer~\cite{kerk06Ay}. That is, the Kuo scheme does not properly account for pure symmetry constraints, which is a potential source of unphysical effects in the results of a numerical model integration.

The latter finding is the main motivation for the present investigations. Galilean invariance is an important example for a Lie symmetry, but it is by no means the only invariance characteristic that might be of importance in the course of the parameterization process. This is why it is reasonable to focus on parameterization schemes that also preserve other symmetries. This is not an academic task. Almost all real-world processes exhibit miscellaneous symmetry characteristics. These characteristics are reflected in the symmetry properties of differential equations and correspondingly should also be reflected in case where these processes cannot be explicitly modeled by differential equations, i.e.\ in the course of parameterizations. What is hence desirable is a constructive method for the design of symmetry-preserving parameterization schemes. It is the aim of this paper to demonstrate that techniques from group analysis do provide such constructive methods. In particular, we state the following proposition:

\begin{center}
 \textit{Any problem of finding invariant parameterizations is a group classification problem.}
\end{center}

\noindent Implications following from the above proposition form the core of the present study. It appears that this issue was first opened in~\cite{ober97Ay}, dealing with the problem of turbulence closure of the averaged Navier--Stokes equations. We aim to build on this approach and extend it in several directions. As the equations of hydrodynamics and geophysical fluid dynamics usually possess wide symmetry groups \cite{andr98Ay,bihl09Ay,bila06Ay,fush94Ay,ibra95Ay}, the design of symmetry-preserving parameterizations will in general lead to a great variety of different classes of invariant schemes.

\looseness=1
Needless to say that the parameterization problem is too comprehensive both in theory and applications to be treated exhaustively in a single paper. Therefore, it is crucial to restrict to a setting that allows to demonstrate the basic ideas of invariant parameterizations without overly complicating the presentation by physical or technical details. This is the reason for illustrating the invariant parameterization procedure with the rather elementary barotropic vorticity equation. For the sake of simplicity, we moreover solely focus on local closure schemes in the present study. That is, the quantities to be parameterized at each point are substituted with known quantities defined at the same respective point~\cite{stul88Ay}. This renders it possible to thoroughly use differential equations and hence it will not be necessary to pass to integro-differential equations, as would be the case for nonlocal closure schemes. On the other hand, this restriction at once excludes a number of processes with essential nonlocal nature, such as e.g.\ atmospheric convection. Nevertheless, there are several processes that can be adequately described within the framework of the present paper, most notably different kinds of turbulent transport \mbox{phenomena}.

The organization of this paper is the following: Section~\ref{sec:idea} discusses different possibilities for the usage of symmetry methods in the parameterization procedure, most noteworthy the application of techniques of direct and inverse group classifications. We restate some basic results from the theory of group classification and relate them to the parameterization problem. Section~\ref{sec:vorticity} is devoted to the construction of several parameterization schemes for the eddy vorticity flux of the vorticity equation using the methods introduced in the previous section. Generating sets of differential invariants and operators of invariant differentiation for subalgebras of the maximal Lie invariance algebra of the vorticity equation are computed and used in the framework of invariant parameterization (Section~\ref{sec:ParameterizationInverseGroupClassification}). It should emphasized that up to now only very few examples on exhaustive descriptions of differential invariants for infinite dimensional Lie algebras exist in the literature~\cite{cheh08Ay,golo04Ay}. A hierarchy of nested normalized subclasses of a class of generalized vorticity equations is constructed in Section~\ref{sec:AdmTrans}. Additionally, in Section~\ref{sec:EquivAlgebra} the equivalence algebras of some subclasses are directly found within the framework of the infinitesimal approach. The algebraic method of group classification is used to determine inequivalent invariant parameterization schemes. For a restricted class of generalized vorticity equations, it is proved in Section~\ref{SectionOnParameterizationViaDirectGroupClassification} that the algebraic method provides an exhaustive description of all inequivalent parameterizations of the eddy vorticity flux. For a wider class of generalized vorticity equations, in Section~\ref{sec:ParameterizationViaPreliminaryGroupClassification} we study the problem of invariant parameterization within the framework of preliminary group classification. Namely, inequivalent invariant parameterizations possessing at least one-dimensional symmetry extensions are listed. A short discussion of the results of the paper is presented in Section~\ref{sec:conclusion}, together with an outlook on forthcoming works in the field of invariant parameterization theory. In Appendix~\ref{sec:appendix1}, details on the classification of inequivalent one-dimensional subalgebras of the equivalence algebra from Theorem~\ref{TheoremOnEquivAlgebra1}, which is used in Section~\ref{sec:ParameterizationViaPreliminaryGroupClassification}, can be found.

\section{The general idea}\label{sec:idea}

Throughout the paper, the notation we adopt follows closely that presented in the textbook~\cite{olve86Ay}. Let there be given a system of differential equations
\begin{equation}\label{eq:generalde}
    \Delta^l(x,u_{(n)}) = 0, \qquad l=1,\dots,m,
\end{equation}
where $x=(x^1,\dots,x^p)$ denote the independent variables and the tuple $u_{(n)}$ includes all dependent variables $u=(u^1,\dots,u^q)$ as well as all derivatives of $u$ with respect to $x$ up to order~$n$.
Hereafter, subscripts of functions denote differentiation with respect to the corresponding variables.

Both numerical representations of~\eqref{eq:generalde} as well as real-time measurements are not able to capture the instantaneous value of $u$, but rather only provide some mean values. That is, to employ~\eqref{eq:generalde} in practice usually requires an averaging or filter procedure. For this purpose, $u$ is separated according to
\[
    u = \bar u + u',
\]
where $\bar u$ and $u'$ refer to the averaged and the deviation quantities, respectively. The precise form of the averaging or filter method used determines additional calculation rules, e.g., $\overline{ab}=\bar a \bar b + \overline{a'b'}$ for the classical Reynolds averaging. At the present stage it is not essential to already commit oneself to a definite averaging method. For nonlinear system~\eqref{eq:generalde} averaging usually gives expressions
\begin{equation}\label{eq:generaldea}
    \tilde \Delta^l(x,\bar u_{(n)},w)=0, \qquad l=1,\dots,m,
\end{equation}
where $\tilde \Delta^l$ are smooth functions of their arguments whose explicit form is precisely determined by the form of $\Delta^l$ and the chosen averaging rule. The tuple $w=(w^1,\dots,w^k)$ includes all averaged nonlinear combinations of terms, which cannot be obtained by means of the quantities $\bar u_{(n)}$. These combinations typically include such expressions as $\overline{u'u'}$, $\overline{u'\bar u}$, $\overline{u'u_x'}$, etc., referred to as subgrid scale terms. Stated in another way, system~\eqref{eq:generaldea} contains more unknown quantities than equations. To solve system~\eqref{eq:generaldea}, suitable assumptions on $w$ have to be made. An adequate choice for these assumptions is the problem of parameterization.

The most straightforward way to tackle this issue is to directly express the unclosed terms $w$ as functions of the variables $x$ and $\bar u_{(r)}$ for some~$r$ which can be greater than~$n$. In other words, system~\eqref{eq:generaldea} is closed via
\begin{equation}\label{eq:generaldec}
    \tilde \Delta^l(x,\bar u_{(n)},f(x,\bar u_{(r)}))=0, \qquad l=1,\dots,m,
\end{equation}
using the relation $w^s=f^s(x,\bar u_{(r)})$, $s=1,\dots,k$, where $k$ is the number of unclosed terms which are necessary to be parameterized. The purpose of this paper is to discuss different paradigms for the choice of the functions $f=(f^1,\dots,f^k)$ within the symmetry approach. In other words, we should carry out, in different ways, group analysis of the class~\eqref{eq:generaldec} with the arbitrary elements running through a set of differential functions. To simplify notation, we will omit bars over the dependent variables in systems where parameterization of $w$ is already applied.

\begin{remark*}
In the theory of group classification, any class of differential equations is considered in a jet space of a fixed order. That is, both the explicit part of the expression of the general equation from the class and the arbitrary elements can be assumed to depend on derivatives up to the same order. In contrast to this, for the construction of parameterization schemes it is beneficial to allow for varying the orders of arbitrary elements while the order of the explicitly resolved terms is fixed. This is why we preserve different notations for the orders of derivatives in the explicit part of the expression of the general equation and in the arbitrary elements of the class~\eqref{eq:generaldec}.
\end{remark*}

\subsection{Parameterization via inverse group classification}\label{sec:ParameterizationViaInverseGroupClassification}

Parameterizations based on Lie symmetries appear to have been first investigated for the Navier--Stokes equations. It was gradually realized that the consideration of symmetries plays a key role in the construction of subgrid scale models for the Navier--Stokes equations to allow for realistic simulations of flow evolution. See~\cite{ober97Ay,ober00Ay} for a further discussions on this subject. The approach involving symmetries for the design of local closure schemes, was later extended in~\cite{raza06Ay,raza07Ay,raza07By} in order to incorporate also the second law of thermodynamics into the consideration.

For an arbitrary system of differential equations, this approach can be sketched as follows: First, determine the Lie symmetry group~$G$ (resp.\ the corresponding Lie invariance algebra~$\mathfrak g$) of the model to be investigated. For common models of hydro-thermodynamics these computations were already carried out and results can be found in collections like~\cite{ibra95Ay}. Subsequently, determine the differential invariants of the group~$G$. If the left hand side of system~\eqref{eq:generaldea} is formulated in terms of these invariants by an adequate choice of the function $f$, it is guaranteed that the parameterized system will admit the same group of point symmetries as the unfiltered system. Usually this leads to classes of differential equations rather than to a single model. That is, among all models constructed this way it is be possible to select those which also satisfy other desired physical and mathematical properties.

The procedure outlined above can be viewed as a special application of techniques of \textit{inverse group classification}. Inverse group classification starts with a prescribed symmetry group and aims to determine the entire class of differential equations admitting the given group as a symmetry group \cite{ovsi82Ay}. Thus, in \cite{ober97Ay,ober00Ay,raza06Ay,raza07Ay,raza07By} it is assumed that the closure scheme for the subgrid scale terms leads to classes of differential equations admitting the complete Lie symmetry group of the Navier--Stokes equations. From the mathematical point of view, this assumption is justified as filtering (or averaging) of the Navier--Stokes equations introduces a turbulent friction term among the viscous friction term that already appears in the unfiltered equations. That is, filtering does not principally perturb the structure of the Navier--Stokes equations. However, this assumption may not be as well justified if a model is chosen, where filtering leads to terms of forms not already included in the unfiltered model. In such cases, it may be more straightforward to solve the parameterization problem by inverse group classification only with respect to particular subgroups of the Lie symmetry group~$G$ of the initial system~$\mathcal S$ of differential equations.
The selection of proper subgroups of~$G$ can be realized involving physical arguments.

Another possible way for such a selection may be related to boundary-value problems.
One can choose a subgroup of~$G$ consisting of either symmetries of a particular boundary-value problem for~$\mathcal S$ or equivalence transformations of a relevant class~$\mathcal B$ of similar boundary-value problems for~$\mathcal S$.
The re-interpretation of symmetries of~$\mathcal S$ as equivalence transformations for~$\mathcal B$ is natural because they often have a clear physical significance, such as the rescaling of a domain (e.g.\ when conducting numerical tests), shifts of space and time variables or the transformation from a resting reference frame to reference frames moving with constant velocity, and all of these fundamental symmetries are usually broken when considering a fixed boundary-value problem.
A scaling symmetry of~$\mathcal S$ is restored as an equivalence transformation for~$\mathcal B$ if the class~$\mathcal B$ consists of boundary-value problems of all possible domain sizes. The same argument holds for shifts and  Galilean boosts. This re-interpretation does not change the general algorithm for the construction of parameterization schemes using inverse group classification. It rather requires an analysis of the parameterization problem to be treated using group classification methods. The argument is to determine which symmetries map any particular boundary-value problem from~$\mathcal B$ to another problem from~$\mathcal B$. The symmetries fulfilling this requirement are to be interpreted as equivalence transformations for the given class~$\mathcal B$ of boundary-value problems. The symmetries not compatible with~$\mathcal B$ could therefore be excluded from the consideration.

The approach of inverse group classification usually relies on the notion of differential invariants \cite{olve95Ay,ovsi82Ay}. Differential invariants are defined as the invariants of the prolonged action of a given symmetry group. They can be determined either with the infinitesimal method~\cite{golo04Ay,ovsi82Ay} or with the technique of moving frames~\cite{cheh08Ay,fels98Ay,olve07Ay}. In the present paper we will use the former method which is briefly described here for this reason.

Let $X$ be the $p$-dimensional space of independent and $U$ be the $q$-dimensional space of dependent variables. The connected Lie group $G$ acts locally as a point transformation group on the space $J^0=X\times U$, with $\mathfrak g$ denoting the associated Lie algebra of infinitesimal generators. (The whole consideration is assumed local). Each element of $\mathfrak g$ is of the form $Q=\xi^i(x,u)\p_{x^i}+\varphi^a(x,u)\p_{u^a}$. In this section the indices $i$ and $j$ run from $1$ to $p$ while the indices $a$ and $b$ run from $1$ to $q$, and the summation convention over repeated indices is used. The space $J^r=X\times U_{(r)}$ is the $r$th prolongation of the space $X\times U$ (the $r$th order jet space), which is the space endowed with coordinates $x_i$ and $u^a_\alpha$, $|\alpha|:=\alpha_1+\cdots+\alpha_p<r$, where $u^a_\alpha$ stands for the variable corresponding to the derivative $\p^{|\alpha|}u^a/\p x_1^{\alpha_1}\ldots\p x_p^{\alpha_p}$, and $\alpha=(\alpha_1,\ldots,\alpha_p)$ is an arbitrary multiindex, $\alpha_i\in\mathbb N_0=\mathbb N\cup\{0\}$. The action of $G$ can be extended to an action on $J^r$ and so the elements of $\mathfrak g$ can be prolonged via
\begin{equation}\label{EqProlongedOp}
Q_{(r)} = Q+\sum_{\alpha>0}\varphi^{a\alpha}\p_{u^a_\alpha}, \quad
    \varphi^{a\alpha} := \mathrm D_1^{\alpha_1}\cdots\mathrm D_p^{\alpha_p}(\varphi^a-\xi^i u_i^\alpha) + \xi^i u^a_{\alpha+\delta_i}.
\end{equation}
Here $\mathrm D_i=\mathrm D_{x_i}$ denotes the operator of total differentiation with respect to the variable~$x_i$, i.e.
$\mathrm D_i=\p_{x_i}+u^a_{\alpha+\delta_i}\p_{u^a_\alpha}$, where
$\delta_i$ is the multiindex whose $i$th entry equals 1 and whose other entries are zero.
More details can be found in the textbooks~\cite{olve95Ay,olve86Ay,ovsi82Ay}.

A differential function $f$ (i.e., a smooth function from $J^r$ to $\mathbb R$ for some~$r$) is called an ($r$th order) \emph{differential invariant} of the group $G$ if for any transformation $g\colon (x,u)\mapsto (\tilde x,\tilde u)$ from $G$ we have that $f(\tilde x, \tilde u_{(r)}) = f(x,u_{(r)})$.
The function~$f$ is a differential invariant of~$G$ if and only if the equality $Q_{(r)}f = 0$ holds for any  $Q\in\mathfrak g$.
A vector field $\mathfrak d$ defined in the infinite jet space~$J^\infty$, which is the inverse limit of the sequence of natural projections from $J^{p+1}$ to~$J^p$ for $p\in\mathbb N_0$,
is called an \emph{operator of invariant differentiation} for the group $G$ if the result $\mathfrak df$ of its action to any differential invariant $f$ of~$G$ also is a differential invariant of~$G$.

The \emph{Fundamental Basis Theorem} states that any finite-dimensional Lie group (or, more generally, any Lie pseudo-group satisfying certain condition) acting on~$J^0$ possesses exactly $p$ operators of invariant differentiation, which are independent up to linear combining with coefficients depending on differential invariants, and a finite basis of differential invariants, i.e., a finite set of differential invariants such that any differential invariant of the group can be obtained from basis invariants by a finite number of functional operations and actions by the chosen independent operators of invariant differentiation.

For a vector field $\mathfrak d$ in~$J^\infty$ to be an operator of invariant differentiation of~$G$, it is sufficient that it commutes with every infinitely prolonged operator from the corresponding Lie algebra~$\mathfrak g$, i.e., $[\mathfrak d,Q_{(\infty)}]=0$ for any $Q\in\mathfrak g$. If the group~$G$ is finite dimensional, a set of $p$ independent operators of invariant differentiation can be found in the form $\mathfrak d=h^i\mathrm D_i$ by solving, with respect to the differential functions $h^i=h^i(x,u_{(r)})$, the system of first-order quasi-linear partial differential equations
\[
Q_{(r)}h^i = h^j\mathrm D_j\xi^i,
\]
where $Q$ runs through a basis of the corresponding Lie algebra~$\mathfrak g$ and $r$ equals the minimum order for which the rank of the prolonged basis operators of~$\mathfrak g$ coincides with its dimension.
Eventually, it may be convenient to determine $h^i$ in the implicit form $\Omega^j(h^1,\dots,h^p,x,u_{(r)})=0$,
where $\det(\Omega^j_{h_i})\ne0$ and $\Omega^j$ satisfy the associated system of homogeneous equations
\[
  \left(Q_{(r)}+(h^{i'}\mathrm D_{i'}\xi^i)\p_{\lambda^i}\right)\Omega^j = 0.
\]
In the infinite dimensional case, the construction of invariant differentiation operators is analogous though more sophisticated.

A systematic approach to parameterization via inverse group classification hence consists of determining the basis differential invariants of a group together with the list of operators of invariant differentiation. Subsequently, there are infinitely many parameterizations that can be constructed, which admit the given group as a symmetry group.

\subsection{Parameterization via direct group classification}

The main assumption in the approach presented in~\cite{ober97Ay,raza07By} is that a realistic subgrid scale model for the Navier--Stokes equations should admit the symmetry group of the original equations. However, this assumption is rather restrictive in more general situations. While it is true that a filtered model should be a realistic approximation of the unfiltered equations, parameterization schemes also have to take into account physical processes for which we may not have a precise understanding yet. That is, one eventually has to face the problem to deal with processes for which we may not even have a differential equation. This particularly means that a fixed set of symmetries (as for the Navier--Stokes equations) may not be obtainable.

On the other hand, symmetries do provide a useful guiding principle for the selection of physical models. As nature tends to prefer states with a high degree of symmetry, a general procedure for the derivation of symmetry-preserving parameterization schemes seems reasonable. The only crucial remark is, that we may not know in advance, which symmetries are most essential for capturing the characteristics of the underlying physical processes. For such problems, application of inverse group classification techniques is at once limited. Rather, it may be beneficial to derive parameterization schemes admitting different symmetry groups and subsequently test these various schemes to select among them those which best describe the processes under consideration. That is, instead of expressing the tuple~$w$ in system~\eqref{eq:generaldea} using differential invariants of a symmetry group of the unfiltered equations (or another convenient symmetry group) from the beginning, we investigate symmetries of system~\eqref{eq:generaldec} for different realizations of the functions $f$ which are eventually required to satisfy some prescribed conditions. This way, we could be interested in special classes of parameterizations, such as time- or spatially independent ones. This naturally leads back to the usual problem of \textit{direct group classification}: Let there be given a class of differential equations, parameterized by arbitrary functions. First determine the symmetries admitted for all choices of these functions, leading to the kernel of symmetry groups of the class under consideration. Subsequently, investigate for which special values of these parameter-functions there are extensions of the kernel group~\cite{ovsi82Ay,popo10Ay}.

To systematically carry out direct group classification, it is necessary to determine the equivalence group of the class, i.e.\ the group of transformations mapping an equation from the class~\eqref{eq:generaldec} to an equation from the same class. Classification of extensions of the kernel group is then done up to equivalence imposed by the equivalence group of the class~\eqref{eq:generaldec}.

The continuous part of the equivalence group can be found using infinitesimal methods in much the same way as Lie symmetries can be found using the infinitesimal invariance criterion. This firstly yields the equivalence algebra, the elements of which can then be integrated to give the continuous equivalence group. See \cite{ovsi82Ay,popo10Ay} for more details on this subject.

We now formalize the method reviewed in the previous paragraphs. Let there be given a class of differential equations of the form~\eqref{eq:generaldec}, $\tilde\Delta^l(x,\bar u_{(n)},f(x,\bar u_{(r)}))=0$, $l=1,\dots,m$. The arbitrary elements $f$ usually satisfy an auxiliary system of equations $S(x,u_{(r)},f_{(\rho)}(x,u_{(r)}))=0$, $S=(S^1,\dots,S^s),$ and an inequality $\Sigma(x,u_{(r)},f_{(\rho)}(x,u_{(r)}))\ne0$, where $f_{(\rho)}$ denotes the collection of $f$ and all derivatives of $f$ with respect to the variables $x$ and $u_{(r)}$ up to order $\rho$. The conditions $S=0$ and $\Sigma\ne0$ restrict the generality of $f$ and hence allow the design of specialized parameterizations.
We denote the solution set of the auxiliary system by~$\mathcal S$, the system of form~\eqref{eq:generaldec} corresponding to an $f\in\mathcal S$ by~$\mathcal L_f$
and the entire class of such system by~$\mathcal L|_{\mathcal S}$.

The set of all (nondegenerate) point transformations that map a system $\mathcal L_f$ to a system $\mathcal L_{\tilde f}$, where both $\smash{f,\tilde f\in\mathcal S}$ is denoted by $\smash{\mathrm T(f,\tilde f)}$ and is referred to as the set of admissible transformations from the system $\mathcal L_f$ to the system $\mathcal L_{\tilde f}$. The collection of all point transformations relating at least two systems from the  class $\mathcal L|_S$ gives rise to the set of admissible transformations of~$\mathcal L|_S$.

\begin{definition}
The set of admissible transformations of the class $\mathcal L|_S$ is the set  $\mathrm T(\mathcal L|_S)=\{(f,\tilde f,\varphi)\, |\, f,\tilde f\in \mathcal S, \varphi \in \mathrm T(f,\tilde f)\}$.
\end{definition}

That is, an admissible transformation is a triple, consisting of the initial system (with arbitrary elements $f$), the target system (with arbitrary elements $\tilde f$) and a mapping $\varphi$ between these two systems. 
It is obvious that the usual composition of mappings defines the groupoid structure on the set $\mathrm T(\mathcal L|_S)$ and  
the general point equivalence between equations from the class $\mathcal L|_S$ coincides with that generated by elements of $\mathrm T(\mathcal L|_S)$.
This is why we can also call $\mathrm T(\mathcal L|_S)$ the \emph{equivalence groupoid} of the class $\mathcal L|_S$.

The \emph{usual equivalence group} $G^{\sim}=G^{\sim}(\mathcal L|_{\mathcal S})$ of the class~$\mathcal L|_{\mathcal S}$ is defined in a rigorous way
in terms of admissible transformations.
Namely, any element~$\Phi$ from~$G^{\sim}$ is a point transformation in the space of $(x,u_{(r)},f)$,
which is projectable on the space of $(x,u_{(r')})$ for any $0\le r'\le r$,
so that the projection is the $r'$th order prolongation of $\Phi|_{(x,u)}$, the projection of $\Phi$ on the variables $(x,u)$,
and for any arbitrary elements $f\in\mathcal S$ we have that $\Phi f\in\mathcal S$ and $\Phi|_{(x,u)}\in\mathrm T(f,\Phi f)$.
The admissible transformations of the form $(f,\Phi f,\Phi|_{(x,u)})$,
where $f\in\mathcal S$ and $\Phi\in G^\sim$, are called induced by transformations from the equivalence group $G^{\sim}$.
Needless to say, that in general not all admissible transformations are induced by elements from the equivalence group.
Different generalizations of the notion of usual equivalence groups exist in the literature
\cite{mele96Ay,popo10Ay}. By~$\mathfrak g^\sim$ we denote the algebra associated with the equivalence group~$G^\sim$ and call it the equivalence algebra of the class~$\mathcal L|_{\mathcal S}$.

After clarifying the notion of admissible transformations and equivalence groups, we move on with the description of a common technique in the course of group analysis of differential equations, namely the algebraic method. Within this method one at first should classify inequivalent subalgebras of the corresponding equivalence algebra and then solve the inverse group classification problem for each of the subalgebras obtained. This procedure usually yields most of the cases of extensions and therefore leads to preliminary group classification (see, e.g., \cite{ibra92Ay,ibra91Ay,torr96Ay} for applications of this technique to various classes of differential equations).

The algebraic method rests on the following two propositions~\cite{card11Ay}:

\begin{proposition}
Let $\mathfrak a$ be a subalgebra of the equivalence algebra~$\mathfrak g^\sim$
of the class $\mathcal L|_{\mathcal S}$, $\mathfrak a\subset\mathfrak g^\sim$,
and let $f^0(x,u_{(r)})\in\mathcal S$ be a value of the tuple of arbitrary elements~$f$
for which the algebraic equation $f=f^0(x,u_{(r)})$ is invariant with respect to~$\mathfrak a$.
Then the differential equation $\mathcal L|_{f^0}$ is invariant with respect to the projection of $\mathfrak a$ to the space of variables $(x,u)$.
\end{proposition}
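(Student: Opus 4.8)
The plan is to pass to the extended space with coordinates $(x,u_{(r)},f)$ on which $\mathfrak a$ acts, and to show that, after substituting $f^0$ into the class, the chain-rule cross-terms produced by the dependence $f^0=f^0(x,u_{(r)})$ are exactly the terms supplied by the $f$-components of the equivalence generators. In this way the symmetry criterion for $\mathcal L|_{f^0}$ will reduce to the invariance criterion that $Q$ already satisfies as an element of $\mathfrak g^\sim$. Since projection to $(x,u)$ is linear, it suffices to fix one generator $Q\in\mathfrak a\subset\mathfrak g^\sim$. By the definition of $G^\sim$ the $(x,u_{(r)})$-part of $Q$ is the $r$th prolongation of its projection $P:=Q|_{(x,u)}=\xi^i\p_{x^i}+\varphi^a\p_{u^a}$, so that
\[
Q=P_{(r)}+\eta^s\p_{f^s}=\xi^i\p_{x^i}+\varphi^a\p_{u^a}+\sum_{0<|\alpha|\le r}\varphi^{a\alpha}\p_{u^a_\alpha}+\eta^s\p_{f^s},
\]
where the $\varphi^{a\alpha}$ are the prolongation coefficients~\eqref{EqProlongedOp}. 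I must show that $P$ generates a Lie symmetry of $\mathcal L|_{f^0}$.

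First I would rewrite the hypothesis geometrically. Invariance of the algebraic equation $f^s=f^{0s}(x,u_{(r)})$ under $\mathfrak a$ says that $Q$ is tangent, in the extended space, to this graph over $(x,u_{(r)})$, that is,
\begin{equation}\label{eq:tangency}
\eta^s\big|_{f=f^0}=P_{(r)}f^{0s}(x,u_{(r)}),\qquad s=1,\dots,k.
\end{equation}
Next I would write $\hat\Delta^l:=\tilde\Delta^l(x,u_{(n)},f^0(x,u_{(r)}))$ for the left-hand side of $\mathcal L|_{f^0}$ and compute $P_{(N)}\hat\Delta^l$ with $N=\max(n,r)$. Expanding by the chain rule splits this into the term in which $f^0$ is held frozen, $\big[P_{(N)}\tilde\Delta^l\big]_{f=f^0}$, plus cross-terms of the form $\p_{f^s}\tilde\Delta^l$ multiplied by $P_{(r)}f^{0s}$; by~\eqref{eq:tangency} this latter factor equals $\eta^s|_{f=f^0}$.

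The two contributions therefore reassemble into the full extended generator acting on the unclosed system, namely $P_{(N)}\hat\Delta^l=\big(Q\tilde\Delta^l\big)\big|_{f=f^0}$. Since $Q\in\mathfrak g^\sim$, the infinitesimal invariance criterion for the class in the extended space yields $Q\tilde\Delta^l=0$ on the manifold $\tilde\Delta=0$, the assumption $f^0\in\mathcal S$ guaranteeing that the auxiliary constraints defining $\mathcal S$ hold all along the slice $f=f^0$. On that slice $\{\tilde\Delta=0\}$ coincides with $\{\hat\Delta=0\}$, so $P_{(N)}\hat\Delta^l$ vanishes on $\mathcal L|_{f^0}$; this is precisely the infinitesimal invariance criterion for $P$, and letting $Q$ range over a basis of $\mathfrak a$ proves the proposition.

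I expect the one genuinely delicate step to be the order bookkeeping in the chain-rule expansion, because $r$ may exceed $n$, as the remark above stresses. One must prolong $P$ to order $N=\max(n,r)$ and verify that the \emph{same} coefficients $\varphi^{a\alpha}$ simultaneously feed the derivatives $\p_{u^a_\alpha}\tilde\Delta^l$ (orders $|\alpha|\le n$) and the derivatives $\p_{u^a_\alpha}f^{0s}$ (orders $|\alpha|\le r$); this is legitimate precisely because a prolongation coefficient of fixed order is independent of the truncation order of the prolongation. The remaining subtlety is merely to confirm that the extended-space invariance criterion for $Q$ is available along the whole slice $f=f^0$, which is exactly where the hypothesis $f^0\in\mathcal S$ enters.
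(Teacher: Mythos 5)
Your proof is correct, but there is nothing in the paper to compare it against: the proposition is stated without proof and attributed to~\cite{card10Ay}, so the evaluation has to be on the merits of your argument alone. Your infinitesimal route is sound: invariance of the graph $f=f^0$ is correctly translated into the tangency condition $\eta^s|_{f=f^0}=P_{(r)}f^{0s}$, the chain-rule cross-terms in $P_{(N)}\hat\Delta^l$ do recombine with the frozen-$f$ part into $\bigl(Q\tilde\Delta^l\bigr)\big|_{f=f^0}$, and the two subtleties you flag (compatibility of prolongation coefficients across orders, and $f^0\in\mathcal S$ guaranteeing that the auxiliary system holds identically along the slice) are exactly the right ones. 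The alternative, and shorter, argument runs at the group level: exponentiate $\mathfrak a$ inside $G^\sim$; invariance of the equation $f=f^0$ means every transformation $\Phi$ of the resulting one-parameter subgroups satisfies $\Phi f^0=f^0$, and the definition of the equivalence group then gives $\Phi|_{(x,u)}\in\mathrm T(f^0,\Phi f^0)=\mathrm T(f^0,f^0)$, i.e.\ $\Phi|_{(x,u)}$ is a point symmetry of the single equation $\mathcal L_{f^0}$; differentiating with respect to the group parameter yields the claim. That version uses the action-based definition of $G^\sim$ directly and avoids all jet-space computation, at the cost of invoking local integrability of the vector fields; your version is self-contained at the level of determining equations, at the cost of taking for granted that elements of $\mathfrak g^\sim$ satisfy the extended-space criterion $Q\tilde\Delta^l=0$ on $\{\tilde\Delta=0\}\cap\{S=0\}$ --- an identification which is standard and is precisely how the paper itself computes equivalence algebras in the proof of Theorem~\ref{TheoremOnEquivAlgebra1}, so the two routes ultimately rest on the same foundation.
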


\begin{proposition}
Let $\mathcal S_i$ be the subset of~$\mathcal S$ that consists of all arbitrary elements for which the corresponding algebraic equations are invariant with respect to the same subalgebra of the equivalence algebra~$\mathfrak g^\sim$ and
let $\mathfrak a_i$ be the maximal subalgebra of~$\mathfrak g^\sim$ for which $\mathcal S_i$ satisfies this property, $i=1,2$.
Then the subalgebras~$\mathfrak a_1$ and~$\mathfrak a_2$ are equivalent with respect to the adjoint action of~$G^\sim$ if and only if the subsets~$\mathcal S_1$ and~$\mathcal S_2$ are mapped to each other by transformations from~$G^\sim$.
\end{proposition}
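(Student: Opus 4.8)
The plan is to prove both directions of the equivalence by directly unwinding the definition of $\mathcal S_i$ and the defining property of the maximal subalgebra $\mathfrak a_i$. The key structural fact I would exploit is that, for a given subset $\mathcal S_i\subset\mathcal S$, the subalgebra $\mathfrak a_i$ is precisely the stabilizer of $\mathcal S_i$ under the adjoint-type action of $\mathfrak g^\sim$: it consists of all elements of $\mathfrak g^\sim$ whose associated algebraic equation $f=f^0$ is invariant for every $f^0\in\mathcal S_i$. Once this stabilizer characterization is made explicit, the equivalence of the two subalgebras becomes a statement about compatibility between the $G^\sim$-action on arbitrary-element tuples and the induced action on subalgebras of $\mathfrak g^\sim$.

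First I would set up notation precisely. For an element $\Phi\in G^\sim$ and an arbitrary element $f^0\in\mathcal S$, note that $\Phi$ maps the algebraic equation $f=f^0(x,u_{(r)})$ to the algebraic equation $f=(\Phi f^0)(x,u_{(r)})$, since $\Phi$ acts as a point transformation in the joint space of $(x,u_{(r)},f)$ and $\Phi f^0\in\mathcal S$ by definition of the equivalence group. Correspondingly, infinitesimal generators transform by the adjoint action $\Ad(\Phi)$. The crucial compatibility relation I would establish is that a subalgebra $\mathfrak b\subset\mathfrak g^\sim$ leaves the equation $f=f^0$ invariant if and only if $\Ad(\Phi)\mathfrak b$ leaves $f=\Phi f^0$ invariant; this follows because invariance of an algebraic equation is preserved under the simultaneous transport of both the equation and the vector fields acting on it.

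For the forward direction, suppose $\mathfrak a_1$ and $\mathfrak a_2$ are $G^\sim$-equivalent, say $\mathfrak a_2=\Ad(\Phi)\mathfrak a_1$ for some $\Phi\in G^\sim$. Using the compatibility relation, $f^0\in\mathcal S_1$ (i.e.\ its equation is $\mathfrak a_1$-invariant) if and only if $\Phi f^0$ has $\mathfrak a_2$-invariant equation, so $\Phi$ maps $\mathcal S_1$ into $\mathcal S_2$; applying $\Phi^{-1}$ gives the reverse inclusion, hence $\Phi\mathcal S_1=\mathcal S_2$. For the converse, suppose $\Phi\mathcal S_1=\mathcal S_2$. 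Then $\Ad(\Phi)\mathfrak a_1$ is a subalgebra of $\mathfrak g^\sim$ that leaves every equation $f=\tilde f^0$ with $\tilde f^0\in\mathcal S_2$ invariant, so by maximality of $\mathfrak a_2$ we get $\Ad(\Phi)\mathfrak a_1\subset\mathfrak a_2$; the symmetric argument applied to $\Phi^{-1}$ yields $\Ad(\Phi^{-1})\mathfrak a_2\subset\mathfrak a_1$, and combining the two inclusions forces $\Ad(\Phi)\mathfrak a_1=\mathfrak a_2$.

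The main obstacle I expect is the careful handling of the maximality clause in the converse direction. It is not enough that $\Ad(\Phi)\mathfrak a_1$ leaves the subset $\mathcal S_2$ invariant in the weak sense; one must verify that $\mathcal S_2$ is exactly the full set of arbitrary elements whose equations are $\mathfrak a_2$-invariant, so that maximality genuinely applies to both $\mathfrak a_2$ and the transported algebra $\Ad(\Phi)\mathfrak a_1$. This requires being precise that the map $\mathcal S_i\mapsto\mathfrak a_i$ and its partial inverse $\mathfrak a\mapsto\{f^0:\ \text{$f=f^0$ is $\mathfrak a$-invariant}\}$ form a Galois-type correspondence, so that the two inclusions obtained really close up to an equality rather than a proper containment. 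I would therefore spend most of the technical effort confirming that the stabilizer construction is idempotent in the relevant sense, which is what makes the double-inclusion argument tight.
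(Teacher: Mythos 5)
The paper never actually proves this proposition: it is stated as imported from~\cite{card10Ay} (a reference listed as ``in preparation''), so there is no in-paper argument to compare yours against. Assessed on its own terms, your proof is correct and supplies exactly what the statement needs. Your reading of the maximality clause is the right one: since $\mathcal S_i$ is by hypothesis the full invariant set $\mathcal S(\mathfrak b_i)=\{f^0\in\mathcal S\colon f=f^0 \text{ is invariant with respect to } \mathfrak b_i\}$ of some subalgebra $\mathfrak b_i$, the common stabilizer $\mathrm{St}(\mathcal S_i)$ of all the equations $f=f^0$, $f^0\in\mathcal S_i$, again has invariant set exactly $\mathcal S_i$ (it contains $\mathcal S_i$ by the definition of stabilizer, and its invariant set is contained in $\mathcal S(\mathfrak b_i)=\mathcal S_i$ because $\mathfrak b_i\subseteq\mathrm{St}(\mathcal S_i)$); hence $\mathrm{St}(\mathcal S_i)$ is the unique maximum, i.e.\ $\mathfrak a_i=\mathrm{St}(\mathcal S_i)$ and $\mathcal S(\mathfrak a_i)=\mathcal S_i$. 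This is precisely the ``idempotence'' you flag, and it is what makes both directions close up: the forward direction follows from the equivariance $\mathcal S(\Ad(\Phi)\mathfrak b)=\Phi\,\mathcal S(\mathfrak b)$ (a vector field is tangent to the graph of $f^0$ if and only if its pushforward under $\Phi$ is tangent to the graph of $\Phi f^0$, and pushforward by elements of $G^\sim$ is the adjoint action on $\mathfrak g^\sim$), while the converse follows from the double inclusion $\Ad(\Phi)\mathfrak a_1\subseteq\mathfrak a_2$ and $\Ad(\Phi^{-1})\mathfrak a_2\subseteq\mathfrak a_1$, each an application of the stabilizer characterization. The only step you leave implicit that deserves a sentence in a written-out version is that $\mathrm{St}(\mathcal S_i)$ is genuinely a subalgebra and not merely a subset: the vector fields under which a fixed manifold $f=f^0$ is invariant are closed under linear combination and Lie bracket, and the intersection of such stabilizers over all $f^0\in\mathcal S_i$ inherits this closure, which is what legitimizes calling $\mathfrak a_i$ a ``maximal subalgebra'' at all.
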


The result of preliminary group classification is a list of inequivalent (with respect to the equivalence group) members $\mathcal L_f$ of the class $\mathcal L|_{\mathcal S}$, admitting symmetry extension of the kernel of symmetry algebras using subalgebras of the equivalence algebra.

Although the algebraic method is a straightforward tool to derive cases of symmetry extensions for classes of differential equations with arbitrary elements, there remains the important question when it gives complete group classification, i.e., preliminary and complete group classification coincide. This question is of importance also for the problem of parameterization, as only complete group classification will lead to an exhaustive description of all possible parameterization schemes feasible for some class of differential equations.
The answer is that the class under consideration should be \emph{weakly normalized in infinitesimal sense}, i.e., it should satisfy the following property: The span of maximal Lie invariance algebras of all equations from the class is contained in the projection of the corresponding equivalence algebra to the space of independent and dependent variables,
\[\langle\mathfrak g_f\mid f\in \mathcal S\rangle\subset\pr\mathfrak g^\sim.\]
At the same time, it is better to use a stronger notion of normalization introduced in~\cite{popo10Ay}.

\begin{definition}\label{DefinitionOfNormalization}
The class $\mathcal L|_{\mathcal S}$ is \emph{normalized} if 
its equivalence groupoid is generated by its equivalence group, i.e.\
$\forall\, (f,\tilde f, \varphi)\in\mathrm T(\mathcal L|_{\mathcal S})$ $\exists\Phi\in G^\sim$: $\tilde f = \Phi f$ and $\varphi=\Phi|_{(x,u)}$.
\end{definition}

The normalization of~$\mathcal L|_{\mathcal S}$ in the sense of Definition~\ref{DefinitionOfNormalization} additionally implies that
the group classification of equations from this class up to $G^\sim$-equivalence coincides with the group classification using the general point transformation equivalence. Due to this fact we have no additional equivalences between cases obtained under the classification up to~$G^\sim$-equivalence. As a result, solving the group classification problem for normalized classes of differential equations is especially convenient and effective.

In turn, depending on normalization properties of the given class (or their lacking), different strategies of group classification should be applied~\cite{popo10Ay}. For a normalized class, the group classification problem is reduced, within the infinitesimal approach, to classification of subalgebras of its equivalence algebra \cite{basa01Ay,lahn06Ay,popo10Ay,zhda99Ay}. A class that is not normalized can eventually be embedded into a normalized class which is not necessarily minimal among the normalized superclasses~\cite{popo10Ay,popo10By}. One more way to treat a non-normalized class is to partition it into a family of normalized subclasses and to subsequently classify each subclass separately~\cite{popo10Ay,vane09Ay}. If a partition into normalized subclasses is difficult to construct due to the complicated structure of
the set of admissible transformations, conditional equivalence groups and additional equivalence transformations may be involved in
the group classification~\cite{ivan10Ay,popo04Ay,vane09Ay}. In the case when the class is parameterized by constant arbitrary elements or arbitrary elements depending only on one or two arguments, one can apply the direct method of group classification based on compatibility analysis and integration of the determining equations for Lie symmetries up to~$G^\sim$-equivalence~\cite{akha91Ay,niki01Ay,ovsi82Ay}. Recall that these determining equations involve both coefficients of a Lie symmetry operator of a system~$\mathcal L_f$ and the corresponding tuple of arbitrary elements~$f$ and follow from the infinitesimal invariance criterion~\cite{olve86Ay,ovsi82Ay},
\[
 Q_{(r')}\tilde\Delta^l(x,\bar u_{(n)},f(x,\bar u_{(r)}))|_{\mathcal L_f}^{ } = 0,\quad l=1,\dots,m.
\]
Here $r'=\max\{n,r\}$, the prolongation $Q_{(r')}$ of~$Q$ is defined by~\eqref{EqProlongedOp} and the symbol $|_{\mathcal L_f}$ means that above relation holds on solutions of the system~$\mathcal L_f$.

\section{\texorpdfstring{Symmetry preserving parameterizations\\ for vorticity equation}{Symmetry preserving parameterizations for vorticity equation}}\label{sec:vorticity}

The inviscid barotropic vorticity equations in Cartesian coordinates reads
\begin{equation}\label{eq:vort}
    \zeta_t + \{\psi,\zeta\} = 0
\end{equation}
where $\{a,b\}=a_xb_y-a_yb_x$ denotes the usual Poisson bracket with respect the variables~$x$ and~$y$. The vorticity $\zeta$ and the stream function $\psi$ are related through the Laplacian, i.e.\ $\zeta = \nn^2\psi$. The two-dimensional wind field $\vv=(u,v,0)^{\mathrm T}$ is reconstructed from the stream function via the relation $\vv = \mathbf{k}\times \nn \psi$, where $\mathbf{k}$ is the vertical unit vector.

The maximal Lie invariance algebra~$\mathfrak g_0$ of the equation~\eqref{eq:vort} is generated by the operators
\begin{align}\label{eq:algebravort}
\begin{split}
   &\DDD_1 = t\p_t - \psi\p_{\psi},\quad  \p_t, \quad \DDD_2=x\p_x+y\p_y+2\psi\p_{\psi}, \\
   &\JJ = -y\p_x + x\p_y,\quad \JJ^t=-ty\p_x+tx\p_y+\tfrac12(x^2+y^2)\p_{\psi}, \\
   &\XX(\gamma^1) = \gamma^1(t)\p_x - \gamma^1_t(t)y\p_{\psi},\quad \YY(\gamma^2) = \gamma^2(t)\p_y+\gamma^2_t(t)x\p_{\psi},\\
   &\ZZ(\chi) = \chi(t)\p_{\psi},
\end{split}
\end{align}
where $\gamma^1$, $\gamma^2$ and $\chi$ run through the set of smooth functions of~$t$.
See, e.g.~\cite{andr98Ay,bihl09Ay} for further discussions.

Reynolds averaging the above equation leads to
\begin{equation}\label{eq:vorta}
    \bar \zeta_t + \{\bar \psi,\bar \zeta\} = \nn\cdot(\overline{\vv'\zeta'}).
\end{equation}
The term $\overline{\vv'\zeta'}=(\overline{u'\zeta'},\overline{v'\zeta'},0)^{\mathrm T}$ is the horizontal eddy vorticity flux. Its divergence provides a source term for the averaged vorticity equation.  The presence of this source term destroys several of the properties of~\eqref{eq:vort}, such as, e.g., possessing conservation laws. In this paper we aim to find parameterizations of this flux term, which admit certain symmetries.

A simple choice for a parameterization of the eddy vorticity flux is given by the \textit{down-gradient ansatz}
\[
    \overline{\vv'\zeta'} = -K\nn\overline{\zeta},
\]
where the eddy viscosity coefficient $K$ still needs to be specified. Physically, this ansatz accounts for the necessity of the vorticity flux to be directed down-scale, as enstrophy (integrated squared vorticity) is continuously dissipated at small scales. Moreover, this ansatz will lead to a uniform distribution of the mean vorticity field, provided there is no external forcing that counteracts this tendency~\cite{mars84Ay}. The simplest form of the parameter $K$ is apparently $K=K(x,y)$, i.e.\ the eddy viscosity coefficient is only a function of space. More advanced ansatzes for $K$ assume dependence on $\overline{\zeta'^2}$, which is the eddy enstrophy~\cite{mars84Ay} (see also the discussion in the recent paper~\cite{mars10Ay}). This way, the strength of the eddy vorticity flux depends on the intensity of two-dimensional turbulence, which gives a more realistic model for the behavior of the fluid. There also exist a number of other parameterization schemes that can be applied to the vorticity equation, such as methods based on statistical mechanics~\cite{krai76Ay} or the anticipated potential vorticity method~\cite{sado85Ay,vall88Ay}.

\looseness=1
In the present framework, we exclusively focus on first order closure schemes. This is why we are only able to parameterize the eddy vorticity flux using the independent and dependent variables, as well as all derivatives of the dependent variables. This obviously excludes the more sophisticated and recent parameterization ansatzes of geophysical fluid dynamics from the present study. On the other hand, the basic method of invariant parameterization can already be demonstrated for this rather simple model. Indeed, symmetries of the vorticity equation employing the down-gradient ansatz or related parameterizations are investigated below using both inverse and direct group classification. Physically more advanced examples for parameterizations can be constructed following the methods outlined in Section~\ref{sec:idea} and exemplified subsequently.

\subsection{Parameterization via inverse group classification}\label{sec:ParameterizationInverseGroupClassification}

This is the technique by~\cite{ober97Ay,raza07By} applied to the inviscid vorticity equation. In view of the description of Section~\ref{sec:ParameterizationViaInverseGroupClassification} this approach consists of singling out subgroups (subalgebras) of the maximal Lie invariance group (algebra) of the vorticity equation and computation of the associated differential invariants (via a basis of differential invariants and operators of invariant differentiation). These differential invariants can then be used to construct different parameterizations of the eddy vorticity flux.

It is important to note that singling out subgroups of the maximal Lie invariance group of the vorticity equation is a meteorological way of group classification. This is why it is necessary to have a basic understanding of the processes to be parameterized before the selection of a particular group is done (otherwise, we would have to face the problem of how to combine these invariants to physically meaningful parameterizations). For the vorticity equation, we demonstrate the basic mechanisms of parameterizations via inverse group classification by singling out subgroups that allow to include the above down-gradient ansatz. This choice is of course not unique as there exist various other possibilities for parameterizations of the eddy vorticity flux. However, this choice allows us to demonstrate several of the issues of parameterization via inverse group classification.

\medskip

\noindent\textbf{Invariance under the whole Lie symmetry group.}
To present differential invariants of the whole Lie invariance algebra~$\mathfrak g_0$,
we use the notation
\begin{gather*}
\zeta=\psi_{xx}+\psi_{yy},\quad
\theta=\psi_{xx}-\psi_{yy},\quad
\eta=2\psi_{xy},\\
\sigma=\psi_{xxx}-3\psi_{xyy},\quad
\varsigma=3\psi_{xxy}-\psi_{yyy},\quad
V=\mathrm D_t+\psi_x\mathrm D_y-\psi_y\mathrm D_x.
\end{gather*}
The algebra~$\mathfrak g_0$ possesses no differential invariants up to order two.
At the same time, it has the singular second order manifold determined by the equations $\psi_{xx}=\psi_{yy}$ and $\psi_{xy}=0$,
which is not essential for our consideration.
A generating set~$\mathcal I_0$ of functionally independent differential invariants of~$\mathfrak g_0$ consists of the third order differential invariants
\begin{gather*}
\frac{V\zeta}{\theta^2+\eta^2},\quad
\frac{\theta V\theta+\eta V\eta}{(\theta^2+\eta^2)^{3/2}},\quad
\frac{(V\theta+2\eta\zeta)^2+(V\eta-2\theta\zeta)^2}{(\theta^2+\eta^2)^2},\\
\frac{\sigma^2+\varsigma^2}{\zeta_x{}^2+\zeta_y{}^2},\quad
\frac{\theta(\zeta_x{}^2-\zeta_y{}^2)+2\eta\zeta_x\zeta_y}{(\theta^2+\eta^2)^{1/2}(\zeta_x{}^2+\zeta_y{}^2)},\quad
\frac{\sigma\zeta_x(\zeta_x{}^2-3\zeta_y{}^2)+\varsigma\zeta_y(3\zeta_x{}^2-\zeta_y{}^2)}{(\zeta_x{}^2+\zeta_y{}^2)^{3/2}}.
\end{gather*}
A complete set~$\mathcal O_0$ of independent operators of invariant differentiation for this algebra is formed by the operators
\[
(\theta^2+\eta^2)^{-1/2}V,\quad
(\zeta_x{}^2+\zeta_y{}^2)^{-1/2}(\zeta_x\mathrm D_x+\zeta_y\mathrm D_y),\quad
(\zeta_x{}^2+\zeta_y{}^2)^{-1/2}(\zeta_x\mathrm D_y-\zeta_y\mathrm D_x).
\]

The computation of~$\mathcal I_0$ and~$\mathcal O_0$ is cumbersome and will be presented elsewhere, jointly with the selection of a basis (i.e., minimal generating set) of differential invariants.
At the same time, the result of the computation can be checked in a rather direct and simple way.
Indeed, the cardinality of~$\mathcal O_0$ equals three.
The elements of~$\mathcal O_0$ are linearly independent over the ring of differential invariants of~$\mathfrak g_0$ and
commute with the infinite prolongations of all vector fields from the generating set~\eqref{eq:algebravort} of~$\mathfrak g_0$.
Since each element~$I$ of~$\mathcal I_0$ satisfies the condition $Q_{(3)}I=0$, where the operator~$Q$ runs through the operators~\eqref{eq:algebravort}, it is a differential invariant of~$\mathfrak g_0$.
The invariants belonging to~$\mathcal I_0$ are functionally independent.
Moreover, for any fixed order~$r$ an $r$th order universal basis of differential invariants of~$\mathfrak g_0$ can be constructed via acting by  operators from~$\mathcal O_0$ on invariants from~$\mathcal I_0$. We only sketch the proof of the last assertion.
The cardinality of any $r$th order universal basis of differential invariants of~$\mathfrak g_0$, where $r\geqslant4$, equals
the difference between the dimension of the jet space~$J^r$ and the rank of the $r$th prolongation of~$\mathfrak g_0$,
\[
N=3+\binom{r+3}{r}-(3r+8). 
\]
Acting on elements of~$\mathcal I_0$ by operators from~$\mathcal O_0$ $k-3$ times, $3\leqslant k\leqslant r$,
we obtain a set of $k$-order differential invariants which is of maximal rank with respect to the $k$-order derivatives
involving at least two differentiations with respect to space variables.
Choosing, for each~$k$, a subset of invariants associated with a nonzero $k$-order minor in the corresponding Jacobi matrix
and uniting such subsets for $k\leqslant r$, we construct exactly $N$ functionally independent differential invariants
of order not greater than~$r$, which hence form an $r$th order universal basis of differential invariants of~$\mathfrak g_0$.

The above case where invariance of the parameterization under the whole symmetry group of the vorticity equation is desired can be neglected for physical reasons. This is since it is impossible to realized, e.g.\ the down-gradient ansatz within this framework. It can easily be checked that the corresponding vorticity equation with parameterized eddy vorticity flux only admits one scaling operator for any physically meaningful ansatz for $K$. In contrast to the example of the Navier--Stokes equations discussed in~\cite{ober97Ay}, the vorticity equation hence does not allow physical parameterizations leading to a closed model invariant under the same symmetry group as the original vorticity equation. This is why it is beneficial to single out several subgroups of the maximal Lie invariance group and consider the invariant parameterization problem only with respect to these subgroups.

\medskip
\noindent\textbf{Explicit spatial dependency.} If the two-dimensional fluid is anisotropic and inhomogeneous the only subalgebra of~\eqref{eq:algebravort} that can be admitted is spanned by the operators
\[
    \p_t, \quad \ZZ(\chi) = \chi(t)\p_{\psi}.
\]
For this subalgebra, a basis of invariants is formed by $x$, $y$, $\psi_x$ and $\psi_y$.
Independent operators of invariant differentiation are exhausted by $\mathrm D_t$, $\mathrm D_x$ and $\mathrm D_y$.
If we express the right hand side of~\eqref{eq:vorta} in terms of differential invariants of the above subalgebra, a possible representation reads
\[
     \zeta_t + \{ \psi, \zeta\} = K(x,y)\nn^2 \zeta.
\]
Hence we assembled our parameterization using the (differential) invariants $x$, $y$, $\mathrm D_x^3\psi_x=\psi_{xxxx}$, $\mathrm D_y^2\mathrm D_x\psi_x= \psi_{xxyy}$ and $\mathrm D_y^3\psi_y=\psi_{yyyy}$. This boils down to the usual gradient ansatz for the eddy flux term, where the eddy viscosity $K$ explicitly depends on the position in the space. Note, however, that this ansatz is only one possibility which is feasible within this class of models.

\medskip

\noindent\textbf{Rotationally invariant fluid.} In case the two-dimensional fluid is isotropic, the resulting parameterized system should also admit rotations. Hence, we seek for differential invariants of the subalgebra~$\mathfrak j$ spanned by the operators
\[
    \p_t, \quad \JJ = x\p_y-y\p_x, \quad \JJ^t = tx\p_y-ty\p_x + \tfrac12(x^2+y^2)\p_{\psi},\quad \ZZ(\chi) = \chi(t)\p_{\psi}.
\]
A complete set of independent operators of invariant differentiation for~$\mathfrak j$ consists of
\[
\mathrm D_t+\psi_x\mathrm D_y-\psi_y\mathrm D_x,\quad
x\mathrm D_x+y\mathrm D_y,\quad
-y\mathrm D_x+x\mathrm D_y
\]
and a generating set of functionally independent differential invariants is formed by
\begin{gather*}
\rho=\tfrac12(x^2+y^2),\quad x\psi_y-y\psi_x,\quad (x^2+y^2)(\psi_{xx}+\psi_{yy})-2(x\psi_x+y\psi_y),\\
(x^2+y^2)(x\psi_{tx}+y\psi_{ty})+(x\psi_x+y\psi_y)(x y(\psi_{yy}-\psi_{xx})+(x^2-y^2)\psi_{xy}+x\psi_y-y\psi_x).
\end{gather*}
In the modified polar coordinates $(\rho,\varphi)$ with $\varphi=\arctan y/x$, these sets have, after an additional rearrangement, simpler representations
$\mathcal O=\{\tilde{\mathrm D}_t,\mathrm D_\rho,\mathrm D_\varphi\}$ and
$\mathcal I=\{I^\alpha,\,\alpha=0,\dots,3\}$, respectively,
where $\tilde{\mathrm D}_t=\mathrm D_t+\psi_\rho\mathrm D_\varphi$ and
\[
I^0=\rho,\quad I^1=\psi_\varphi,\quad I^2=\psi_{\rho\rho},\quad I^3=\psi_{t\rho}+\psi_\rho\psi_{\rho\varphi}.
\]

Any element of~$\mathcal O$ indeed is an invariant differentiation operator for~$\mathfrak j$
since it commutes with the infinite prolongation of every vector field from~$\mathfrak j$.
The fact that $I^\alpha$, $\alpha=0,\dots,3$, are differential invariants of~$\mathfrak j$ is also checked in a rather direct way,
by the substitution to the condition $Q_{(2)}I=0$, where the operator~$Q$ runs through~$\mathfrak j$.
These invariants obviously are functionally independent.

The most difficult part is to prove that for any fixed order~$r$ we can construct an $r$th order universal basis of differential invariants of~$\mathfrak j$ by invariant differentiations of $I^\alpha$, $\alpha=0,\dots,3$.
The number of elements in any $r$th order universal basis of differential invariants of~$\mathfrak j$, where $r\geqslant1$, equals
\[
3+\binom{r+3}{r}-(r+4)=\frac r6(r+1)(r+5)
\]
(the dimension of the jet space~$J^r$ minus the rank of the $r$th prolongation of~$\mathfrak j$).
The commutation relations between the operators of invariant differentiation are
\[
[\mathrm D_\rho,\mathrm D_\varphi]=0,\quad
[\mathrm D_\rho,\tilde{\mathrm D}_t]=\psi_{\rho\rho}\mathrm D_\varphi,\quad
[\mathrm D_\varphi,\tilde{\mathrm D}_t]=\psi_{\rho\varphi}\mathrm D_\varphi.
\]
The elements $I^1$, $I^2$ and $I^3$ of~$\mathcal I$ can be represented in the form
$I^1=\mathrm D_\varphi\psi$, $I^2=\mathrm D_\rho^2\psi$ and $\smash{I^3=\tilde{\mathrm D}_t\mathrm D_\rho\psi}$.
Hence, acting by the operators of invariant differentiation on elements of~$\mathcal I$, we can construct
\[
1+\binom{r-1+3}{r-1}+\binom{r-2+2}{r-2}+\binom{r-2+1}{r-2}=\frac r6(r+1)(r+5)
\]
functionally independent invariants of order not greater than~$r$
(the zeroth order invariant~$\rho$ plus
acting on~$I^1$ by the operators $\tilde{\mathrm D}_t^{\alpha_1}\mathrm D_\rho^{\alpha_2}\mathrm D_\varphi^{\alpha_3}$, where $\alpha_1+\alpha_2+\alpha_3\leqslant r-1$, plus
acting on~$I^2$ by $\mathrm D_\rho$ and then $\tilde{\mathrm D}_t$ at most $r-2$ times in total and plus
acting on~$I^3$ by $\mathrm D_\rho$ at most $r-2$ times).
As the above numbers coincide, the proof is completed.

For the set~$\mathcal O$ of operators of invariant differentiation,
the generating set~$\mathcal I$ of differential invariants is not minimal.
On the domain singled out in the corresponding infinite jet space by the condition $\psi_{\varphi\varphi}\ne0$
we have
\[
I^2=\frac{[\mathrm D_\rho,\tilde{\mathrm D}_t]I^1}{\mathrm D_\varphi I^1}
\]
and hence the invariant~$I^2$ can be excluded from the generating set of invariants.
At the same time, the remaining invariant $I^0$, $I^1$ and $I^3$ form a basis (i.e., minimal generating set) of invariants with respect to the set~$\mathcal O$ of operators of invariant differentiation.
Indeed, any function of~$\rho$ and invariants obtained from~$I^1$ by invariant differentiations is represented as
a function of~$\rho$, $\tilde{\mathrm D}_t^{\alpha_1}\mathrm D_\rho^{\alpha_2}\mathrm D_\varphi^{\alpha_3}\psi_\varphi$ and $\tilde{\mathrm D}_t^{\beta_1}\mathrm D_\rho^{\beta_2}\psi_{\rho\rho}$,
where $(\alpha_1,\alpha_2,\alpha_3)$ and $(\beta_1,\beta_2)$ run through certain subsets of $\mathbb N_0^3$ and $\mathbb N_0^2$, respectively,
and hence this function cannot coincide with $I^3$.
Analogously, any function of~$\rho$ and invariants constructed from~$I^3$ by invariant differentiations is represented as
a function of~$\rho$ and $\tilde{\mathrm D}_t^{\alpha_1}\mathrm D_\rho^{\alpha_2}\mathrm D_\varphi^{\alpha_3}\psi_\rho$,
where $(\alpha_1,\alpha_2,\alpha_3)$ runs through certain subset of $\mathbb N_0^3$,
and hence this function cannot coincide with $I^1$.

As an example, the parameterizations of the form
\[
     \zeta_t + \{\psi, \zeta\} = K(\sqrt{x^2+y^2}\,)\nn^2\zeta
\]
are invariant with respect to~$\mathfrak j$ because $\rho$, $\zeta_t + \{\psi, \zeta\}$ and $\nn^2 \zeta$ are differential invariants of~$\mathfrak j$.

\medskip

In the same fashion it would be possible to derive classes of parameterizations that preserve other subalgebras of $\mathfrak g_0$, e.g.\ including (generalized) Galilean symmetry or a scaling symmetry, but we do not derive them in this paper.

\subsection{Equivalence algebras of classes of generalized vorticity equations}\label{sec:EquivAlgebra}

In order to demonstrate different possible techniques, we present the details of the calculation of the usual equivalence algebra~$\mathfrak g^\sim_1$ for the class of equations
\begin{equation}\label{eq:vortdirect}
    \zeta_t + \{\psi,\zeta\} = \mathrm D_i f^i(t,x,y,\zeta_x,\zeta_y) =  f^i_i + f^i_{\zeta_j}\zeta_{ij}, \quad
    \zeta := \psi_{ii},
\end{equation}
where for convenience we introduce another notation for the independent variables, $t=z_0$, $x=z_1$ and $y=z_2$, and omit bars over the dependent variables.
Throughout the section the indices $i$, $j$ and~$k$ range from $1$ to $2$, while the indices~$\kappa$, $\lambda$, $\mu$ and $\nu$ run from $0$ to $2$.
The summation over repeated indices is understood.
A numerical subscript of a function denotes the differentiation with respect to the corresponding variable~$z_\mu$.

In fact, the equivalence algebra of class~\eqref{eq:vortdirect} can be easily obtained from the much more general results on admissible transformations, presented in Section~\ref{sec:AdmTrans}.
At the same time, calculations using the direct method applied for finding admissible transformations are too complicated and lead to solving nonlinear overdetermined systems of partial differential equations.
This is why the infinitesimal approach is wider applied and realized within symbolic calculation systems.
The usage of the infinitesimal approach for the construction of the equivalence algebra of~\eqref{eq:vortdirect} has specific features richly deserving to be demonstrated here.

\begin{theorem}\label{TheoremOnEquivAlgebra1}
The equivalence algebra~$\mathfrak g^\sim_1$ of class~\eqref{eq:vortdirect} is generated by the operators
\begin{align}\label{eq:equivalgebravort}
\begin{split}
&\tilde\DDD_1 = t\p_t-\psi\p_{\psi}-\zeta_x\p_{\zeta_x}-\zeta_y\p_{\zeta_y}-2f^1\p_{f^1}-2f^2\p_{f^2},\qquad \p_t,\\[.5ex]
&\tilde\DDD_2= x\p_x + y\p_y + 2\psi\p_{\psi} -\zeta_x\p_{\zeta_x}-\zeta_y\p_{\zeta_y} + f^1\p_{f^1} + f^2\p_{f^2}, \\
&\tilde\JJ(\beta)=\beta x\p_y-\beta y\p_x+\frac{\beta_t}{2}(x^2+y^2)\p_\psi+\beta(\zeta_x\p_{\zeta_y}-\zeta_y\p_{\zeta_x})\\
&\qquad\quad\ {}+(\beta_{tt}x-\beta f^2)\p_{f^1}+(\beta_{tt}y+\beta f^1)\p_{f^2}, \\[.5ex]
&\tilde\XX(\gamma^1)=\gamma^1\p_x-\gamma_t^1y\p_\psi,\qquad
 \tilde\YY(\gamma^2)=\gamma^2\p_y+\gamma^2_tx\p_\psi,\\
&\tilde{\mathcal R}(\sigma) = \frac\sigma2(x^2+y^2)(\p_\psi+\zeta_y\p_{f^1}-\zeta_x\p_{f^2})+\sigma_tx\p_{f^1}+\sigma_ty\p_{f^2},\\
&\tilde{\mathcal H}(\delta)=\delta(\p_\psi +\zeta_y\p_{f^1}-\zeta_x\p_{f^2}),\qquad
 \tilde{\mathcal G}(\rho)=\rho_x\p_{f^2}-\rho_y\p_{f^1},\qquad
 \tilde{\mathcal Z}(\chi)=\chi\p_\psi,
\end{split}
\end{align}
where $\beta$, $\gamma^i$, $\sigma$ and $\chi$ are arbitrary smooth functions of $t$ solely, $\delta=\delta(t,x,y)$ is an arbitrary solution of the Laplace equation $\delta_{xx}+\delta_{yy}=0$ and $\rho=\rho(t,x,y)$ is an arbitrary smooth function of its arguments.
\end{theorem}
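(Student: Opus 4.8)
The plan is to determine $\mathfrak g^\sim_1$ by the infinitesimal method of the previous section, treating $f^1$ and $f^2$ as additional dependent variables subject to the equation~\eqref{eq:vortdirect} together with the defining relation $\zeta=\psi_{ii}$. First I would write a general infinitesimal generator of the equivalence algebra as
\[
Q=\tau\p_t+\xi^1\p_x+\xi^2\p_y+\eta\p_\psi+(\text{prolongation terms})+\theta^1\p_{f^1}+\theta^2\p_{f^2}.
\]
Because every element of $G^\sim$ is required to be projectable onto the space of $(x,u)$, the coefficients $\tau,\xi^1,\xi^2,\eta$ may depend only on $(t,x,y,\psi)$, whereas $\theta^1,\theta^2$ are allowed to depend on the arguments $(t,x,y,\zeta_x,\zeta_y)$ of the arbitrary elements and on $f^1,f^2$ themselves. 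The coefficients attached to the $\psi$-derivatives occurring in the equation, in particular those in the $\zeta_x,\zeta_y$ and $\zeta_{ij}$ directions, are then fixed by the prolongation formula~\eqref{EqProlongedOp} applied to $\tau,\xi^i,\eta$.

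Next I would prolong $Q$ to the order at which it acts on~\eqref{eq:vortdirect}. The left-hand side $\zeta_t+\{\psi,\zeta\}$ carries third-order derivatives of $\psi$ and the right-hand side $f^i_i+f^i_{\zeta_j}\zeta_{ij}$ carries fourth-order derivatives, so $Q$ must be prolonged to fourth order in the $\psi$-jet while the dependence of the $f^i$ on $\zeta_x,\zeta_y$ is carried along. Writing $\Delta$ for the difference of the two sides of~\eqref{eq:vortdirect} and imposing the infinitesimal invariance criterion $Q_{(r')}\Delta=0$ on the manifold $\Delta=0$ produces a single determining relation that is polynomial in the jet coordinates and in the partial derivatives of $f^1,f^2$.

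The core of the argument is the splitting of this relation. Since the arguments $(t,x,y,\zeta_x,\zeta_y)$ of $f$ and the remaining jet coordinates — the higher and mixed derivatives of $\psi$ and the derivatives of $f^i$ with respect to its arguments — are functionally independent, I would collect the coefficients of the independent monomials in these quantities and equate each to zero. This decouples the determining relation into a hierarchy of linear equations. The equations pure in the spatial prolongation coefficients force $\tau$ to depend on $t$ alone and force $\xi^1,\xi^2$ to be at most linear in $(x,y)$ with $t$-dependent coefficients (a rotation with coefficient $\beta(t)$, the translations $\gamma^1(t),\gamma^2(t)$, and the common dilatation), while $\eta$ is driven into the shape affine in $\psi$ plus an additive function $\eta^0(t,x,y)$; the mixed equations coupling $f$ with $\zeta_x,\zeta_y,\zeta_{ij}$ then determine $\theta^1,\theta^2$ and constrain $\eta^0$.

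The step I expect to be the main obstacle is precisely this coupling between the transformation of the arbitrary elements and the prolonged action on $\zeta$ and its derivatives, since it is there that the side conditions on the free functions are born. The change of the left-hand side produced by the additive summand $\eta^0$ of $\eta$ contains the terms $(\eta^0_{xx}+\eta^0_{yy})_t$, $\{\eta^0,\zeta\}$ and $\{\psi,\eta^0_{xx}+\eta^0_{yy}\}$; the last of these is linear in $\psi_x,\psi_y$ and cannot be matched by any admissible $f$-shift $\mathrm{D}_i\theta^i$ unless $\eta^0_{xx}+\eta^0_{yy}$ is a function of $t$ alone. This forces $\eta^0$ to be harmonic up to a time-dependent quadratic term: the harmonic solutions give $\tilde{\mathcal H}(\delta)$ with $\delta_{xx}+\delta_{yy}=0$ (with the linear and constant special cases recovered as $\tilde\XX,\tilde\YY,\tilde{\mathcal Z}$), while a particular solution with $\eta^0_{xx}+\eta^0_{yy}=2\sigma(t)$ yields $\tilde{\mathcal R}(\sigma)$, and the freedom to alter $f^1,f^2$ by a divergence-free field without touching $\mathrm{D}_i f^i$ produces the gauge generator $\tilde{\mathcal G}(\rho)$ with $\rho$ unrestricted. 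Once all the coefficient equations are solved and the free functions $\beta,\gamma^i,\sigma,\chi,\delta,\rho$ are identified, assembling the independent pieces of the general solution reproduces exactly the generators listed in~\eqref{eq:equivalgebravort}.
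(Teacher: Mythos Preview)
Your approach is the same infinitesimal method the paper uses, and your outline of the splitting together with the argument isolating the harmonic part of $\eta^0$ is correct in spirit. One point you gloss over, however, is the role of the \emph{auxiliary system} $f^i_\psi=f^i_{\psi_\mu}=f^i_{\psi_{\mu\nu}}=f^i_{\psi_{\lambda\mu\nu}}=f^i_{\zeta_0}=0$ that encodes the prescribed dependence of the arbitrary elements. In the paper's setup the coefficients $\varphi^i$ of $\p_{f^i}$ are \emph{a priori} allowed to depend on all of $\mathfrak J^{(3)}$ and the $f^j$, and it is the infinitesimal invariance of these auxiliary equations---not of the main equation---that yields $\xi^\mu_\psi=0$, $\eta_{\psi\psi}=0$, $\xi^0_j=0$ and the conformal relations $\xi^1_1=\xi^2_2$, $\xi^1_2+\xi^2_1=0$ (through the requirement that the prolonged coefficient $\theta^k$ of $\p_{\zeta_k}$ be independent of the third-order derivatives $\psi_{\lambda\mu\nu}$). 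You have effectively hidden this step by postulating from the outset that $\theta^1,\theta^2$ depend only on $(t,x,y,\zeta_x,\zeta_y,f^1,f^2)$, but that postulate is not free: its consistency with the prolongation formula for $\zeta_x,\zeta_y$ is precisely what forces the constraints above on $\tau,\xi^i,\eta$, and these do not emerge from splitting the invariance condition for the main equation alone. With that caveat made explicit, your plan leads to the same reduced determining system and hence to the generators~\eqref{eq:equivalgebravort}.
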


\begin{remark*}
Although the coefficients of~$\p_{\zeta_x}$ and~$\p_{\zeta_y}$ can be obtained by standard prolongation from the coefficients associated with the equation variables, it is necessary to include the corresponding terms in the representation of the basis elements~\eqref{eq:equivalgebravort} in order to guarantee that they commute in a proper way.
\end{remark*}

\begin{remark*}
The operators $\tilde{\mathcal G}(\rho)$ and $\tilde{\mathcal H}(\chi)-\tilde{\mathcal Z}(\chi)$ arise due to the total divergence expression of the right hand side of the first equation in~\eqref{eq:vortdirect}, leading to the gauge freedom in rewriting the right hand side of the class~\eqref{eq:vortdirect}. They do not generate transformations of the independent and dependent variables and hence form the gauge equivalence subalgebra of the equivalence algebra~\eqref{eq:equivalgebravort} \cite{popo10Ay}. The parameter-function~$\rho$ is defined up to summand depending on~$t$.
\end{remark*}

\begin{proof}
As coordinates in the underlying fourth-order jet space~$\mathfrak J^{(4)}$, we choose the variables
\begin{gather*}
z_\mu,\quad \psi,\quad \psi_\mu,\quad \psi_{\mu\nu},\ \mu\leqslant\nu,\quad
\psi_{\lambda\mu\nu},\ \lambda\leqslant\mu\leqslant\nu,\ (\mu,\nu)\ne(2,2),\quad \zeta_\mu,\\
\psi_{\kappa\lambda\mu\nu},\ \kappa\leqslant\lambda\leqslant\mu\leqslant\nu,\ (\mu,\nu)\ne(2,2),\quad \zeta_{\mu\nu},\ \mu\leqslant\nu.
\end{gather*}
(Variables of the jet space and related values are defined by their notation up to permutation of indices.)
The variable $\zeta_0$ of the jet space is assumed principal, i.e., it is expressed via the other coordinate variables (called the parametric ones) in view of equation~\eqref{eq:vortdirect}. Under calculation we also carry out the substitutions $\psi_{22\mu}=\zeta_\mu-\psi_{11\mu}$.
To avoid repetition of the above conditions for indices, in what follows we assume
that the index tuples $(\mu,\nu)$, $(\lambda,\mu,\nu)$ and $(\kappa,\lambda,\mu,\nu)$ satisfy these conditions by default.

Due to the special form of the arbitrary elements $f^i$, we have to augment equation~\eqref{eq:vortdirect} with the following auxiliary system for~$f^i$:
\begin{equation}\label{eqb}
    f^i_\psi = f^i_{\psi_\mu} = f^i_{\psi_{\mu\nu}} = f^i_{\psi_{\lambda\mu\nu}} = f^i_{\zeta_0} = f^i_{\psi_{\kappa\lambda\mu\nu}} = f^i_{\zeta_{\mu\nu}} = 0.
\end{equation}

As we compute the usual equivalence algebra rather than the generalized one \cite{mele96Ay}
and the arbitrary elements $f^i$ do not depend on fourth order derivatives of~$\psi$,
the elements of the algebra are assumed to be vector fields in the joint space of the variables of~$\mathfrak J^{(3)}$ and the arbitrary elements $f^i$,
which are projectable to both the spaces $(t,x,y,\psi)$ and $\mathfrak J^{(3)}$.
In other words, the algebra consists of vector fields of the general form
\[
Q=\xi^\mu\p_\mu+\eta\p_\psi+\eta^\mu\p_{\psi_\mu}+\eta^{\mu\nu}\p_{\psi_{\mu\nu}}
+\eta^{\lambda\mu\nu}\p_{\psi_{\lambda\mu\nu}}+\theta^\mu\p_{\zeta_\mu}+\varphi^i\p_{f^i},
\]
where $\xi^\mu=\xi^\mu(t,x,y,\psi)$, $\eta=\eta(t,x,y,\psi)$,
the coefficients corresponding to derivatives of~$\psi$ are obtained by the standard prolongation~\eqref{EqProlongedOp} from~$\xi^\mu$ and~$\eta$, the coefficients~$\theta^\nu$  are obtained by the standard prolongation from~$\xi^\mu$ and~$\theta=\eta^{ii}$,
and the coefficients $\varphi^i$ depends on all the variables of~$\mathfrak J^{(3)}$ and the arbitrary elements $f^j$.
As a result, each element from the equivalence algebra is determined by its coefficients~$\xi^\mu$, $\eta$ and~$\varphi^i$.
To act on the equations~\eqref{eq:vortdirect} and~\eqref{eqb} by the operator~$Q$, we should additionally prolong it
to the variables $\psi_{\kappa\lambda\mu\nu}$ and $\zeta_{\mu\nu}$ in the conventional way and
to the derivatives of~$f$, assuming all the variables of~$\mathfrak J^{(3)}$ as usual ones:
\begin{align*}
\bar Q={}&Q+\eta^{\kappa\lambda\mu\nu}\p_{\psi_{\kappa\lambda\mu\nu}}+\theta^{\mu\nu}\p_{\zeta_{\mu\nu}}\\&
+\varphi^{i\mu}\p_{f^i_\mu}+\varphi^{i\psi}\p_{f^i_\psi}+\varphi^{i\psi_\mu}\p_{f^i_{\psi_\mu}}
+\varphi^{i\psi_{\mu\nu}}\p_{f^i_{\psi_{\mu\nu}}}
+\varphi^{i\psi_{\lambda\mu\nu}}\p_{f^i_{\psi_{\lambda\mu\nu}}}+\varphi^{i\zeta_\mu}\p_{f^i_{\zeta_\mu}}.
\end{align*}

First we consider the infinitesimal invariance conditions associated with equations~\eqref{eqb}.
The invariance condition for the equation $f^i_\psi=0$ is
\[
    \varphi^{i\psi}\big|_{\rm Eq.~\eqref{eqb}} = \varphi^i_\psi  -\xi^\mu_\psi f^i_\mu - \theta^k_\psi f^i_{\zeta_k}=0.
\]
Splitting with respect to derivatives of $f^i$ in the latter equation implies that $\varphi^i_\psi=0$, $\xi^\mu_\psi = 0$, $\theta^i_\psi = 0$.
As $\theta^i=\mathrm D_j\mathrm D_j\mathrm D_i(\eta-\xi^\mu\psi_\mu)+\xi^\mu\psi_{\mu jji}$,
we additionally derive the simple determining equation $\eta_{\psi\psi}=0$.

In a similar way, the invariance conditions for the equations
$f^i_{\psi_\mu}=0$, $f^i_{\psi_{\mu\nu}}=0$, $f^i_{\psi_{\lambda\mu\nu}}=0$ and $f^i_{\zeta_0} = 0$
can be presented in the form
\begin{gather*}
    \varphi^{i\psi_{\mu}}\big|_{\rm Eq.~\eqref{eqb}} = \varphi^i_{\psi_\mu} - \theta^k_{\psi_\mu} f^i_{\zeta_k} = 0,\\
    \varphi^{i\psi_{\mu\nu}}\big|_{\rm Eq.~\eqref{eqb}} = \varphi^i_{\psi_{\mu\nu}} - \theta^k_{\psi_{\mu\nu}} f^i_{\zeta_k} = 0,\\
    \varphi^{i\psi_{\lambda\mu\nu}}\big|_{\rm Eq.~\eqref{eqb}} = \varphi^i_{\psi_{\lambda\mu\nu}} - \theta^k_{\psi_{\lambda\mu\nu}} f^i_{\zeta_k}=0,\\
    \varphi^{i\zeta_0}\big|_{\rm Eq.~\eqref{eqb}} = \varphi^i_{\zeta_0} - \theta^k_{\zeta_0} f^i_{\zeta_k} = 0,
\end{gather*}
which is split into
$\varphi^i_{\psi_\mu}=0$, $\theta^k_{\psi_\mu}=0$;
$\varphi^i_{\psi_{\mu\nu}}=0$, $\theta^k_{\psi_{\mu\nu}}=0$;
$\varphi^i_{\psi_{\lambda\mu\nu}}=0$, $\theta^k_{\psi_{\lambda\mu\nu}}=0$; and
$\varphi^i_{\zeta_0}=0$, $\theta^k_{\zeta_0}=0$, respectively.
The equations $\theta^k_{\psi_\mu}=0$, $\theta^k_{\psi_{\lambda\mu\nu}}=0$ and $\theta^k_{\zeta_0}=0$ provide no essential restrictions on the coefficients~$\xi^\mu$, $\eta$ and~$\varphi^i$. From the equation $\theta^k_{\psi_{\lambda\mu\nu}}=0$ we derive that $\xi^0_j=0$, $\xi^1_2+\xi^2_1=0$ and $\xi^1_1-\xi^2_2=0$.
Hence
\[
    \theta = \eta^{jj}= \eta_{jj} + 2\eta_{j\psi}\psi_j + \eta_\psi\psi_{jj} - 2\xi^i_j\psi_{ij}= \eta_{jj} + 2\eta_{j\psi}\psi_j + (\eta_\psi-2\xi^1_1)\zeta.
\]

It remains to solve the determining equations following from the invariance condition for equation~\eqref{eq:vortdirect}. The invariance condition reads
\[
    \theta^0 + \eta^1\zeta_2 + \psi_1\theta^2 - \eta^2\zeta_1 + \psi_2\theta^1 = \varphi^{ii} + \varphi^{i\zeta_j}\zeta_{ji} + f^i_{\zeta_j}\theta^{ji},
\]
or explicitly
\begin{gather*}
\eta_{jjt} + \eta_{jj\psi}\psi_t + 2\eta_{tj\psi}\psi_j + 2\eta_{j\psi}{\psi_{tj}} + (\eta_{t\psi}-2\xi^1_{t1})\zeta\\
+ (\eta_\psi-2\xi^1_1-\xi^0_t)(f^i_i + f^i_{\zeta_j}\zeta_{ij} - \psi_1\zeta_2+\psi_2\zeta_1) - \xi^i_t\zeta_i \\
+ (\eta_1+\eta_\psi\psi_1 - \xi^i_1\psi_i)\zeta_2 + \psi_1(\eta_{jj2} + \eta_{jj\psi}\psi_2+2\eta_{2j\psi}\psi_j + 2\eta_{j\psi}\psi_{2i}+ (\eta_\psi-2\xi^1_1)\zeta_2 - \xi^i_2\zeta_i)\\
- (\eta_2+\eta_\psi\psi_2 - \xi^i_2\psi_i)\zeta_1 - \psi_2(\eta_{jj1} + \eta_{jj\psi}\psi_1 + 2\eta_{1j\psi}\psi_j + 2\eta_{j\psi}\psi_{1j} + (\eta_\psi-2\xi^1_1)\zeta_1 - \xi^i_1\zeta_i) \\
=\varphi^i_i + \varphi^i_{f^j}f^j_i - \xi^j_if^i_j - \theta^k_i f^i_{\zeta_k} + \zeta_{ij}(\varphi^i_{\zeta_j} + \varphi^i_{f^k}f^k_{\zeta_j}-\theta^k_{\zeta_j}f^i_{\zeta_k}) + f^i_{\zeta_j}\theta^{ij}.
\end{gather*}
Collecting the coefficients of $\psi_{tj}$ gives $\eta_{j\psi} = 0$. This implies that $\theta_\psi = 0$.
Similarly, the coefficients of $\psi_i\zeta_j$ lead to the equation
$\eta_{ijj}=0$ and $\eta_\psi-2\xi^1_1 + \xi^0_t=0$.
As $\xi^0_i=0$ and $\eta_{i\psi}=0$, the second equation together with the relations $\xi^1_1=\xi^2_2$ and $\xi^1_2+\xi^2_1=0$ implies that $\xi^i_{jk}=0$.
Then, the coefficient of $\zeta$ gives $\xi^0_{tt}=0$ and the coefficients of $f^i_j$ lead to $\varphi^1_{f^2}= \xi^1_2$, $\varphi^2_{f^1}= \xi^2_1$ and  $\varphi^1_{f^1}=\varphi^2_{f^2}=\xi^1_1-2\xi^0_t$.
In view of the determining equations that we have already derived, the terms involving $f^i_{\zeta_j}$ are identically canceled.
Note that the coefficients of $f^i_{\zeta_j}\zeta_{kl}$ simultaneously lead to the same set of equations as the coefficients of $f^i_j$.

The remaining part of the invariance condition is
$\eta_{jjt} - \xi^i_t\zeta_i + \eta_1\zeta_2-\eta_2\zeta_1 = \varphi^i_i +\zeta_{ij}\varphi^i_{\zeta_j}$.
Splitting with respect to $\zeta_{ij}$ in this relation gives $\varphi^1_{\zeta_1}=\varphi^2_{\zeta_2}=0$, $\varphi^1_{\zeta_2}+\varphi^2_{\zeta_1}=0$ and
\[
    \varphi^i_i=\eta_{jjt} - \xi^i_t\zeta_i + \eta_1\zeta_2-\eta_2\zeta_1.
\]
Acting on the last equation by the operator $\p_j\p_{\zeta_j}$, we obtain $\xi^i_{it}=0$.
Further splitting with respect to $\zeta_1$ and $\zeta_2$ is not possible since $\varphi^j$ may depend on them.

Finally, the reduced system of determining equations reads
\begin{gather*}
    \xi^0_\psi=\xi^0_i=\xi^0_{tt}=0, \quad \xi^i_\psi=\xi^i_{jk}=0, \quad \xi^i_{it}=0, \quad \xi^1_1=\xi^2_2, \quad \xi^1_2+\xi^2_1 = 0,\\
    \eta_{\psi\psi}=0,\quad \eta_\psi=2\xi^1_1-\xi^0_t,\quad \eta_{ijj}=0, \\
    \varphi^i_\psi=0, \quad \varphi^i_{\psi_\mu}=0, \quad \varphi^i_{\psi_{\mu\nu}}=0, \quad \varphi^i_{\psi_{\lambda\mu\nu}}=0, \quad \varphi^i_{\zeta_0}=0,\\
    \varphi^1_{f^2}= \xi^1_2, \quad \varphi^2_{f^1}= \xi^2_1, \quad \varphi^1_{f^1}=\varphi^2_{f^2}=\xi^1_1-2\xi^0_t,\\
    \varphi^1_{\zeta_1}=\varphi^2_{\zeta_2}= 0, \quad \varphi^1_{\zeta_2}+\varphi^2_{\zeta_1} = 0, \quad
    \varphi^i_i=\eta_{jjt} - \xi^i_t\zeta_i + \eta_1\zeta_2-\eta_2\zeta_1.
\end{gather*}
The solution of this system provides the principal coefficients of the operators from the equivalence algebra of the class~\eqref{eq:vortdirect}:
\begin{align}\label{eq:infequiv}
\begin{split}
    &\xi^0 = c_1t+c_0,\qquad \xi^1=c_2 x-\beta y+\gamma^1, \qquad \xi^2=\beta x+c_2 y+\gamma^2,  \\
    &\eta = (2c_2-c_1)\psi + \delta -\gamma^1_ty + \gamma^2_tx+\frac{\beta_t}2(x^2+y^2)+\frac\sigma2(x^2+y^2)+\chi, \\
    &\varphi^1= (c_2-2c_1)f^1-\beta f^2 + \delta\zeta_y+\frac\sigma2(x^2+y^2)\zeta_y+\beta_{tt}x+\sigma_t x-\rho_y,\\
    &\varphi^2= \beta f^1+(c_2 -2c_1)f^2 -\delta\zeta_x-\frac\sigma2(x^2+y^2)\zeta_x+\beta_{tt}y+\sigma_t y+\rho_x,
\end{split}
\end{align}
where $\beta$, $\gamma^i$, $\sigma$ and $\chi$ are real-valued smooth functions of~$t$ only,
$c_0$, $c_1$ and $c_2$ are arbitrary constants, $\rho$ is an arbitrary function of~$t$, $x$ and~$y$ and
$\delta=\delta(t,x,y)$ is an arbitrary solution of the Laplace equation $\delta_{jj} = 0$.

Splitting with respect to parametric values in~\eqref{eq:infequiv}, we obtain the coefficients of the basis operators~\eqref{eq:equivalgebravort}
of the algebra~$\mathfrak g^\sim_1$.
Recall that the coefficients $\eta^\mu$, $\eta^{\mu\nu}$, $\eta^{\lambda\mu\nu}$ and~$\theta^\nu$ are calculated from~$\xi^\mu$ and~$\eta$
via the standard procedure of prolongation and the coefficients~$\varphi^i$ do not depend on
$\psi_\mu$, $\psi_{\mu\nu}$, $\psi_{\lambda\mu\nu}$, and $\zeta_0$.
Therefore, both the operators from~$\mathfrak g^\sim_1$ and their commutators are completely determined
by the coefficients of $\p_{\mu}$, $\p_\psi$, $\p_{\zeta_i}$ and $\p_{f^j}$.
This is why in~\eqref{eq:equivalgebravort} and similar formulas we omit the other terms for sake of brevity.
\end{proof}

\begin{remark*}\looseness=-1
The auxiliary system for the arbitrary elements is an important component of the definition of a class of differential equations.
Its choice is usually guided by some prior knowledge about the processes to be parameterized.
We have decided to assume that the arbitrary elements~$f^1$ and~$f^2$ depend also on $t$, keeping in mind two more, purely mathematical, reasons.
The first reason is that the projection of the corresponding equivalence algebra on the space $(t,x,y,\psi)$ contains the maximal Lie invariance algebra~$\mathfrak g_0$ of the vorticity equation~\eqref{eq:vort} which is the initial point of the entire consideration.
The basis operators~\eqref{eq:algebravort} of~$\mathfrak g_0$ are obtained from~\eqref{eq:equivalgebravort}~by
\begin{gather*}
\DDD_1 = \pr\tilde \DDD_1, \qquad \p_t = \pr \p_t, \qquad \DDD_2 = \pr \tilde\DDD_2, \qquad \JJ = \pr\tilde\JJ(1), \qquad \JJ^t = \pr\tilde\JJ(t), \\
\XX(\gamma^1) = \pr\tilde\XX(\gamma^1), \qquad \YY(\gamma^2) = \pr\tilde\YY(\gamma^2),\qquad \ZZ(\chi) = \pr\tilde\ZZ(\chi),
\end{gather*}
where $\pr$ denotes the projection operator on the space $(t,x,y,\psi)$.
(Though the expressions for the operator $\p_t$ (resp.\ $\tilde\XX(\gamma^1)$, $\tilde\YY(\gamma^2)$ or $\tilde\ZZ(\chi)$) and its projection formally coincide, they in fact determine vector fields on different spaces.)
The second reason is that the class~\eqref{eq:vortdirect} is normalized, cf. Section~\ref{sec:AdmTrans}.
This in particular implies that the maximal Lie invariance algebra of any equation from the class~\eqref{eq:vortdirect} is contained in the projection of the equivalence algebra~$\mathfrak g^\sim_1$ of this class.
\end{remark*}

We also calculate the equivalence algebras of two subclasses of the class~\eqref{eq:vortdirect}.

The first subclass corresponds to parameterizations not depending on time explicitly and, therefore, is singled out from the class~\eqref{eq:vortdirect}
by the further auxiliary equation
\[
f^i_t=0,
\]
which has no influence on splitting of the invariance conditions for the equations~\eqref{eq:vortdirect} and~\eqref{eqb} and
gives the additional determining equations $\varphi^i_t=\xi^i_t=\theta^i_t=0$.
These determining equations imply that $\beta$, $\gamma^i$ and $\sigma$ are constant, $\delta$ is a function only of~$x$ and~$y$
and $\rho$ can be assumed as a function only of~$x$ and~$y$.
Therefore, the equivalence algebra of this subclass is
\[
\langle\tilde\DDD_1,\,\p_t,\,\tilde \DDD_2,\,\tilde\JJ(1),\,\tilde\XX(1),\,\tilde\YY(1),\,\tilde{\mathcal R}(1),\,\tilde{\mathcal H}(\delta),\,
\tilde{\mathcal G}(\rho),\,\tilde{\mathcal Z}(\chi)\rangle,
\]
where the parameter-function $\delta=\delta(x,y)$ runs through the set of solutions of the Laplace equation $\delta_{xx}+\delta_{yy}=0$ and
$\rho=\rho(x,y)$ is an arbitrary function of its arguments.

The second subclass is associated with spatially independent parameterizations. Hence we additionally set
\[
f^i_j = 0.
\]
It has to be noted that after attaching this condition we cannot split with respect to $f^i_j$ as we did in the course of solving the determining equations. However, precisely the same conditions obtained from splitting with respect to $f^i_j$ can also be obtained from splitting with respect to $f^i_{\zeta_j}$. Hence the condition $f^i_j=0$ only leads to the additional restriction $\varphi^i_j=0$ and, therefore, we find that $\delta_i=0$, $\sigma=0$, $\beta_{tt}=0$ and $\rho_{ij}=0$.
Without loss of generality we can set $\rho=\rho^i(t)z_i$, where $\rho^i$ are arbitrary smooth functions of~$t$.
As a result, the equivalence algebra~$\mathfrak g^\sim_2$ of the second subclass is generated by the operators
\[
\tilde\DDD_1,\,\p_t,\,\tilde \DDD_2,\,\tilde\JJ(1),\,\tilde\JJ(t),\,\tilde\XX(\gamma^1),\,\tilde\YY(\gamma^2),\,\tilde{\mathcal H}(\delta),\,
\tilde{\mathcal G}(\rho^1x+\rho^2y),\,\tilde{\mathcal Z}(\chi),
\]
where $\gamma^i$, $\rho^i$, $\delta$ and~$\chi$ are arbitrary smooth functions of~$t$.

The intersection of the above subclasses corresponds to the set of parameterizations independent of both $t$ and $(x,y)$ and
is singled out from the class~\eqref{eq:vortdirect} by the joint auxiliary system \[f^i_t=f^i_j=0.\]
Its equivalence algebra is the intersection of the equivalence algebras of the above subclasses and, therefore, equals
\[
\langle\tilde\DDD_1,\,\p_t,\,\tilde \DDD_2,\,\tilde\JJ(1),\,\tilde\XX(1),\,\tilde\YY(1),\,\tilde{\mathcal H}(1),\,
\tilde{\mathcal G}(\rho^1x+\rho^2y),\,\tilde{\mathcal Z}(\chi)\rangle,
\]
where $\rho^1$, $\rho^2$ and~$\chi$ are arbitrary smooth functions of~$t$.

\subsection{Normalized classes of generalized vorticity equations}\label{sec:AdmTrans}

In the course of computing the set of admissible transformations of a class of differential equations, it is often convenient to construct a hierarchy of normalized superclasses for this class~\cite{popo10Ay,popo10By}.
This is why here we also start with the quite general class of differential equations
\begin{gather}\label{eq:class1}
\zeta_t-F(t,x,y,\psi,\psi_x,\psi_y,\zeta,\zeta_x,\zeta_y,\zeta_{xx},\zeta_{xy}, \zeta_{yy})=0, \quad \zeta := \psi_{ii},
\end{gather}
where $(F_{\zeta_x},F_{\zeta_y},F_{\zeta_{xx}},F_{\zeta_{xy}},F_{\zeta_{yy}})\ne(0,0,0,0,0)$, to assure that the generalized vorticity equations of the form~\eqref{eq:vortdirect} belong to this class.
We use notations and agreements from Section~\ref{sec:EquivAlgebra}.
In particular, $z=(z_0,z_1,z_2)=(t,x,y)$, the indices $i$, $j$ and~$k$ again run through $\{1,2\}$, while the indices~$\kappa$, $\lambda$, $\mu$ and $\nu$ range from $0$ to $2$.

Admissible transformations are determined using the direct method in terms of finite transformations.
Namely, we aim to exhaustively describe point transformations of the form
\[
\mathcal T\colon\quad  \tilde z_\mu = Z^\mu(z,\psi), \quad \tilde \psi = \Psi(z,\psi), \quad\mbox{where}\quad J=\frac{\p(Z^0,Z^1,Z^2,\Psi)}{\p(z_0,z_1,z_2,\psi)}\ne0,
\]
which map an equation from class~\eqref{eq:class1} to an equation from the same class.
We express derivatives of the ``old'' dependent variable~$\psi$ with respect to the ``old'' independent variables~$z$
via derivatives of the ``new'' dependent variable~$\tilde\psi$ with respect to the ``new'' independent variables~$\tilde z$.
The latter derivatives will be marked by tilde over~$\psi$.
Thus, the derivative of~$\tilde\psi$ with respect to~$\tilde z_\mu$ is briefly denoted by~$\tilde\psi_\mu$, etc.
Then we substitute the expressions for derivatives into the equation $\zeta_t-F=0$, exclude the new principal derivative~$\tilde\psi_{022}$ using the transformed equation $\tilde\psi_{022}=-\tilde\psi_{011}+\tilde F$, split with respect to parametric variables whenever this is possible and solve the obtained determining equations for~$Z^\mu$ and~$\Psi$ supplemented with the inequality $J\ne0$, considering all arising cases for values of the arbitrary element~$F$ and simultaneously finding the expression for~$\tilde F$ via $F$, $Z^\mu$ and~$\Psi$.

The first order derivatives $\psi_\mu$ are expressed in the following manner:
\[
 \psi_\mu = -\frac{\Psi_\mu-\tilde \psi_\nu Z^\nu_\mu}{\Psi_\psi-\tilde \psi_\nu Z^\nu_\psi} = -\frac{V_\mu}{V_\psi},
\]
where we have introduced the notation $V=V(z,\psi,\tilde z) := \Psi(z,\psi)-\tilde\psi_\nu(\tilde z) Z^\nu(z,\psi)$
which is assumed as a function of the old dependent and independent variables and the new independent variables,
so that $V_\mu=\Psi_\mu-\tilde \psi_\nu Z^\nu_\mu$ and $V_\psi=\Psi_\psi-\tilde \psi_\nu Z^\nu_\psi$.
We will not try to express the old variables via the new variables by inverting the transformation.
This is a conventional trick within the direct method, which essentially simplifies the whole consideration.
In what follows we will also use three more abbreviations similar to $V_\mu$:
\[
U^{\mu\nu}:=Z^\mu_\nu V_\psi - Z^\mu_\psi V_\nu, \quad
W^{\mu\nu} :=U^{\mu i}U^{\nu j} F_{\zeta_{ij}},\quad
P^\mu := U^{\mu 0} - U^{\mu i}F_{\zeta_i}.
\]
Higher order derivatives are expressible in an analogous way. The Laplacian of~$\psi$, e.g., reads
\[
\psi_{ii} = V_\psi^{-3}(U^{\mu i}U^{\nu i}\tilde \psi_{\mu\nu} - V^2_\psi V_{ii} + 2 V_i V_\psi V_{i\psi} - V^2_i V_{\psi\psi}).
\]
For the class~\eqref{eq:class1} considered here, we need the derivatives of the Laplacian up to second order.
The highest derivatives required are of the form
\[
 \psi_{iijk} = V_\psi^{-5}U^{\mu i}U^{\nu i}U^{\kappa j}U^{\lambda k}\tilde \psi_{\mu\nu\kappa\lambda} + \dots,
\]
where the tail contains only derivatives of $\tilde\psi$ up to order three.

Denote by~$G$ the left hand side of the equation obtained by substituting all the expressions for derivatives into~\eqref{eq:class1}.
For the transformation~$\mathcal T$ to be admissible, the condition  $G_{\tilde\psi_{\mu\nu\kappa\lambda}}=0$ has to be satisfied for any tuple of the subscripts $(\mu,\nu,\kappa,\lambda)$ in which at least one of the subscripts equals $0$. Under varying the subscripts, this condition leads to the following system:
\begin{align*}
 &G_{\tilde \psi_{0000}}=0\colon \quad U^{0k}U^{0k}W^{00} = 0, \\
 &G_{\tilde \psi_{000i}}=0\colon \quad U^{0k}U^{0k}W^{0i}+U^{0k}U^{ik}W^{00} = 0,\\
 &G_{\tilde \psi_{00ij}}=0\colon\quad U^{0k}U^{0k}W^{ij}+ 2U^{0k}U^{ik}W^{0j} + 2U^{0k}U^{jk}W^{0i} + U^{ik}U^{jk}W^{00} =0.
\end{align*}
Suppose that $U^{0k}U^{0k}\ne0$. Then the above equations imply that $W^{\mu\nu}:=U^{\mu i}U^{\nu j} F_{\zeta_{ij}}=0$.
If $\rank(U^{\mu i})<2$ then for any~$\mu$ and~$\nu$
\[
U^{\mu1}U^{\nu 2}-U^{\mu2}U^{\nu1}=
\left(\frac{\p(Z^\mu,Z^\nu,\Psi)}{\p(z_1,z_2,\psi)}-\tilde\psi_\kappa\frac{\p(Z^\mu,Z^\nu,Z^\kappa)}{\p(z_1,z_2,\psi)}\right)V_\psi=0
\]
and after splitting with respect to~$\tilde\psi_\lambda$ we obtain that
\[
\frac{\p(Z^\mu,Z^\nu,\Psi)}{\p(z_1,z_2,\psi)}=\frac{\p(Z^\mu,Z^\nu,Z^\kappa)}{\p(z_1,z_2,\psi)}=0
\quad\mbox{or}\quad
Z^\kappa_\psi=\Psi_\psi=0,
\]
but this contradicts the transformation nondegeneracy condition $J\ne0$.
Hence $\rank(U^{\mu i})=2$ and, therefore, the equation $U^{\mu i}U^{\nu j}F_{\zeta_{ij}}=0$ sequentially implies that $U^{\nu j}F_{\zeta_{ij}}=0$ and $F_{\zeta_{ij}}=0$.
Then, the necessary conditions $G_{\tilde \psi_{000}}=0$ and $G_{\tilde \psi_{00i}}=0$ for
admissible transformations are respectively equivalent to the equations
$U^{0k}U^{0k}P^0=0$ and $U^{0k}U^{0k}P^i+2U^{0k}U^{ik}P^0=0$ which jointly gives in view of the condition $U^{0k}U^{0k}\ne0$ that $P^\mu=0$.
Thus, we should have $\det(U^{\mu\nu})=0$. At the same time,
\begin{gather*}
\det(U^{\mu\nu})=V_\psi{}^2\bigl(|Z^0_\nu,Z^1_\nu,Z^2_\nu|V_\psi-|V_\nu,Z^1_\nu,Z^2_\nu|Z^0_\psi-|Z^0_\nu,V_\nu,Z^2_\nu|Z^1_\psi-|Z^0_\nu,Z^1_\nu,V_\nu|Z^2_\psi\bigr)
\\ \phantom{\det(U^{\mu\nu})}{}
=V_\psi{}^2\frac{\p(Z^0,Z^1,Z^2,V)}{\p(z_0,z_1,z_2,\psi)}=V_\psi{}^2J\ne0
\end{gather*}
that leads to a contradiction.
Therefore, the supposition $U^{0k}U^{0k}\ne0$ is not true, i.e., $U^{0k}U^{0k}=0$ and hence $U^{0k}=0$.
Substituting the expressions for $U^{0k}$ and~$V$ into the last equation and splitting with respect to $\tilde\psi_\mu$, we derive the equations
\[
Z^0_kZ^\mu_\psi=Z^0_\psi Z^\mu_k, \quad Z^0_k\Psi_\psi=Z^0_\psi \Psi_k.
\]
The tuples $(Z^\mu_1,\Psi_1)$, $(Z^\mu_2,\Psi_2)$ and $(Z^\mu_\psi,\Psi_\psi)$ are not proportional since $J\ne0$.
This is why we finally obtain the first subset of determining equations $Z^0_k=Z^0_\psi=0$.
It follows from them that $Z^0_0\ne0$ (otherwise $J=0$) and expressions for ``old'' derivatives with respect to only~$x$ and~$y$
contain ``new'' derivatives only of the same type.
In other words, derivatives of~$\tilde\psi$ involving differentiation with respect to~$\tilde t$ appear only in the expressions for $\psi_{0aa}$
and we can simply split with respect to them via collecting their coefficients.

Equating the coefficients of~$\tilde\psi_{012}$ leads, in view of the condition $Z^0_0\ne0$, to the equation $U^{1k}U^{2k}=0$, i.e.,
\begin{equation}\label{EqInProofForAdmTrans1}
\left(Z^1_k\Psi_\psi-Z^1_\psi\Psi_k+(Z^1_\psi Z^2_k-Z^1_kZ^2_\psi)\tilde\psi_2\right)
\left(Z^2_k\Psi_\psi-Z^2_\psi\Psi_k-(Z^1_\psi Z^2_k-Z^1_kZ^2_\psi)\tilde\psi_1\right)=0.
\end{equation}
We split equation~\eqref{EqInProofForAdmTrans1} with respect to~$\tilde\psi_1$ and~$\tilde\psi_2$.
Collecting the coefficients of~$\tilde\psi_1\tilde\psi_2$ gives the equation $(Z^1_\psi Z^2_k-Z^1_kZ^2_\psi)(Z^1_\psi Z^2_k-Z^1_kZ^2_\psi)=0$,
or equivalently $Z^1_\psi Z^2_k-Z^1_kZ^2_\psi=0$.
As $\rank(Z^i_1,Z^i_2,Z^i_\psi)=2$, this implies that $Z^i_\psi=0$ and, therefore, $\Psi_\psi\ne0$.
Consequently, equation~\eqref{EqInProofForAdmTrans1} is reduced to $Z^1_kZ^2_k=0$.

The derivative~$\tilde\psi_{022}$ is assumed principal, $\tilde\psi_{022}=-\tilde\psi_{011}+\tilde F$.
Hence another third order derivative of the above type appropriate for splitting is only $\tilde\psi_{011}$.
The corresponding equation $Z^1_kZ^1_k=Z^2_kZ^2_k:=L$ joint with the equation $Z^1_kZ^2_k=0$ implies
that the functions~$Z^1$ and~$Z^2$ satisfy the Cauchy--Riemann system $Z^1_1=\varepsilon Z^2_2$, $Z^1_2=-\varepsilon Z^2_1$,
where $\varepsilon=\pm1$, and hence $Z^i_{kk}=0$.
Note that $L\ne0$ since $J\ne0$.

Analogously, collecting the coefficients of~$\tilde\psi_{0i}$ and further splitting with respect to~$\tilde\psi_j$ lead to
the equations $Z^i_kZ^j_k\Psi_{\psi\psi}=0$ and $Z^i_k\Psi_{k\psi}=0$. Therefore, $\Psi_{\psi\psi}=0$ and $\Psi_{k\psi}=0$.
Here we take into account the inequalities $L\ne0$ and $\det(Z^i_k)\ne0$.

We do not have more possibilities for splitting.
The derived system of determining equations consists of the equations
\[
Z^0_k=Z^0_\psi=0, \quad Z^i_\psi=0, \quad Z^1_kZ^2_k=0, \quad Z^1_kZ^1_k=Z^2_kZ^2_k, \quad \Psi_{\psi\psi}=\Psi_{k\psi}=0.
\]
The remaining terms determine the transformation rule for the arbitrary element~$F$.
This is why any point transformation satisfying the above determining equations maps every equation from class~\eqref{eq:class1}
to an equation from the same class and, therefore, belongs to the equivalence group~$G^\sim_1$ of class~\eqref{eq:class1}.
In other words, any admissible point transformation of class~\eqref{eq:class1} is induced by a transformation from~$G^\sim_1$,
i.e., class~\eqref{eq:class1} is normalized.
As a result, we have the following theorem.

\begin{theorem}\label{TheoremOnNormalizedSuperclass1}
Class~\eqref{eq:class1} is normalized.
Its equivalence group~$G^\sim_1$ consists of the transformations
\begin{gather*}
\tilde t=T(t), \quad \tilde x=Z^1(t,x,y), \quad \tilde y=Z^2(t,x,y), \quad \tilde\psi=\Upsilon(t)\psi+\Phi(t,x,y), \\
\tilde F=\frac1{T_t}\left(
\frac\Upsilon LF+\Bigl(\frac\Upsilon L\Bigr)_0\zeta+\Bigl(\frac{\Phi_{ii}}L\Bigr)_0
-\frac{Z^i_tZ^i_j}L\left(\frac\Upsilon L\zeta_j+\Bigl(\frac\Upsilon L\Bigr)_j\zeta+\Bigl(\frac{\Phi_{ii}}L\Bigr)_j\right)
\right),
\end{gather*}
where $T$, $Z^i$, $\Upsilon$ and $\Phi$ are arbitrary smooth functions of their arguments,
satisfying the conditions $Z^1_kZ^2_k=0$, $Z^1_kZ^1_k=Z^2_kZ^2_k:=L$, $T_t\Upsilon L\ne0$,
and the subscripts~1 and~2 denote differentiation with respect to~$x$ and~$y$, respectively.
\end{theorem}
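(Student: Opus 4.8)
The plan is to read the theorem off the system of determining equations already isolated by the direct method in the preceding computation, namely
\[
Z^0_k=Z^0_\psi=0, \quad Z^i_\psi=0, \quad Z^1_kZ^2_k=0, \quad Z^1_kZ^1_k=Z^2_kZ^2_k, \quad \Psi_{\psi\psi}=\Psi_{k\psi}=0.
\]
The decisive observation for the first assertion is that this system constrains only the components $Z^\mu$ and $\Psi$ of the transformation~$\mathcal{T}$ and is completely free of the arbitrary element~$F$: at no step of the splitting was any condition imposed on~$F$. Consequently, a single point transformation whose components solve the above equations simultaneously maps \emph{every} equation of class~\eqref{eq:class1} to an equation of the same class, the target arbitrary element~$\tilde F$ being fixed by the residual, non-split part of the substituted equation. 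By the definitions of an admissible transformation and of the equivalence group, this is precisely the statement that every admissible transformation of class~\eqref{eq:class1} is induced by an element of~$G^\sim_1$, i.e.\ that the class is normalized.

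For the explicit form of~$G^\sim_1$ I would integrate the determining equations. The equations $Z^0_k=Z^0_\psi=0$ give $Z^0=T(t)$, with $T_t\ne0$ forced by $J\ne0$; the equations $Z^i_\psi=0$ give $Z^i=Z^i(t,x,y)$; and $\Psi_{\psi\psi}=\Psi_{k\psi}=0$ integrate to $\Psi=\Upsilon(t)\psi+\Phi(t,x,y)$. The two quadratic relations $Z^1_kZ^2_k=0$ and $Z^1_kZ^1_k=Z^2_kZ^2_k=:L$ are exactly the (anti)conformality conditions recorded in the statement, with $L\ne0$ again imposed by nondegeneracy, so that the inequality $J\ne0$ collapses to $T_t\Upsilon L\ne0$. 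Collecting these, one obtains the claimed shape of the finite transformations.

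The substantive part of the work, and what I expect to be the main obstacle, is the derivation of the transformation law for~$F$. For this I would return to the residual terms of the substituted equation after all higher-order coefficients have been exhausted, insert the now-simplified $Z^\mu$ and $\Psi$ into the transformed Laplacian $\psi_{ii}=V_\psi^{-3}(\dots)$ and into the expression for $\zeta_t$, and eliminate $\tilde\psi_{022}$ via $\tilde\psi_{022}=-\tilde\psi_{011}+\tilde F$. Carrying the factors $V_\psi=\Upsilon$ and $L$ through this bookkeeping produces the stated formula for~$\tilde F$, in which the ratio $\Upsilon/L$ and the combinations $(\Upsilon/L)_j$ and $(\Phi_{ii}/L)_j$ appear as the transforms of~$\zeta$ and of the inhomogeneous part, while the mixed term $Z^i_tZ^i_j/L$ enters from differentiating a spatially transformed vorticity in time. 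This step is routine in principle but delicate in execution, since it requires tracking every chain-rule factor and verifying that all derivatives of~$\tilde\psi$ of order higher than those appearing in~$F$ cancel, so that $\tilde F$ depends only on the admissible set of arguments.
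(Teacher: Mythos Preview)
Your proposal is correct and follows essentially the same approach as the paper: the paper also observes that the determining equations derived just before the theorem constrain only $Z^\mu$ and $\Psi$ and involve no condition on~$F$, so that the remaining terms determine the transformation rule for~$\tilde F$ and every admissible transformation is induced by an equivalence transformation. Your integration of the determining equations and your sketch of how the $\tilde F$ formula arises from the residual terms are in fact more explicit than what the paper writes, which simply records the result.
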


The expression for the transformed vorticity is also simple: $\tilde\zeta=L^{-1}(\Upsilon\zeta+\Phi_{ii})$.

\begin{remark*}
The continuous component of unity of the group~$G^\sim_1$ consists of the transformations from~$G^\sim_1$
with $T_t>0$, $\varepsilon=1$ and $\Upsilon>0$.
Therefore, a complete set of independent discrete transformations in~$G^\sim_1$ is exhausted by
the uncoupled changes of the signs of~$t$, $y$ and~$\psi$.
In particular, the value $\varepsilon=-1$ corresponds to alternating the sign of~$y$.
\end{remark*}

Consider the subclass of class~\eqref{eq:class1}, singled out by the constraints $F_\psi=0$, $F_{\psi_x}=-\zeta_y$ and $F_{\psi_y}=\zeta_x$,
i.e., the class consisting of the equations of the form
\begin{equation}\label{eq:class2}
   \zeta_t+\psi_x\zeta_y-\psi_y\zeta_x=H(t,x,y,\zeta,\zeta_x,\zeta_y,\zeta_{xx},\zeta_{xy}, \zeta_{yy}), \quad \zeta := \psi_{ii},
\end{equation}
where $H$ is an arbitrary smooth function of its arguments, which is assumed as an arbitrary element instead of $F=H-\psi_x\zeta_y+\psi_y\zeta_x$.
The class~\eqref{eq:class2} is still a superclass of the class~\eqref{eq:vortdirect}.

\newcommand{\cosbeta}{\mathfrak c}
\newcommand{\sinbeta}{\mathfrak s}

\begin{theorem}\label{TheoremOnNormalizedSuperclass2}
Class~\eqref{eq:class2} is normalized.
The equivalence group~$G^\sim_2$ of this class is formed by the transformations
\begin{equation}\label{EqTransFromGequiv2}
\begin{split}&
\tilde t=\tau, \quad
\tilde x=\lambda(x\cosbeta-y\sinbeta)+\gamma^1, \quad
\varepsilon\tilde y=\lambda(x\sinbeta+y\cosbeta)+\gamma^2, \\&
\tilde\psi=\varepsilon\frac{\lambda}{\tau_t}\left(\lambda\psi+\frac\lambda2\beta_t(x^2+y^2)
-\gamma^1_t(x\sinbeta+y\cosbeta)+\gamma^2_t(x\cosbeta-y\sinbeta)\right)+\delta+\frac\sigma2(x^2+y^2),\\&
\tilde H=\frac\varepsilon{\tau_t{}^2}
\left(H-\frac{\tau_{tt}}{\tau_t}\zeta-\frac{\lambda_t}\lambda(x\zeta_x+y\zeta_y)+2\beta_{tt}-2\frac{\tau_{tt}}{\tau_t}\beta_t
\right)
-\frac{\delta_y+\sigma y}{\tau_t\lambda^2}\zeta_x+\frac{\delta_x+\sigma x}{\tau_t\lambda^2}\zeta_y
\\&\phantom{\tilde F={}}
+\frac2{\tau_t}\left(\frac\sigma{\lambda^2}\right)_t,
\end{split}
\end{equation}
where $\varepsilon=\pm1$, $\cosbeta=\cos\beta$, $\sinbeta=\sin\beta$;
$\tau$, $\lambda$, $\beta$, $\gamma^i$ and $\sigma$ are arbitrary smooth functions of~$t$
satisfying the conditions $\lambda>0$ and $\tau_t\ne0$;
$\delta=\delta(t,x,y)$ runs through the set of solutions of the Laplace equation $\delta_{xx}+\delta_{yy}=0$.
\end{theorem}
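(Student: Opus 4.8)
The plan is to leverage the fact that class~\eqref{eq:class2} is a subclass of class~\eqref{eq:class1}, whose normalization and equivalence group $G^\sim_1$ are already established in Theorem~\ref{TheoremOnNormalizedSuperclass1}. Let $(H,\tilde H,\varphi)$ be an arbitrary admissible transformation of class~\eqref{eq:class2}. Putting $F=H-\{\psi,\zeta\}$ and $\tilde F=\tilde H-\{\tilde\psi,\tilde\zeta\}$ turns $(F,\tilde F,\varphi)$ into an admissible transformation of the superclass~\eqref{eq:class1}, which by normalization is induced by an element of $G^\sim_1$. Hence $\varphi$ necessarily has the parametrized form $\tilde t=T(t)$, $\tilde x=Z^1$, $\tilde y=Z^2$, $\tilde\psi=\Upsilon\psi+\Phi$ subject to $Z^1_kZ^2_k=0$, $Z^1_kZ^1_k=Z^2_kZ^2_k=L$ and $T_t\Upsilon L\neq0$, and the arbitrary element transforms by the rule recorded in Theorem~\ref{TheoremOnNormalizedSuperclass1}. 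The entire task thereby reduces to selecting those members of $G^\sim_1$ that respect the extra constraints $F_\psi=0$, $F_{\psi_x}=-\zeta_y$, $F_{\psi_y}=\zeta_x$ defining~\eqref{eq:class2}; these selected transformations will constitute $G^\sim_2$, and since every admissible transformation of the subclass is then induced by one of them, normalization of~\eqref{eq:class2} follows at once.

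To isolate $G^\sim_2$ I would impose that $\tilde H=\tilde F+\{\tilde\psi,\tilde\zeta\}$ be independent of $\tilde\psi$ and of its first derivatives $\tilde\psi_{\tilde x},\tilde\psi_{\tilde y}$. Substituting $F=H-\{\psi,\zeta\}$ into the $\tilde F$ formula of Theorem~\ref{TheoremOnNormalizedSuperclass1} and expanding $\{\tilde\psi,\tilde\zeta\}$ with the aid of $\tilde\zeta=L^{-1}(\Upsilon\zeta+\Phi_{ii})$, one collects the terms carrying $\psi_x$ and $\psi_y$. The coefficient of the clean advective contribution $\{\psi,\zeta\}$ in $\tilde H$ must vanish, which yields the rigid relation $\Upsilon=\varepsilon L/T_t$; together with the residual cross term $\Upsilon\{\psi,\Phi_{ii}\}$ this simultaneously forces $L$ to be independent of~$x$ and~$y$. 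The only conformal maps of constant conformal factor being affine similarities, the pair $(Z^1,Z^2)$ then collapses to $\tilde x=\lambda(x\cosbeta-y\sinbeta)+\gamma^1$, $\varepsilon\tilde y=\lambda(x\sinbeta+y\cosbeta)+\gamma^2$ with $\lambda=\sqrt L$ and rotation angle~$\beta$ depending on~$t$ alone, while the relation above becomes precisely $\Upsilon=\varepsilon\lambda^2/\tau_t$ in the notation of~\eqref{EqTransFromGequiv2}.

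The remaining cross term $\Upsilon\{\psi,\Phi_{ii}\}$ still carries $\psi_x,\psi_y$ and so must drop out of $\tilde H$; this is exactly the demand that $\Phi_{ii}$ be independent of the spatial variables. Splitting $\Phi$ accordingly separates its harmonic part $\delta$ (into which the linear $\gamma^i_t$ contributions and all harmonic quadratics are absorbed) from the genuinely non-harmonic radial term $\tfrac\sigma2(x^2+y^2)$, reproducing the structure of $\tilde\psi$ in~\eqref{EqTransFromGequiv2} and explaining why $\delta$ is constrained to solve the Laplace equation. Once the transformations are pinned down, gathering the surviving summands of $\tilde F+\{\tilde\psi,\tilde\zeta\}$ — those depending only on $\zeta$ and its spatial derivatives or on $t,x,y$ — produces the stated rule for $\tilde H$, with the inhomogeneous pieces $2\beta_{tt}$, the $\delta$- and $\sigma$-dependent shear terms, and the $(\sigma/\lambda^2)_t$ contribution. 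A direct check that every such transformation maps the whole class~\eqref{eq:class2} into itself confirms that these form the group $G^\sim_2$.

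The step I expect to be the main obstacle is the coefficient matching of the second paragraph: within the single transformation law for $\tilde F$ one must cleanly disentangle the part that recombines into the new advection bracket $\{\tilde\psi,\tilde\zeta\}$ from the part that becomes $\tilde H$, and show that consistency simultaneously forces the spatial constancy of $L$ and the exact scaling relation $\Upsilon=\varepsilon\lambda^2/\tau_t$. Keeping careful track of the discrete orientation factor $\varepsilon=\pm1$ coming from the Cauchy--Riemann relations, and verifying that no spurious $\tilde\psi_{\tilde x},\tilde\psi_{\tilde y}$ terms remain after the harmonic splitting of $\Phi$, will require the most bookkeeping.
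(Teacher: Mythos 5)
Your proposal is correct and follows essentially the same route as the paper's proof: both reduce the problem to the normalized superclass~\eqref{eq:class1} via Theorem~\ref{TheoremOnNormalizedSuperclass1}, then derive the additional constraints $\Upsilon=\varepsilon L/T_t$, $L_i=0$, $\Phi_{jji}=0$ by collecting the coefficients of $\psi_x$, $\psi_y$ in the transformed equation, conclude that $(Z^1,Z^2)$ is an affine similarity with time-dependent parameters $\lambda$, $\beta$, $\gamma^i$, split $\Phi$ into a harmonic part $\delta$ plus the radial term $\tfrac{\sigma}{2}(x^2+y^2)$, and finish by checking that all such transformations map the class~\eqref{eq:class2} to itself, which yields normalization. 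The only difference is organizational (you feed the constraints through the explicit $\tilde F$-formula of Theorem~\ref{TheoremOnNormalizedSuperclass1}, while the paper re-substitutes into the transformed equation and splits with respect to parametric variables), but the computation and its logical structure are the same.
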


\begin{proof}
The class~\eqref{eq:class2} is a subclass of the class~\eqref{eq:class1} and the class~\eqref{eq:class1} is normalized.
Therefore, any admissible transformation of the class~\eqref{eq:class2} is generated by a transformation
from the equivalence group~$G^\sim_1$ of the superclass.
It is only necessary to derive the additional restrictions on transformation parameters caused by narrowing the class.

The group~$G^\sim_1$ is a usual equivalence group~\cite{ovsi82Ay}, i.e., in contrast to different generalizations of equivalence groups~\cite{mele94Ay,popo10Ay},
it consists of point transformations of the joint space of the equation variables and arbitrary elements,
and the components of transformations for the variables do not depend on the arbitrary elements.
Any transformation from~$G^\sim_1$ is additionally projectable to the space of the independent variables
and the space of the single variable~$t$.
This is why it already becomes convenient, in contrast to the proof of Theorem~\ref{TheoremOnNormalizedSuperclass1},
to express the new derivatives via old ones.
Then we substitute the expressions for new derivatives into the transformed equation
$\tilde\zeta_{\tilde t}+\tilde\psi_{\tilde x}\tilde\zeta_{\tilde y}-\tilde\psi_{\tilde y}\tilde\zeta_{\tilde x}=\tilde H,$
exclude the principal derivative~$\psi_{tyy}$ using the equation
\[\psi_{tyy}=-\psi_{txx}-\psi_x\zeta_y+\psi_y\zeta_x+H,\]
split with respect to parametric variables whenever this is possible
and solve the obtained determining equations.
As equations from the class~\eqref{eq:class2} involve derivatives~$\psi_x$ and~$\psi_y$ in an explicitly defined (linear) manner,
we can split with respect to these derivatives, simply collecting their coefficients.
Since these coefficients do not involve the arbitrary element~$H$, we can further split them with respect to other derivatives.
As a result, we obtain the equations
\[
\Upsilon=\varepsilon\frac L{T_t},\quad L_i=0,\quad \Phi_{jji}=0,
\]
where $\varepsilon=\pm1$ and other notations are defined in the proof of Theorem~\ref{TheoremOnNormalizedSuperclass1}.
Therefore, $L$ and $\Phi_{jj}$ are functions of~$t$ only.
As $L>0$, we can introduce the function $\lambda=\sqrt L$ of~$t$.
Acting by the Laplace operator $\p_{jj}$ on the conditions $Z^1_kZ^1_k=\lambda^2$ and $Z^2_kZ^2_k=\lambda^2$ and taking
into account that $Z^i$ are solutions of the Laplace equation, $Z^i_{kk}=0$,
we derive the important differential consequences $Z^i_{jk}=0$, which imply that the functions $Z^i$ are affine in $(x,y)$.
Hence there exists a function $\beta=\beta(t)$ such that
$Z^1_1=\lambda\cosbeta$ and $Z^1_2=-\lambda\sinbeta$, where $\cosbeta=\cos\beta$ and $\sinbeta=\sin\beta$,
and, therefore, $Z^1_1=\varepsilon\lambda\sinbeta$ and $Z^1_2=\varepsilon\lambda\cosbeta$.
We re-denote $T$ by $\tau$ for the sake of notation consistency and represent~$\Phi$ in the following form%
\footnote{There is an ambiguity in representations of~$Z^i$ and~$\Phi$.
For example, the last summand in the representation of~$\Phi$ can be omitted.
The usage of the above complicated representations is motivated by a few reasons:
the consistency with the notation of basis operators of the equivalence algebra~$\mathfrak g^\sim_1$ from Theorem~\ref{TheoremOnEquivAlgebra1},
the simplification of the expression for the transformed arbitrary element~$\tilde H$
and the convenience of studying admissible transformations within subclasses of the class~\eqref{eq:class2}.
}:
\[
\Phi=\delta(t,x,y)+\frac\sigma2(x^2+y^2)+
\varepsilon\frac{\lambda}{\tau_t}\left(\frac\lambda2\beta_t(x^2+y^2)
-\gamma^1_t(x\sinbeta+y\cosbeta)+\gamma^2_t(x\cosbeta-y\sinbeta)\right),
\]
where $\sigma$ is a function of~$t$ and
$\delta=\delta(t,x,y)$ is a solution of the Laplace equation $\delta_{xx}+\delta_{yy}=0$.
Collecting the terms without~$\psi_x$ and~$\psi_y$ gives the transformation for the arbitrary element~$H$.

Similarly to the proof of Theorem~\ref{TheoremOnNormalizedSuperclass1},
any transformation from~$G^\sim_1$ satisfying the above additional constraints maps every equation from the class~\eqref{eq:class2}
to an equation from the same class and, therefore, belongs to the equivalence group~$G^\sim_2$ of the class~\eqref{eq:class2}.
In other words, any admissible point transformation of the class~\eqref{eq:class2} is induced by a transformation from~$G^\sim_2$,
i.e., the class~\eqref{eq:class2} is normalized.
\end{proof}

\begin{remark*}
The transformations from the equivalence group~$G^\sim_2$,
which are associated with the parameter-function~$\delta$ depending only on~$t$, and only such transformations
identically act on the arbitrary element~$H$ and, therefore,
their projections to the space of independent and dependent variables form the kernel (intersection)
of point symmetry groups of the class~\eqref{eq:class2}.
\end{remark*}

\begin{corollary}\label{CorollaryOnNormalizedSuperclass3}
The subclass of the class~\eqref{eq:class2} singled out by the constraint $H_\zeta=0$ is normalized.
Its equivalence group~$G^\sim_3$ consists of the elements of~$G^\sim_2$ with $\tau_{tt}=0$.
\end{corollary}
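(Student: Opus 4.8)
The plan is to avoid any fresh analysis of admissible transformations and instead leverage the normalization of the superclass~\eqref{eq:class2} from Theorem~\ref{TheoremOnNormalizedSuperclass2}, reducing the whole statement to a single differentiation of the known transformation rule for the arbitrary element. First I would observe that the subclass singled out by $H_\zeta=0$ is, by construction, a subset of the class~\eqref{eq:class2}. Hence any admissible transformation of the subclass is \emph{a fortiori} an admissible transformation of~\eqref{eq:class2}, and by Theorem~\ref{TheoremOnNormalizedSuperclass2} it is induced by some element $\Phi\in G^\sim_2$. Consequently it suffices to determine precisely which $\Phi\in G^\sim_2$ respect the defining constraint, i.e.\ map an arbitrary element with $H_\zeta=0$ to one with $\tilde H_{\tilde\zeta}=0$; that distinguished subset will turn out to be a subgroup and to be exactly the induced equivalence group of the subclass.

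To extract the restriction I would differentiate the transformation rule for~$\tilde H$ in~\eqref{EqTransFromGequiv2} with respect to the new jet coordinate~$\tilde\zeta$. The decisive point is that $\tilde H$ there is written through the old variables, and among these only the old vorticity~$\zeta$ carries a dependence on~$\tilde\zeta$. Indeed, the transformed vorticity satisfies $\tilde\zeta=L^{-1}(\Upsilon\zeta+\Phi_{ii})$ with $L=\lambda^2$ and $\Upsilon=\varepsilon\lambda^2/\tau_t$, so that $\zeta$ is an affine function of~$\tilde\zeta$ with the nonzero coefficient $\partial\zeta/\partial\tilde\zeta=\varepsilon\tau_t$, while the old independent variables $t,x,y$ and the old derivatives $\zeta_x,\zeta_y$ depend only on the new independent variables. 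I would justify this last assertion by recalling that $\tilde\zeta_{\tilde x}$ and $\tilde\zeta_{\tilde y}$ are obtained from $\zeta_x,\zeta_y$ through the purely spatial $(x,y)$-Jacobian and carry no contribution from $\zeta$ itself.

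With $H_\zeta=0$ imposed, inspection of~\eqref{EqTransFromGequiv2} shows that the only term of its right-hand side involving~$\zeta$ is $-\varepsilon\tau_t^{-2}(\tau_{tt}/\tau_t)\zeta$; every remaining term depends at most on $\zeta_x,\zeta_y$ and on $t,x,y$. The chain rule then gives $\tilde H_{\tilde\zeta}=(\partial\zeta/\partial\tilde\zeta)\,\partial_\zeta(\text{RHS})=-\tau_{tt}/\tau_t^{2}$, which vanishes if and only if $\tau_{tt}=0$, since $\tau_t\ne0$. Thus an element of~$G^\sim_2$ preserves the subclass exactly when $\tau_{tt}=0$; conversely any such element manifestly maps $\{H_\zeta=0\}$ into itself, and these elements form a subgroup, which I would name $G^\sim_3$. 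Combining this with the first step, every admissible transformation of the subclass is induced by an element of $G^\sim_3$, which is precisely the claim that the subclass is normalized with equivalence group $G^\sim_3=\{\Phi\in G^\sim_2:\tau_{tt}=0\}$.

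The only step requiring genuine care is the chain-rule evaluation of $\tilde H_{\tilde\zeta}$ in the second and third paragraphs, and within it the verification that $\zeta_x,\zeta_y$ contribute nothing to the $\tilde\zeta$-derivative; once that is granted, the computation collapses to the single coefficient $-\tau_{tt}/\tau_t^2$. Everything else is made essentially free by Theorem~\ref{TheoremOnNormalizedSuperclass2}, so I do not anticipate any serious obstacle.
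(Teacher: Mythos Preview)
Your proposal is correct and follows essentially the same approach as the paper: both exploit the normalization of the superclass~\eqref{eq:class2} from Theorem~\ref{TheoremOnNormalizedSuperclass2} to reduce the problem to determining which elements of~$G^\sim_2$ preserve the constraint $H_\zeta=0$, and both extract the condition $\tau_{tt}=0$ by differentiating the transformation rule~\eqref{EqTransFromGequiv2} for~$\tilde H$ with respect to~$\tilde\zeta$ using the affine relation $\tilde\zeta=\varepsilon\tau_t^{-1}(\zeta+\beta_t)+2\sigma\lambda^{-2}$. The paper simply records the transformation formulas~\eqref{EqTransOfZetaInGequiv2} for $\tilde\zeta$ and~$\tilde\zeta_i$ and states the conclusion, whereas you spell out the chain-rule computation $\tilde H_{\tilde\zeta}=-\tau_{tt}/\tau_t^2$ explicitly; one minor imprecision is that the old derivatives $\zeta_x,\zeta_y$ (and $\zeta_{ij}$, which you do not mention but which are handled identically) depend on the new \emph{derivative} coordinates, not only on the new independent variables---but your point that they carry no $\tilde\zeta$-dependence is the relevant one and is correct.
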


\begin{proof}
As the vorticity and its derivatives are transformed by elements of~$G^\sim_2$ according to the formulas
\begin{equation}\label{EqTransOfZetaInGequiv2}
\tilde\zeta=\frac\varepsilon{\tau_t}(\zeta+\beta_t)+2\frac\sigma{\lambda^2},\quad
\tilde\zeta_i=\frac{\varepsilon Z^i_j}{\tau_t\lambda^2}\zeta_j,
\end{equation}
it follows from~\eqref{EqTransFromGequiv2} under the constraints $H_\zeta=0$ and $\tilde H_{\tilde\zeta}=0$ that $\tau_{tt}=0$.
The rest of the proof is similar to the end of the proof of Theorem~\ref{TheoremOnNormalizedSuperclass2}.
\noprint{
Again, any transformation of the form~\eqref{EqTransFromGequiv2} with $\tau_{tt}=0$ maps every equation from
the subclass of the class~\eqref{eq:class2} singled out by the constraint $H_\zeta=0$
to an equation from the same subclass and, therefore, belongs to the equivalence group~$G^\sim_3$ of the subclass.
In other words, any admissible point transformation of this subclass is induced by a transformation from~$G^\sim_2$,
i.e., the subclass is normalized.
}
\end{proof}

\begin{corollary}\label{CorollaryOnNormalizedSuperclass4}
The subclass of the class~\eqref{eq:class2} singled out by the constraints $H_i=0$ is normalized.
Its equivalence group~$G^\sim_4$ consists of the elements of~$G^\sim_2$ with $\lambda_t=0$, $\sigma=0$ and $\delta_{ij}=0$.
\end{corollary}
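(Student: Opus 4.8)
The plan is to run the same short argument as for Corollary~\ref{CorollaryOnNormalizedSuperclass3}, exploiting that the subclass defined by $H_i=0$ is contained in the normalized class~\eqref{eq:class2}. Since the subclass sits inside~\eqref{eq:class2}, every admissible transformation of the subclass is in particular an admissible transformation of~\eqref{eq:class2} and is therefore induced by some $\Phi\in G^\sim_2$ (Theorem~\ref{TheoremOnNormalizedSuperclass2}). It then suffices to determine exactly which $\Phi\in G^\sim_2$ send the constraint $H_i=0$ to the target constraint $\tilde H_{\tilde x}=\tilde H_{\tilde y}=0$; these $\Phi$ form $G^\sim_4$, and normalization of the subclass follows immediately, because then every admissible transformation of the subclass is induced by an element of $G^\sim_4$.

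First I would substitute the vorticity transformation rules~\eqref{EqTransOfZetaInGequiv2} for $\tilde\zeta$ and $\tilde\zeta_i$, together with the analogous $t$-linear substitutions for the second-order vorticity derivatives, into the transformation formula~\eqref{EqTransFromGequiv2} for $\tilde H$. The key structural observation is that, because $H_i=0$, the leading term $\varepsilon\tau_t{}^{-2}H$ and all terms built from $\zeta$ alone re-express as functions of $\tilde t$ and the new jet coordinates carrying \emph{no} explicit dependence on $\tilde x,\tilde y$: indeed $Z^1,Z^2$ are affine in $(x,y)$ with coefficients depending on $t$ only and satisfy $Z^i_{jk}=0$, so every Jacobian factor relating the old vorticity derivatives to the new ones depends on $t$ alone. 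Consequently the only terms of $\tilde H$ that can produce explicit spatial dependence are those linear in $\zeta_x,\zeta_y$, whose coefficients are
\[
C_1=-\frac{\varepsilon}{\tau_t{}^2}\frac{\lambda_t}{\lambda}x-\frac{\delta_y+\sigma y}{\tau_t\lambda^2},\qquad
C_2=-\frac{\varepsilon}{\tau_t{}^2}\frac{\lambda_t}{\lambda}y+\frac{\delta_x+\sigma x}{\tau_t\lambda^2}.
\]
Since $\zeta_x,\zeta_y$ pass into an invertible $t$-dependent linear combination of $\tilde\zeta_{\tilde x},\tilde\zeta_{\tilde y}$, the requirement $\tilde H_{\tilde x}=\tilde H_{\tilde y}=0$ is equivalent to $C_1$ and $C_2$ being independent of $x$ and $y$.

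The last step is to impose $\p_xC_1=\p_yC_1=\p_xC_2=\p_yC_2=0$. The equations $\p_yC_1=0$ and $\p_xC_2=0$ read $\delta_{yy}+\sigma=0$ and $\delta_{xx}+\sigma=0$, which combined with the harmonicity $\delta_{xx}+\delta_{yy}=0$ of $\delta$ force $\sigma=0$ and hence $\delta_{xx}=\delta_{yy}=0$; adding and subtracting the remaining equations $\p_xC_1=0$ and $\p_yC_2=0$ yields $\lambda_t=0$ and $\delta_{xy}=0$. Altogether this is precisely $\lambda_t=0$, $\sigma=0$, $\delta_{ij}=0$, and conversely these conditions make $C_1=C_2=0$, so such $\Phi$ indeed preserve the subclass. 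I expect the main obstacle to be the bookkeeping that isolates the spatial dependence: one must argue cleanly that $H$ contributes nothing to the $\tilde x,\tilde y$-dependence of $\tilde H$, so that for a special $H$ no cancellation against the transformation terms is possible, and therefore the coefficients $C_1,C_2$ must lose their $x,y$-dependence for \emph{every} admissible $H$ in the subclass. This is what guarantees that no element of $G^\sim_2\setminus G^\sim_4$ can induce an admissible transformation of the subclass, completing the identification of $G^\sim_4$ and the proof of normalization.
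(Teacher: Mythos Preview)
Your argument is correct and follows essentially the same route as the paper: starting from Theorem~\ref{TheoremOnNormalizedSuperclass2}, reducing $\tilde H_{\tilde x}=\tilde H_{\tilde y}=0$ to the vanishing of the $x,y$-derivatives of the $\zeta_x,\zeta_y$-coefficients in~\eqref{EqTransFromGequiv2}, and splitting. The paper phrases the reduction as ``$\tilde H_{\tilde x}=\tilde H_{\tilde y}=0$ is equivalent to $\tilde H_x=\tilde H_y=0$'' and then differentiates~\eqref{EqTransFromGequiv2} and splits with respect to $\zeta_x,\zeta_y$; your explicit isolation of $C_1,C_2$ is the same computation. One small slip: in the converse direction the conditions $\lambda_t=0$, $\sigma=0$, $\delta_{ij}=0$ do \emph{not} force $C_1=C_2=0$ (since $\delta_x,\delta_y$ can still be nonzero functions of~$t$), only that $C_1,C_2$ depend on~$t$ alone---which is exactly what you need, so the conclusion is unaffected.
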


\begin{proof}
As any admissible transformation of the class~\eqref{eq:class2} has the form~\eqref{EqTransFromGequiv2} and, therefore,
the vorticity and its derivatives are transformed according to~\eqref{EqTransOfZetaInGequiv2},
the system $\tilde H_{\tilde x}=0$, $\tilde H_{\tilde y}=0$ is equivalent to the system $\tilde H_x=0$, $\tilde H_y=0$.
After differentiating the last equation in~\eqref{EqTransFromGequiv2} with respect to~$x$ and~$y$
and splitting with respect to~$\zeta_x$ and~$\zeta_y$, we derive all the above additional constraints on transformation parameters.
The rest of the proof is similar to the end of the proof of Theorem~\ref{TheoremOnNormalizedSuperclass2}.
\end{proof}

\begin{corollary}\label{CorollaryOnNormalizedSuperclass5}
The subclass of the class~\eqref{eq:class2} singled out by the constraints $H_\zeta=0$ and $H_i=0$ is normalized.
Its equivalence group~$G^\sim_5$ consists of the elements of~$G^\sim_2$ with  $\tau_{tt}=0$, $\lambda_t=0$, $\sigma=0$ and $\delta_{ij}=0$.
\end{corollary}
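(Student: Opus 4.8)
The plan is to mirror the arguments of Corollaries~\ref{CorollaryOnNormalizedSuperclass3} and~\ref{CorollaryOnNormalizedSuperclass4}, exploiting that the doubly-constrained subclass sits inside the normalized class~\eqref{eq:class2}. First I would invoke Theorem~\ref{TheoremOnNormalizedSuperclass2}: since the subclass is contained in class~\eqref{eq:class2}, every admissible transformation of it is induced by an element of~$G^\sim_2$, so it suffices to single out those elements of~$G^\sim_2$ that map the subclass into itself, i.e.\ that send an arbitrary element~$H$ with $H_\zeta=0$ and $H_i=0$ to an~$\tilde H$ with $\tilde H_{\tilde\zeta}=0$ and $\tilde H_{\tilde x}=\tilde H_{\tilde y}=0$.

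The key preliminary observation is that, for such an~$H$, when $\tilde H$ from~\eqref{EqTransFromGequiv2} is regarded as a function of the new jet variables, the pulled-back term $\varepsilon\tau_t^{-2}H$ contributes neither to $\tilde H_{\tilde\zeta}$ nor to $\tilde H_{\tilde x},\tilde H_{\tilde y}$. Indeed, by~\eqref{EqTransOfZetaInGequiv2} the quantities $\zeta_i$ and $\zeta_{ij}$ are expressed through $\tilde\zeta_i,\tilde\zeta_{ij}$ with coefficients depending on~$t$ alone (the maps $Z^i$ being affine in $(x,y)$), while $\zeta$ maps to the $\tilde\zeta$-type variable; hence $H=H(t,\zeta_x,\zeta_y,\zeta_{ij})$ depends, as a function of the new variables, only on $\tilde t,\tilde\zeta_i,\tilde\zeta_{ij}$ and is free of $\tilde\zeta,\tilde x,\tilde y$. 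This decouples the two closure conditions completely.

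Granting this, I would split the analysis exactly as before. For the $\tilde\zeta$-condition the only surviving $\tilde\zeta$-dependence of~$\tilde H$ is the explicit term $-\varepsilon\tau_t^{-3}\tau_{tt}\,\zeta$, which upon using $\zeta=\varepsilon\tau_t\tilde\zeta+(\text{terms free of }\tilde\zeta)$ yields $\tilde H_{\tilde\zeta}=-\tau_{tt}/\tau_t^{\,2}$; requiring this to vanish gives $\tau_{tt}=0$, reproducing Corollary~\ref{CorollaryOnNormalizedSuperclass3}. For the spatial conditions I would use, as in Corollary~\ref{CorollaryOnNormalizedSuperclass4}, that $\tilde H_{\tilde x}=\tilde H_{\tilde y}=0$ is equivalent to $\tilde H_x=\tilde H_y=0$; differentiating the remaining explicit $(x,y)$-dependent terms of~\eqref{EqTransFromGequiv2}, namely the $-\lambda_t\lambda^{-1}(x\zeta_x+y\zeta_y)$ term together with the $\delta$- and $\sigma$-terms, and splitting with respect to $\zeta_x$ and $\zeta_y$, forces $\lambda_t=0$, $\sigma=0$ and $\delta_{ij}=0$. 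Imposing both sets of conditions simultaneously therefore selects precisely the elements of~$G^\sim_2$ with $\tau_{tt}=0$, $\lambda_t=0$, $\sigma=0$ and $\delta_{ij}=0$.

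For the converse I would simply read off~\eqref{EqTransFromGequiv2}: under these four constraints all explicit $\zeta$-, $x$- and $y$-dependent terms of~$\tilde H$ drop out, so $\tilde H$ inherits $\tilde H_{\tilde\zeta}=0$ and $\tilde H_i=0$ and the subclass is mapped into itself; this establishes normalization and identifies $G^\sim_5$ as the stated intersection $G^\sim_3\cap G^\sim_4$. The only point requiring genuine care --- and the part I expect to be the main obstacle --- is the decoupling observation of the second paragraph: one must check that the $H$-term really carries no hidden $\tilde\zeta$- or $\tilde x,\tilde y$-dependence, since otherwise the two conditions could interact and the constraints of Corollaries~\ref{CorollaryOnNormalizedSuperclass3} and~\ref{CorollaryOnNormalizedSuperclass4} would not combine cleanly.
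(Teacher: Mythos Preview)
Your argument is correct, and the ``decoupling'' observation you flag is exactly the right point to verify. However, the paper's own proof is considerably shorter and takes a different, more structural route: it simply observes that the subclass in question is the intersection of the normalized subclasses from Corollaries~\ref{CorollaryOnNormalizedSuperclass3} and~\ref{CorollaryOnNormalizedSuperclass4}, and therefore is itself normalized with $G^\sim_5=G^\sim_3\cap G^\sim_4$. No re-examination of the formula~\eqref{EqTransFromGequiv2} is needed, because the constraints $\tau_{tt}=0$ and $\lambda_t=0,\ \sigma=0,\ \delta_{ij}=0$ have already been derived separately in those corollaries; taking the intersection of the two equivalence groups immediately gives the stated parameter restrictions.

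What your approach buys is explicitness: you actually check that the $H$-term carries no hidden $\tilde\zeta$-, $\tilde x$-, or $\tilde y$-dependence under both constraints simultaneously, which is the concrete content underlying the paper's intersection argument (and is what guarantees that an admissible transformation of the doubly-constrained subclass really is admissible for each singly-constrained subclass separately). The paper's approach buys economy: once Corollaries~\ref{CorollaryOnNormalizedSuperclass3} and~\ref{CorollaryOnNormalizedSuperclass4} are in hand, the result follows in two lines without touching~\eqref{EqTransFromGequiv2} again. Your longer computation is not wasted, though --- it makes transparent why the two sets of constraints do not interact, which the paper leaves implicit.
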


\begin{proof}
The subclass under consideration is normalized as it is the intersection of the normalized subclasses from
Corollaries~\ref{CorollaryOnNormalizedSuperclass3} and~\ref{CorollaryOnNormalizedSuperclass4}.
Therefore, we also have $G^\sim_5=G^\sim_3\cap G^\sim_4$.
\end{proof}

\begin{remark*}
For the subclass from Corollary~\ref{CorollaryOnNormalizedSuperclass5},
the kernel of point symmetry groups is essentially extended in comparison with the whole class~\eqref{eq:class2}.
It is formed by the projections of elements of the equivalence group~$G^\sim_2$,
associated with the parameter-functions~$\gamma^1$ and~$\gamma^2$ and the parameter-function~$\delta$ depending only on~$t$,
to the space of independent and dependent variables, cf.\ Section~\ref{SectionOnParameterizationViaDirectGroupClassification}.
\end{remark*}

A further narrowing is given by the condition that the arbitrary element~$H$ with $H_\zeta=0$ is a total divergence with respect to the space variables,
i.e., $H=\mathrm D_i f^i$ for some differential functions $f^i=f^i(t,x,y,\zeta_x,\zeta_y)$.
The corresponding subclass rewritten in the terms of~$f^i$ coincides with the class~\eqref{eq:vortdirect}
and is singled out from the class~\eqref{eq:class2} by the constraints $H_\zeta=0$ and ${\sf E}H=0$,
where ${\sf E}=\p_\zeta-\mathrm D_i\p_{\zeta_i}+\sum_{i\leqslant j}\mathrm D_i\mathrm D_j\p_{\zeta_{ij}}+\dots$ is the associated Euler operator.
In this Euler operator, the role of independent and dependent variables is played by $(x,y)$ and $\zeta$, respectively,
and the variable $t$ is assumed as a parameter.
The vorticity~$\zeta$ can be considered in~$\sf E$ as the dependent variable instead of~$\psi$ since
the arbitrary element~$H$ depends only on combinations of derivatives of~$\psi$ being derivatives of~$\zeta$.

\begin{remark*}
It is obvious that the arbitrary element~$H$ satisfies the constraints $H_\zeta=0$ and ${\sf E}H=0$ if it is represented in the
form $H=\mathrm D_i f^i$ for some differential functions $f^i=f^i(t,x,y,\zeta_x,\zeta_y)$.
The converse claim should be proved.
Thus, the constraint ${\sf E}H=0$ implies the representation $H=\mathrm D_i f^i$
for some differential functions $f^i(t,x,y,\zeta,\zeta_x,\zeta_y)$, which may depend on~$\zeta$.
Substituting this representation into the constraint $H_\zeta=0$ and splitting the resulting equations with respect to
the second derivatives of~$\zeta$, we obtain the following system of PDEs for the functions~$f^i$:
$f^i_{\zeta i}+f^i_{\zeta\zeta}\zeta_i=0$, $f^1_{\zeta\zeta_1}=0$, $f^2_{\zeta\zeta_2}=0$, $f^1_{\zeta\zeta_2}+f^2_{\zeta\zeta_1}=0$.
Its general solution has the form $f^1= D_2\Psi+\tilde f^1$ and $f^2=-D_1\Psi+\tilde f^2$
for some smooth functions $\Psi=\Psi(t,x,y,\zeta)$ and $\tilde f^i=\tilde f^i(t,x,y,\zeta_x,\zeta_y)$.
The first summands in the expressions for~$f^i$ can be neglected due to the gauge equivalence
in the set of arbitrary elements $(f^1,f^2)$.
As a result, we construct the necessary representation for the arbitrary element~$H$.%
\end{remark*}

\begin{corollary}\label{CorollaryOnNormalizedSuperclass6}
The class~\eqref{eq:vortdirect} is normalized.
The equivalence group~$G^\sim_6$ of this class represented in terms of the arbitrary element~$H$
consists of the elements of~$G^\sim_2$ with $\tau_{tt}=0$ and $\lambda_t=0$.
The arbitrary elements $f^i$ are transformed in the following way:
\begin{gather}\label{EqEquivTransForFi}
\begin{split}
&\tilde f^1=\varepsilon\lambda\frac{f^1\cosbeta-f^2\sinbeta}{\tau_t^2}
+\biggl(\frac{\delta}{\tau_t\lambda}+\frac{\sigma}{2\tau_t\lambda}(x^2+y^2)-\frac{\varepsilon\chi}{\lambda^2}\biggr)(\zeta_x\sinbeta+\zeta_y\cosbeta)
\\
&\phantom{\tilde f^1=}
{}+(\varepsilon\lambda^2\beta_{tt}+\tau_t\sigma_t)\frac{x\cosbeta-y\sinbeta}{\tau_t{}^2\lambda}
-\varepsilon\frac{\rho_x\sinbeta+\rho_y\cosbeta}{\lambda^2},
\\
&\tilde f^2=\lambda\frac{f^1\sinbeta+f^2\cosbeta}{\tau_t{}^2}
-\varepsilon\biggl(\frac{\delta}{\tau_t\lambda}+\frac{\sigma}{2\tau_t\lambda}(x^2+y^2)-\frac{\varepsilon\chi}{\lambda^2}\biggr)(\zeta_x\cosbeta-\zeta_y\sinbeta)
\\
&\phantom{\tilde f^2=}
{}+\varepsilon(\varepsilon\lambda^2\beta_{tt}+\tau_t\sigma_t)\frac{x\sinbeta+y\cosbeta}{\tau_t{}^2\lambda}
+\frac{\rho_x\cosbeta-\rho_y\sinbeta}{\lambda^2},
\end{split}
\end{gather}
where $\chi=\chi(t)$ and $\rho=\rho(t,x,y)$ are arbitrary functions of their arguments.
\end{corollary}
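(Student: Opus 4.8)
The plan is to build on the hierarchy of normalized classes already constructed and to reduce the statement to two scalar constraints on the parameters of~$G^\sim_2$. By the remark preceding the corollary, the class~\eqref{eq:vortdirect}, written in terms of the arbitrary element~$H$, is precisely the subclass of~\eqref{eq:class2} singled out by $H_\zeta=0$ together with ${\sf E}H=0$; in particular it is contained in the subclass of Corollary~\ref{CorollaryOnNormalizedSuperclass3}, whose equivalence group~$G^\sim_3$ consists of the elements of~$G^\sim_2$ with $\tau_{tt}=0$. Since that subclass is normalized, every admissible transformation of~\eqref{eq:vortdirect} is induced by some $\Phi\in G^\sim_3$, so it remains only to (i) determine which such~$\Phi$ additionally preserve ${\sf E}H=0$ and (ii) read off the induced transformation of the flux components~$f^i$.

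For step~(i) I would impose that $\tilde H$ again be a total spatial divergence of a flux depending only on $(\tilde t,\tilde x,\tilde y,\tilde\zeta_{\tilde x},\tilde\zeta_{\tilde y})$, i.e.\ ${\sf E}\tilde H=0$, starting from the transformation rule~\eqref{EqTransFromGequiv2} with $\tau_{tt}=0$ already in force. Because the spatial change of variables in~\eqref{EqTransFromGequiv2} is, at each fixed~$t$, an invertible rotation--dilation and the vorticity transforms affinely under~\eqref{EqTransOfZetaInGequiv2}, total divergences are carried to total divergences; hence it is equivalent to test ${\sf E}=0$, in the variables $(x,y)$ with dependent variable~$\zeta$, on the right-hand side of~\eqref{EqTransFromGequiv2} as written. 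Term by term: the contribution $\tfrac{\varepsilon}{\tau_t^2}H$ is annihilated since ${\sf E}H=0$; every summand built from $\delta$ and $\sigma$ has the form $-\mathrm D_x(A\zeta)-\mathrm D_y(B\zeta)$ with $A_x=B_y$ and is therefore a total divergence; and the remaining summands are free of~$\zeta$ and its derivatives. The sole obstruction is $-\tfrac{\varepsilon}{\tau_t^2}\tfrac{\lambda_t}{\lambda}(x\zeta_x+y\zeta_y)$, and since ${\sf E}(x\zeta_x+y\zeta_y)=-2\ne0$ one obtains ${\sf E}\tilde H=2\varepsilon\lambda_t/(\tau_t^2\lambda)$. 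This vanishes exactly when $\lambda_t=0$, which together with $\tau_{tt}=0$ pins down~$G^\sim_6$ as claimed.

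For step~(ii) I would recover~\eqref{EqEquivTransForFi} by matching the two divergence expressions directly. With $\lambda_t=0$ the map $(x,y)\mapsto(\tilde x,\tilde y)$ is a $t$-dependent rotation through~$\beta$ followed by a constant dilation~$\lambda$, so that $\tilde{\mathrm D}_{\tilde x}=\lambda^{-1}(\cosbeta\,\mathrm D_x-\sinbeta\,\mathrm D_y)$ and $\tilde{\mathrm D}_{\tilde y}=\varepsilon\lambda^{-1}(\sinbeta\,\mathrm D_x+\cosbeta\,\mathrm D_y)$. Writing $H=\mathrm D_if^i$, $\tilde H=\tilde{\mathrm D}_i\tilde f^i$, substituting these relations and the vorticity rule~\eqref{EqTransOfZetaInGequiv2}, and equating with~\eqref{EqTransFromGequiv2}, the leading part forces $(\tilde f^1,\tilde f^2)$ to be the image of $(f^1,f^2)$ under this rotation--dilation with weight $\lambda/\tau_t^2$, while the $\delta$-, $\sigma$-, $\beta_{tt}$- and $\sigma_t$-terms supply the inhomogeneous contributions. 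Finally, since the representation $H=\mathrm D_if^i$ is unique only up to the gauge freedom described in the remark (addition of a divergence-free flux, which in the finite transformations is parameterized by $\chi=\chi(t)$ and $\rho=\rho(t,x,y)$), these two functions enter and complete the formulas~\eqref{EqEquivTransForFi}.

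The step I expect to be the main obstacle is~(ii): isolating the obstruction~$\lambda_t$ in~(i) is a one-line Euler-operator computation, but correctly matching the inhomogeneous $\delta$, $\sigma$, $\beta_{tt}$ and $\sigma_t$ terms while simultaneously folding in the two-parameter gauge ambiguity, so as to reproduce~\eqref{EqEquivTransForFi} verbatim, requires careful tracking of how the $t$-dependence of~$\beta$ and of the dilation propagates through $\tilde{\mathrm D}_i$. A natural consistency check is that switching off the gauge and inhomogeneous parameters must reduce~\eqref{EqEquivTransForFi} to the plain rotation--dilation of the vector $(f^1,f^2)$, and that the outcome must agree, at the infinitesimal level, with the action of the operators $\tilde{\mathcal R}$, $\tilde{\mathcal H}$ and $\tilde{\mathcal G}$ on the~$f^i$ in Theorem~\ref{TheoremOnEquivAlgebra1}.
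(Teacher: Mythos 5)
Your step~(i) is exactly the paper's argument: you embed the class~\eqref{eq:vortdirect} in the normalized subclass of Corollary~\ref{CorollaryOnNormalizedSuperclass3}, so that every admissible transformation is induced by an element of~$G^\sim_2$ with $\tau_{tt}=0$; you note that total divergences in the new space variables correspond to total divergences in the old ones because the spatial transformations are affine at each~$t$; and you apply the Euler operator~${\sf E}$ to the right-hand side of~\eqref{EqTransFromGequiv2}, where the only surviving term is the one proportional to~$\lambda_t$, since ${\sf E}(x\zeta_x+y\zeta_y)=-2\ne0$. This is precisely how the paper derives the constraint $\lambda_t=0$, and your computation is correct.

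The gap sits in step~(ii), at the point you yourself flag as the main obstacle. Matching $\tilde H=\tilde{\mathrm D}_j\tilde f^j$ against an explicit divergence representation of the right-hand side of~\eqref{EqTransFromGequiv2} determines $(\tilde f^1,\tilde f^2)$ only up to a pair of differential functions of $(t,x,y,\zeta_x,\zeta_y)$ whose total divergence vanishes, and the content of~\eqref{EqEquivTransForFi} is that this residual ambiguity is \emph{exactly} the two-parameter family generated by $\chi(t)$ and $\rho(t,x,y)$. You justify this by citing the gauge remark, but that remark only records the infinitesimal gauge algebra $\langle\tilde{\mathcal G}(\rho),\,\tilde{\mathcal H}(\chi)-\tilde{\mathcal Z}(\chi)\rangle$; it does not prove the finite-transformation statement that every null divergence built from functions of $(t,x,y,\zeta_x,\zeta_y)$ arises from these parameters. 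The paper closes this gap as follows: it writes $\tilde H=\mathrm D_ih^i$ with explicit~$h^i$, observes that $\tilde{\mathrm D}_j(\tilde f^j-Z^j_ih^i)=0$, invokes Theorem~4.24 of~\cite{olve86Ay} to obtain a potential~$Q$ with $\tilde f^1-Z^1_ih^i=-\tilde{\mathrm D}_2Q$ and $\tilde f^2-Z^2_ih^i=\tilde{\mathrm D}_1Q$, and then uses the requirement that $\tilde{\mathrm D}_iQ$ depend only on $(t,x,y,\zeta_x,\zeta_y)$ to force $Q=\chi(t)\zeta+\rho(t,x,y)$. Without this null-divergence argument (or an equivalent one), your plan shows that all transformations of the form~\eqref{EqEquivTransForFi} are admissible, but not that they exhaust the induced transformations of the arbitrary elements, which is what the corollary claims.
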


\begin{proof}
The class~\eqref{eq:vortdirect} is contained in the normalized subclass of the class~\eqref{eq:class2} singled out by the constraint $H_\zeta=0$.
Therefore, any admissible transformation of the class~\eqref{eq:vortdirect} is generated by an element of~$G^\sim_2$ with $\tau_{tt}=0$,
and the corresponding transformations of the space variables are affine with respect to these variables, $Z^i_{jk}=0$.
Then $\tilde{\mathrm D}_j\tilde f^j=\mathrm D_i(\lambda^{-2}Z^j_i\tilde f^j)$,
i.e., the differential function~$\smash{\tilde H}$ is a total divergence with respect to the new space variables
if and only if it is a total divergence with respect to the old space variables.
Applying the Euler operator~$\sf E$ to the last equality in~\eqref{EqTransFromGequiv2} under the conditions
$H_\zeta=0$, $\tilde H_{\tilde\zeta}=0$, ${\sf E}H=0$, $\tilde {\sf E}\tilde H=0$ and $\tau_{tt}=0$,
we derive the additional constraint $\lambda_t=0$.
The remaining part of the proof of normalization of the class~\eqref{eq:vortdirect} and its equivalence group is analogous to
the end of the proof of Theorem~\ref{TheoremOnNormalizedSuperclass2}.

In order to construct the transformations of the arbitrary elements~$f^i$, we represent the right hand side of the last equality
in~\eqref{EqTransFromGequiv2} as a total divergence: $\tilde H=\mathrm D_ih^i$, where
\begin{gather*}
h^1=\frac{\varepsilon}{\tau_t^2}(f^1+\beta_{tt}x)
+\frac{\sigma_t}{\tau_t\lambda^2}x+\biggl(\delta+\frac\sigma2(x^2+y^2)\biggr)\frac{\zeta_y}{\tau_t\lambda^2},
\\
h^2=\frac{\varepsilon}{\tau_t^2}(f^2+\beta_{tt}y)
+\frac{\sigma_t}{\tau_t\lambda^2}y-\biggl(\delta+\frac\sigma2(x^2+y^2)\biggr)\frac{\zeta_x}{\tau_t\lambda^2},
\end{gather*}
As $\tilde H=\tilde{\mathrm D}_j\tilde f^j=\mathrm D_ih^i=\tilde{\mathrm D}_jZ^j_ih^i$,
the pair of the differential functions $\tilde f^j-Z^j_ih^i$ is a null divergence, $\tilde{\mathrm D}_i(\tilde f^j-Z^j_ih^i)=0$.
In view of Theorem 4.24 from~\cite{olve86Ay} 
there exists a differential function~$Q$ depending on $t$, $x$, $y$ and derivatives of~$\zeta$ such that
$\tilde f^1-Z^1_ih^i=-\tilde{\mathrm D}_2Q$ and $\tilde f^2-Z^2_ih^i=\tilde{\mathrm D}_2Q$.
As $\tilde{\mathrm D}_iQ$ and, therefore, $\mathrm D_iQ$ should be functions of $t$, $x$, $y$, $\zeta_x$ and $\zeta_y$,
the function $Q$ is represented in the form $Q=\chi(t)\zeta+\rho(t,x,y)$ for some smooth functions $\chi=\chi(t)$ and $\rho=\rho(t,x,y)$.
\looseness=-1
\end{proof}

\begin{remark*}
The equivalence transformations associated with the parameter-functions~$\chi$ and~$\rho$
are identical with respect to both the independent and dependent variables,
i.e., they transform only arbitrary elements with no effect on the corresponding equation
and, therefore, are \emph{trivial} \cite[p.~53]{lisl92Ay} 
or \emph{gauge} \cite[Section 2.5]{popo10Ay} equivalence transformations.
These transformations arise due to the special representation of the arbitrary element~$H$ as a total divergence
and form a normal subgroup of the entire equivalence group considered in terms of the arbitrary elements $f^1$ and~$f^2$,
called the \emph{gauge equivalence group} of the class~\eqref{eq:vortdirect}.
\end{remark*}

\begin{remark*}
The continuous component of unity of the group~$G^\sim_6$ is singled out from~$G^\sim_6$ by the conditions $\tau_t>0$ and $\varepsilon=1$.
Therefore, a complete set of independent discrete transformations in~$G^\sim_6$ is exhausted by
alternating signs either in the tuple $(t,\psi)$ or in the tuple $(y,\psi,f^1)$.
\end{remark*}

Consider the subclass of the class~\eqref{eq:vortdirect}, singled out by the further auxiliary equation~$f^i_j = 0$, i.e.,
the class of equations
\begin{equation}\label{eq:vortdirectSpaceInd}
\zeta_t + \{\psi,\zeta\} = \mathrm D_i f^i(t,\zeta_x,\zeta_y), \quad \zeta := \psi_{ii},
\end{equation}
with the arbitrary elements $f^i=f^i(t,\zeta_x,\zeta_y)$.

\begin{remark*}
Rewritten in the terms of~$H$, the class~\eqref{eq:vortdirectSpaceInd} is a well-defined subclass of~\eqref{eq:class2}.
It is singled out from the class~\eqref{eq:class2} by the constraints
${\sf E}H=0$, $H_\zeta=0$, $H_i=0$ and $\zeta_{ij}H_{\zeta_{ij}}=H$.
Indeed, the representation $H=\mathrm D_i f^i(t,\zeta_x,\zeta_y)$ obviously implies that
the arbitrary element~$H$ does not depend on~$x$, $y$ and~$\zeta$, is annulated by the Euler operator~$\sf E$ and
is a (homogenous) linear function in the totality of the derivatives~$\zeta_{ij}$.
Hence all the above constraints are necessary.
Conversely, the constraint ${\sf E}H=0$ implies that the arbitrary element~$H$ is affine in the totality of~$\zeta_{ij}$
and, therefore, in view of the constraint $\zeta_{ij}H_{\zeta_{ij}}=H$ it is a (homogenous) linear function in these derivatives of~$\zeta$.
As a result, we have the representation $H=h^{ij}\zeta_{ij}$,
where the coefficients $h^{ij}$, $h^{12}=h^{21}$, depend solely on~$t$, $\zeta_x$ and~$\zeta_y$ since $H_\zeta=0$ and $H_i=0$.
Then the constraint ${\sf E}H=0$ is equivalent to the single equation \[2h^{12}_{\zeta_1\zeta_2}=h^{11}_{\zeta_2\zeta_2}+h^{22}_{\zeta_1\zeta_1}\]
whose general solutions is represented in the form
$h^{11}=f^1_{\zeta_1}$, $h^{12}=f^1_{\zeta_2}+f^2_{\zeta_1}$ and $h^{22}=f^2_{\zeta_2}$ for some differential functions
$f^i=f^i(t,\zeta_x,\zeta_y)$. This finally gives the necessary representation for~$H$.
\end{remark*}

\begin{remark*}
In view of the previous remark, the subclass of the class~\eqref{eq:class2},
singled out by the constraints ${\sf E}H=0$, $H_\zeta=0$ and $H_i=0$ is a proper
superclass for the class~\eqref{eq:vortdirectSpaceInd} rewritten in the terms of~$H$.
This superclass of~\eqref{eq:vortdirectSpaceInd} is normalized since it is the intersection of
the normalized class from Corollary~\ref{CorollaryOnNormalizedSuperclass5} and the normalized class~\eqref{eq:vortdirect}.
Its equivalence group coincides with the group~$G^\sim_5$
described in Corollary~\ref{CorollaryOnNormalizedSuperclass5}.
\end{remark*}

In a way analogous to the above proofs,
the normalization of the superclass and formulas~\eqref{EqTransFromGequiv2} and~\eqref{EqEquivTransForFi}
imply the following assertion.

\begin{corollary}\label{CorollaryOnNormalizedSuperclass7}
The class~\eqref{eq:vortdirectSpaceInd} is normalized.
The equivalence group~$G^\sim_7$ of this class represented in terms of the arbitrary element~$H$
consists of the elements of~$G^\sim_2$ with $\tau_{tt}=0$, $\lambda_t=0$, $\beta_{tt}=0$, $\sigma=0$ and $\delta_i=0$.
The arbitrary elements $f^i$ are transformed according to~\eqref{EqEquivTransForFi},
where additionally $\rho_{ij}=0$.
\end{corollary}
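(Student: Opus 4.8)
The plan is to mimic the strategy used for Corollaries~\ref{CorollaryOnNormalizedSuperclass3}--\ref{CorollaryOnNormalizedSuperclass6}: start from a normalized superclass, inherit its equivalence group, read off the additional restrictions forced by the narrowing constraint $f^i_j=0$, and finally check that these restrictions are also sufficient. The class~\eqref{eq:vortdirectSpaceInd}, rewritten in terms of~$H$, sits inside the subclass of~\eqref{eq:class2} cut out by ${\sf E}H=0$, $H_\zeta=0$ and $H_i=0$, which by the preceding remark is normalized with equivalence group~$G^\sim_5$ from Corollary~\ref{CorollaryOnNormalizedSuperclass5}. Hence every admissible transformation of~\eqref{eq:vortdirectSpaceInd} is induced by an element of~$G^\sim_5$, i.e.\ by an element of~$G^\sim_2$ with $\tau_{tt}=0$, $\lambda_t=0$, $\sigma=0$ and $\delta_{ij}=0$. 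This already delivers four of the defining constraints of~$G^\sim_7$; it remains to sharpen $\delta_{ij}=0$ to $\delta_i=0$ and to derive $\beta_{tt}=0$ and $\rho_{ij}=0$.

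To obtain these, I would use that~\eqref{eq:vortdirectSpaceInd} is also a subclass of the normalized class~\eqref{eq:vortdirect}, so the same admissible transformation acts on the arbitrary elements~$f^i$ through the explicit formulas~\eqref{EqEquivTransForFi} of Corollary~\ref{CorollaryOnNormalizedSuperclass6}. The key step is to substitute $\sigma=0$ into~\eqref{EqEquivTransForFi} and impose the defining constraint of the narrow class, namely that $\tilde f^i$ be functions of~$\tilde t$, $\tilde\zeta_x$ and~$\tilde\zeta_y$ alone (equivalently $\tilde f^i_j=0$). The crucial observation is that $\tilde t=\tau(t)$ is invertible and, by~\eqref{EqTransOfZetaInGequiv2}, $(\tilde\zeta_x,\tilde\zeta_y)$ is an invertible, purely $t$-dependent linear combination of~$(\zeta_x,\zeta_y)$; consequently the leading terms of~\eqref{EqEquivTransForFi}, built from~$f^1$, $f^2$ and the rotated combination $\zeta_x\sinbeta+\zeta_y\cosbeta$, are automatically functions of the admissible variables $(\tilde t,\tilde\zeta_x,\tilde\zeta_y)$ and carry no genuine space dependence.

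All remaining terms must therefore reduce to functions of~$t$ and the $\zeta$-derivatives. Splitting with respect to the space variables $x,y$ and the derivatives $\zeta_x,\zeta_y$ then isolates three contributions: the coefficient $\delta/(\tau_t\lambda)$ multiplying $\zeta_x\sinbeta+\zeta_y\cosbeta$, which forces $\delta_i=0$ (so that, together with the inherited $\delta_{ij}=0$, the function~$\delta$ depends on~$t$ alone); the term proportional to~$\beta_{tt}$, which is explicitly affine in~$(x,y)$ and hence forces $\beta_{tt}=0$; and the term containing~$\rho_x,\rho_y$, whose space-independence forces $\rho_{ij}=0$, so that~$\rho$ is affine in~$(x,y)$ with $t$-dependent coefficients~$\rho^i$. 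This yields precisely the constraints defining~$G^\sim_7$.

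Finally I would verify sufficiency, which is where one must be careful rather than merely formal. Under $\tau_{tt}=0$, $\lambda_t=0$, $\beta_{tt}=0$, $\sigma=0$, $\delta_i=0$ and $\rho_{ij}=0$, each surviving term of~\eqref{EqEquivTransForFi} is manifestly a function of~$\tilde t$, $\tilde\zeta_x$ and~$\tilde\zeta_y$ only, so any such transformation maps an equation of the form~\eqref{eq:vortdirectSpaceInd} to another equation of the same form. Combined with the normalization of the superclass, which guarantees that no admissible transformation lies outside~$G^\sim_5$ in the first place, this shows that every admissible transformation of~\eqref{eq:vortdirectSpaceInd} is induced by~$G^\sim_7$, i.e.\ the class is normalized, completing the argument along the lines of the end of the proof of Theorem~\ref{TheoremOnNormalizedSuperclass2}. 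The only real obstacle is the bookkeeping in the splitting step: one must separate the ``allowed'' space dependence hidden inside~$\tilde\zeta_x,\tilde\zeta_y$ from the genuinely forbidden explicit dependence on~$x,y$, since naive splitting would otherwise over-constrain the leading terms.
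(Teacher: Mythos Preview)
Your proposal is correct and follows essentially the same approach as the paper, which merely states that the result follows from the normalization of the superclass (the intersection class with equivalence group~$G^\sim_5$ from the preceding remark) together with formulas~\eqref{EqTransFromGequiv2} and~\eqref{EqEquivTransForFi}. You have correctly elaborated the splitting argument that extracts $\delta_i=0$, $\beta_{tt}=0$ and $\rho_{ij}=0$ from the requirement that the transformed~$\tilde f^i$ be independent of the space variables.
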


\begin{remark*}
The above consideration of normalized classes is intended for the description of invariant parameterizations
of the forms~\eqref{eq:vortdirect} and~\eqref{eq:vortdirectSpaceInd}.
The hierarchy of normalized classes constructed is, in some sense, minimal and optimal for this purpose.
It can be easily extended with related normalized classes.
For instance, the subclass singled out from the class~\eqref{eq:class2} by the constraints ${\sf E}H=0$ is normalized.
Other hierarchies of normalized classes, which are related to the vorticity equation~\eqref{eq:vort}
and different from the hierarchy presented, can be constructed.
\end{remark*}

\begin{remark*}
In fact, all subclasses of generalized vorticity equations studied in this section are strongly normalized,
cf.~\cite{popo10Ay}.
\end{remark*}

\subsection{Parameterization via direct group classification}\label{SectionOnParameterizationViaDirectGroupClassification}

As proved in Section~\ref{sec:AdmTrans} (see Corollary~\ref{CorollaryOnNormalizedSuperclass7}), the class~\eqref{eq:vortdirectSpaceInd} is normalized.
Hence the complete group classification for this class can be obtained within the algebraic method.
Another way to justify the sufficiency of the algebraic method is to check the weak normalization of the class~\eqref{eq:vortdirectSpaceInd}
in infinitesimal sense, i.e., the condition that the linear span the maximal Lie invariance algebras of equations from the class~\eqref{eq:vortdirectSpaceInd}
is contained in the projection of its equivalence algebra~$\mathfrak g^\sim_2$ (cf.\ Section~\ref{sec:EquivAlgebra}) to the space of independent and dependent variables. A vector field~$Q$ in the space of the variables $(t,x,y,\psi)$ has the form $Q=\xi^\mu\p_\mu+\eta\p_\psi$,
where the coefficients $\xi^\mu$ and $\eta$ smoothly depend on $(t,x,y,\psi)$.
For $Q$ to be a Lie symmetry operator of an equation from the class~\eqref{eq:vortdirectSpaceInd}, its coefficients should satisfy the following system of determining equations that do not involve the arbitrary elements $(f^1,f^2)$:
\begin{gather*}
\xi^\mu_\psi=0,\quad \xi^0_i = 0, \quad \xi^0_{tt}=0,  \quad \xi^i_{jk}=0, \quad \xi^1_{1t}=0, \quad \xi^1_1=\xi^2_2, \quad \xi^1_2=-\xi^2_1, \\
\eta_{\psi\psi} = 0, \quad \eta_{\psi t}=0, \quad \eta_{\psi 1}=\xi^1_t, \quad \eta_{\psi 2}=-\xi^2_t, \quad \eta_\psi-2\xi^1_1 + \xi^0_t = 0,
\end{gather*}
The integration of the above system immediately implies that $Q\in\pr\mathfrak g^\sim_2$.

The equivalence algebra~$\mathfrak g^\sim_2$ can be represented as a semidirect sum
$\mathfrak g^\sim_2=\tilde{\mathfrak i}\mathbin{\mbox{$\lefteqn{\hspace{.67ex}\rule{.4pt}{1.2ex}}{\ni}$}}\tilde{\mathfrak a}$,
where
$\tilde{\mathfrak i}=\langle \tilde\XX(\gamma^1),\,\tilde\YY(\gamma^2),\,\tilde{\mathcal Z}(\chi)\rangle$
and
$\tilde{\mathfrak a}=\langle \tilde \DDD_1, \tilde \DDD_2, \p_t, \tilde\JJ^1, \tilde\JJ^t,\,
\tilde{\mathcal K}(\delta),\,\tilde{\mathcal G}(\rho^1x+\rho^2y)\rangle$
are an ideal and a subalgebra of~$\mathfrak g^\sim_2$, respectively.
Here $\gamma^1$, $\gamma^2$, $\rho^1$, $\rho^2$, $\delta$ and~$\chi$ run through the set of smooth functions of the variable~$t$
and we use the notation $\tilde\JJ^1=\tilde\JJ(1)$, $\tilde\JJ^t=\tilde\JJ(t)$ and
$\tilde{\mathcal K}(\delta)=\tilde{\mathcal H}(\delta)-\tilde{\mathcal Z}(\delta)$.
The intersection (\emph{kernel}) of the maximal Lie invariance algebras of equations from class~\eqref{eq:vortdirectSpaceInd} is
\[
\mathfrak g^\cap_2=\langle\XX(\gamma^1),\,\YY(\gamma^2),\,\ZZ(\chi)\rangle=\pr\,\tilde{\mathfrak i}.
\]
In other words, the complete infinite dimensional part $\pr\,\tilde{\mathfrak i}$ of the projection of the equivalence algebra~$\mathfrak g^\sim_2$
to the space of variables $(t,x,y,\psi)$ is already a Lie invariance algebra for any equation from the class~\eqref{eq:vortdirectSpaceInd}.
Therefore, any Lie symmetry extension is only feasible via (finite-dimensional) subalgebras of the five-dimensional solvable algebra
\[
\mathfrak a=\langle \DDD_1,\, \p_t,\, \DDD_2,\, \JJ,\, \JJ^t\rangle=\pr\tilde{\mathfrak a}.
\]
In other words, for any values of the arbitrary elements $f^i=f^i(t,\zeta_x,\zeta_y)$
the maximal Lie invariance algebra $\mathfrak g^{\rm max}_f$ of the corresponding equation~$\mathcal L_f$ from the class~\eqref{eq:vortdirectSpaceInd}
is represented in the form $\mathfrak g^{\rm max}_f=\mathfrak g^{\rm ext}_f\mathbin{\mbox{$\lefteqn{\hspace{.67ex}\rule{.4pt}{1.2ex}}{\in}$}}\mathfrak g^\cap_2$,
where $\mathfrak g^{\rm ext}_f$ is a subalgebra of $\mathfrak a$.
A nonzero linear combination of the operators~$\JJ$ and~$\JJ^t$ is a Lie symmetry operator of the equation~$\mathcal L_f$ if and only if
this equation is invariant with respect to the algebra $\langle\JJ,\JJ^t\rangle$.
Therefore, for any extension within the class~\eqref{eq:vortdirectSpaceInd} we have that
either $\mathfrak g^{\rm ext}_f\cap\langle\JJ,\JJ^t\rangle=\{0\}$ or $\mathfrak g^{\rm ext}_f\supset\langle\JJ,\JJ^t\rangle$, i.e.,
\begin{equation}\label{eq:ConditionForExtensionsWithJAndJt}
\dim(\mathfrak g^{\rm ext}_f\cap\langle\JJ,\JJ^t\rangle)\in\{0,2\}.
\end{equation}

Moreover, as $\pr\mathfrak g^\sim_2=\mathfrak g_0$, the maximal Lie invariance algebra of the inviscid barotropic vorticity equation~\eqref{eq:vort}, the normalization of class~\eqref{eq:vortdirectSpaceInd} means that only subalgebras of~$\mathfrak g_0$ can be used to construct spatially independent parameterization schemes within the class~\eqref{eq:vortdirectSpaceInd}. That is, for such parameterizations, the approach from \cite{ober00Ay} based on inverse group classification is quite natural and gives the same exhaustive result as direct group classification.
Due to the normalization, the complete realization of preliminary group classification of equations from the class~\eqref{eq:vortdirectSpaceInd} is also equivalent to its direct group classification which can be carried out for this class with the algebraic method.

Note that the class~\eqref{eq:vortdirectSpaceInd} possesses the nontrivial gauge equivalence algebra
\[\mathfrak g^{\rm gauge}=\langle\tilde{\mathcal K}(\delta),\,\tilde{\mathcal G}(\rho^1x+\rho^2y)\rangle, \]
cf.\ the second remark after Theorem~\ref{TheoremOnEquivAlgebra1}.
As we have $\pr\mathfrak g^{\rm gauge}=\{0\}$, the projections of operators from $\mathfrak g^{\rm gauge}$
obviously do not appear in $\mathfrak g^{\rm ext}_f$ for any value of $f$. At the same time, they are essential for finding all possible parameterizations that admit symmetry extensions.

Therefore two equivalent ways for the further use of the algebraic method in this problem depending on subalgebras of what algebra will be classified.

As a first impression, the optimal way is to construct a complete list of inequivalent subalgebras of the Lie algebra~$\mathfrak a$
and then substitute basis operators of each obtained subalgebra to the infinitesimal invariance criterion
in order to derive the associated system of equations for $f^i$ that should be integrated.
The algebra~$\mathfrak a$ is  finite dimensional and has the structure of a direct sum, $\mathfrak a=\langle\DDD_1,\,\p_t,\,\JJ,\,\JJ^t\rangle\oplus\langle\DDD_2\rangle$. The first summand is the four-dimensional Lie algebra $\mathfrak g_{4.8}^{-1}$ in accordance with Mubarakzyanov's classification of low-dimensional Lie algebras~\cite{muba63Ay} whose nilradical is isomorphic to the Weyl (Bianchi II) algebra $\mathfrak g_{3.1}$.
The classification of inequivalent subalgebra up to the equivalence relation generated by the adjoint action of the corresponding Lie group on~$\mathfrak a$
is a quite simple problem.
Moreover, the set of subalgebras to be used is reduced after taking into account the condition~\eqref{eq:ConditionForExtensionsWithJAndJt}.
At the same time, the derived systems for $f^i$ consist of second order partial differential equations and
have to be integrated up to $G^\sim_7$-equivalence.

This is why another way is optimal.
It is based on the fact that $\smash{\mathfrak g^{\rm ext}_f}$ coincides with a subalgebra~$\mathfrak b$ of~$\mathfrak a$
if and only if there exists a subalgebra~$\tilde{\mathfrak b}$ of~$\tilde{\mathfrak a}$
such that $\pr\tilde{\mathfrak b}=\mathfrak b$ and
the arbitrary elements $f^i$ satisfy the equations
\begin{equation}\label{eq:InvariantSurfaceCondForFi}
\xi^0f^i_t+\theta^jf^i_{\zeta_j}=\varphi^i
\end{equation}
for any operator~$\smash{\tilde{\mathcal Q}}$ from~$\smash{\tilde{\mathfrak b}}$, where
$\xi^0$, $\theta^j$ and~$\varphi^i$ are coefficients of $\p_t$, $\p_{\zeta_j}$ and~$\p_{f^i}$ in~$\smash{\tilde{\mathcal Q}}$, respectively.
In fact, the system~\eqref{eq:InvariantSurfaceCondForFi} is the invariant surface condition for the operator~$\tilde{\mathcal Q}$
and the functions~$f^i$ depending only on~$t$ and~$\zeta^j$.
This system is not compatible for any operator from~$\tilde{\mathfrak a}$
of the form $\tilde{\mathcal Q}=\tilde{\mathcal K}(\delta)+\tilde{\mathcal G}(\rho^1x+\rho^2y)$,
where at least one of the parameter-functions $\delta$, $\rho^1$ or~$\rho^2$ does not vanish.
In other words, each operator from~$\tilde{\mathfrak b}$ should have a nonzero part belonging to
$\langle \tilde \DDD_1, \tilde \DDD_2, \p_t, \tilde\JJ^1, \tilde\JJ^t\rangle$
and hence $\dim\pr\tilde{\mathfrak b}=\dim\tilde{\mathfrak b}\leqslant5$.
Taking into account also the condition~\eqref{eq:ConditionForExtensionsWithJAndJt},
we obtain the following algorithm for classification of possible Lie symmetry extensions within the class~\eqref{eq:vortdirectSpaceInd}:
\begin{enumerate}\itemsep=-.5ex
\item
We classify $G^\sim_7$-inequivalent subalgebras of the algebra~$\tilde{\mathfrak a}$ each of which satisfies the conditions
$\dim\pr\tilde{\mathfrak b}=\dim\tilde{\mathfrak b}$ and $\dim(\tilde{\mathfrak b}\cap\langle\JJ,\JJ^t\rangle)\in\{0,2\}$.
Adjoint actions corresponding to operators from~$\tilde{\mathfrak i}$ can be neglected.
\item
We fix a subalgebra~$\tilde{\mathfrak b}$ from the list constructed in the first step.
This algebra is necessarily finite dimensional, $\dim\tilde{\mathfrak b}\leqslant5$.
We solve the system consisting of equations of the form~\eqref{eq:InvariantSurfaceCondForFi},
where the operator~$\tilde{\mathcal Q}$ runs through a basis of~$\tilde{\mathfrak b}$.
For every solution of this system we have $\mathfrak g^{\rm ext}_f=\pr\tilde{\mathfrak b}$.
\item
Varying~$\tilde{\mathfrak b}$, we get the required list of values of
the arbitrary elements $(f^1,f^2)$ and the corresponding Lie symmetry extensions.
\end{enumerate}

In order to realize the first step of the algorithm, we list the nonidentical adjoint actions related to basis elements of $\tilde{\mathfrak a}$:
\begin{align*}
 &\Ad(e^{\ve\p_t})\DDD_1 = \DDD_1 - \ve\p_t,&                                         &\Ad(e^{\ve\DDD_1})\p_t = e^\ve\p_t, \\
 &\Ad(e^{\ve\JJ^t})\DDD_1 = \DDD_1 + \ve\JJ^t,&                                       &\Ad(e^{\ve\DDD_1})\JJ^t = e^{-\ve}\JJ^t, \\
 &\Ad(e^{\ve\mathcal K(\delta)})\DDD_1 = \DDD_1 + \ve\mathcal K(t\delta_t+\delta), &  &\Ad(e^{\ve\DDD_1})\mathcal K(\delta) = \mathcal K(e^{-\ve}\delta(e^{-\ve}t)), \\
 &\Ad(e^{\ve\mathcal G(\rho)})\DDD_1 = \DDD_1 + \ve\mathcal G(t\rho_t+2\rho), &       &\Ad(e^{\ve\DDD_1})\mathcal G(\rho) = \mathcal G(e^{-2\ve}\rho(e^{-\ve}t,x,y)), \\
 &\Ad(e^{\ve\mathcal K(\delta)})\p_t = \p_t + \ve\mathcal K(\delta_t), &              &\Ad(e^{\ve\p_t})\JJ^t = \JJ^t - \ve\JJ^1, \\
 &\Ad(e^{\ve\mathcal G(\rho)})\p_t = \p_t + \ve\mathcal G(\rho_t),&                   &\Ad(e^{\ve\p_t})\mathcal K(\delta) = \mathcal K(\delta(t-\ve)), \\
 &\Ad(e^{\ve\JJ^t})\p_t = \p_t + \ve\JJ, &                                            &\Ad(e^{\ve\p_t})\mathcal G(\rho) = \mathcal G(\rho(t-\ve,x,y)), \\
 &\Ad(e^{\ve\mathcal K(\delta)})\DDD_2 = \DDD_2 + \ve\mathcal K(2\delta), &           &\Ad(e^{\ve\DDD_2})\mathcal K(\delta) = \mathcal K(e^{2\ve}\delta(t)), \\
 &\Ad(e^{\ve\mathcal G(\rho)})\DDD_2 = \DDD_2 + \ve\mathcal G(2\rho), &               &\Ad(e^{\ve\DDD_2})\mathcal G(\rho) = \mathcal G(e^{-\ve}\rho(t,e^{-\ve}x,e^{-\ve}y)) ,\\
 &\Ad(e^{\ve\mathcal G(\rho)})\JJ^1 = \JJ^1 + \ve\mathcal G(\rho^2x-\rho^1y), &       &\Ad(e^{\ve\JJ^1})\mathcal G(\rho) = \mathcal G(\hat\rho^\ve), \\
 &\Ad(e^{\ve\mathcal G(\rho)})\JJ^t = \JJ^t + \ve\mathcal G(t\rho^2x-t\rho^1y), &     &\Ad(e^{\ve\JJ^t})\mathcal G(\rho) = \mathcal G(\check\rho^\ve),
\end{align*}
where we omit tildes in the notation of operators and also omit arguments of parameter-functions if these arguments are not changed under the corresponding adjoint action,
$\rho=\rho^1x+\rho^2y$,
$\hat\rho^\ve=(\rho^1x+\rho^2y)\cos\ve+(\rho^1y-\rho^2x)\sin\ve$,
$\check\rho^\ve=(\rho^1x+\rho^2y)\cos\ve t+(\rho^1y-\rho^2x)\sin\ve t$,

Based upon these adjoint actions, we derive the following list of
$G^\sim_7$-inequivalent subalgebras of~$\tilde{\mathfrak a}$ satisfying the above restrictions
(we again omit tildes in the notation of operators):

\medskip

\noindent\textit{one-dimensional subalgebras:}
\begin{equation*}
  \langle\DDD_1+b\DDD_2+a\JJ^1\rangle, \quad
  \langle\p_t+c\DDD_2+\hat c\JJ^t\rangle, \quad
  \langle\DDD_2+\JJ^t\rangle,\quad
  \langle\DDD_2+a\JJ^1\rangle;
\end{equation*}

 \noindent\textit{two-dimensional subalgebras:}
\begin{align*}
  &\langle\DDD_1+b\DDD_2+a\JJ+\mathcal K (c)+\mathcal G(\tilde c x),\p_t\rangle, \quad \langle \DDD_1+a\JJ^1,\DDD_2+\hat a\JJ^1\rangle, \\
  &\langle\p_t+c\JJ^t, \DDD_2+\hat a\JJ^1\rangle, \quad \langle\JJ^1+\mathcal K(\delta^1(t)),\JJ^t+\mathcal K(\delta^2(t))\rangle;
\end{align*}

\noindent\textit{three-dimensional subalgebras:}
\begin{align*}
  &\langle\DDD_1+a\JJ^1, \p_t, \DDD_2+\hat a\JJ^1\rangle,\quad
  \langle\DDD_1+b\DDD_2,\JJ^1+\mathcal K(c|t|^{2b-1}),\JJ^t+\mathcal K(\hat c|t|^{2b})\rangle, \\
  &\langle\p_t+\tilde c\DDD_2,\JJ^1+\mathcal K(ce^{2\tilde ct}),\JJ^t+\mathcal K((ct+\hat c)e^{2\tilde ct})\rangle, \quad
  \langle\DDD_2,\JJ^1,\JJ^t\rangle;
\end{align*}

\noindent\textit{four-dimensional subalgebras:}
\begin{align*}
  &\langle\DDD_1+b\DDD_2+\mathcal K(\nu_2), \p_t, \JJ^1+\mathcal K(\nu_1), \JJ^t+\mathcal K(\nu_1t+\nu_0) \rangle, \ (2b-1)\nu_1=0, \ b\nu_0 = 0, \\
  &\langle \DDD_1, \DDD_2, \JJ^1, \JJ^t\rangle, \quad
  \langle \p_t, \DDD_2, \JJ^1, \JJ^t\rangle;
  \end{align*}

\noindent\textit{five-dimensional subalgebra:}
\begin{align*}
    \langle \DDD_1, \p_t, \DDD_2, \JJ^1, \JJ^t\rangle.
  \end{align*}
In the above subalgebras, due to adjoint actions we can put the following restrictions on the algebra parameters: $a\ge0$, $c,\tilde c\in\{0,1\}$, $\hat a \ge 0$ if $a=0$ (resp.\ $c=0$), $\hat c\in\{0,1\}$ if $c=0$; additionally, in the first two-dimensional subalgebra we can set $(1+2b)c=0$ and $((1+b)^2+a^2)\tilde c=0$; in the first four-dimensional subalgebra one non-zero parameter among $\nu_0$, $\nu_1$, $\nu_2$ can be set to $1$. In the last two-dimensional subalgebra, the parameters $\delta^1$ and $\delta^2$ are arbitrary smooth functions of $t$. The subalgebras with parameter tuples $(\delta^1, \delta^2)$ and $(\tilde \delta^1, \tilde \delta^2)$ are equivalent if and only if there exist constants $\ve_0$, $\ve_1$ and $\ve_2$ such that $\tilde \delta^1=e^{\ve_2-\ve_1}\delta^1(e^{-\ve_1}t+\ve_0)$ and $\tilde \delta^2=e^{\ve_2}\delta^2(e^{-\ve_1}t+\ve_0)$.

Concerning the realization of the second step of the algorithm, we note that the system corresponding to the last two-dimensional subalgebra is compatible if and only if $\delta^2(t)=t\delta^1(t)$. We re-denote $\delta^1$ by $\delta$. As the general solution of the system is parameterized by functions of two arguments, we put the associated two-dimensional symmetry extension into Table~\ref{tab:OneDimensionalSymmetryExtensionsRestrictedCase}, where the other extensions are one-dimensional. A similar remark is true for the three last three-dimensional subalgebras, which is why we list them in Table~\ref{tab:TwoDimensionalSymmetryExtensionsRestrictedCase} containing symmetry extensions parameterized by functions of a single argument.

The system associated with the first two-dimensional subalgebra is compatible if and only if $(a,b)\ne(0,-1)$. The solution of the system is split into three cases, (i) $b\ne-1,1/2$, (ii) $b=1/2$ and (iii) $b=-1$ and $a\ne0$. We will use the notation $\mu = c/(2b-1)$ for $b\ne1/2$ and $\mu = 2c/3$ in case of $b=1/2$.

For the second and third three-dimensional subalgebras, the corresponding systems are compatible if and only if $c=\hat c$ and $\hat c=0$, respectively.

For the reason of compatibility, in the first four-dimensional subalgebra we have $\nu_0=0$ and $b\ne-1$. Due to the condition $(2b-1)\nu_1=0$, the solution of the corresponding system should be split into the two cases $b\ne1/2$ and $b=1/2$. For simplicity of the representation of the results in Table~\ref{tab:ThreeDimensionalSymmetryExtensionsRestrictedCase} we introduce the notation $\mu = \nu_2/(2b-1)$ if $b\ne1/2$ and $\tilde \nu_2 = 2\nu_2/3$ for $b=1/2$.

\smallskip

\begin{table}[!htp]%
\centering
\caption{Symmetry extensions parameterized by functions of two arguments\label{tab:OneDimensionalSymmetryExtensionsRestrictedCase}}\vspace{1ex}
\begin{tabular}{llll}
\hline
\hfil $\mathfrak g^{\rm ext}_f$ &\hfil Arguments of $I_1$, $I_2$ &&\hfil $f^1$, $f^2$ \\
\hline
\multirow{2}{*}{$\langle\DDD_1+b\DDD_2+a\JJ\rangle$} & $|t|^{b+1}(\zeta_x\cos\tau +\zeta_y\sin\tau ),$ & \multirow{2}{*}{$\tau:=a\ln|t|$} & $|t|^{b-2}(I_1\cos\tau-I_2\sin\tau), $  \\
& $|t|^{b+1}(\zeta_y\cos\tau -\zeta_x\sin\tau ),$ && $|t|^{b-2}(I_1\sin\tau+I_1\cos\tau)$
\\
\multirow{2}{*}{$\langle\p_t+c\DDD_2+\hat c\JJ^t\rangle$} & $e^{ct}(\zeta_x\cos\tau+\zeta_y\sin\tau),$ &\multirow{2}{*}{$\tau:=\dfrac{\hat c}2 t^2$} & $e^{ct}(I_1\cos\tau-I_2\sin\tau),$ \\
 & $e^{ct}(\zeta_y\cos\tau-\zeta_x\sin\tau),$ && $e^{ct}(I_1\sin\tau+I_1\cos\tau)$
\\
$\langle\DDD_2+\JJ^t\rangle$ & $t,\ Re^{\Phi/t}$ && $P_1, P_2$
\\
$\langle\DDD_2+a\JJ\rangle$ & $t,\ R^ae^{\Phi}$ && $P_1, P_2$
\\
\multirow{2}{*}{$\langle\JJ,\JJ^t\rangle$}  & \multirow{2}{*}{$t,\ R$} && $\zeta_xI^1-\zeta_y I^2+\delta(t)\zeta_y\Phi$, \\
& && $\zeta_yI^1+\zeta_x I^2-\delta(t)\zeta_x\Phi$
\\
\hline
\end{tabular}
\end{table}

\begin{table}[!htp]%
\centering
\caption{Symmetry extensions parameterized by functions of a single argument\label{tab:TwoDimensionalSymmetryExtensionsRestrictedCase}}\vspace{1ex}
\begin{tabular}{lll}
\hline
\hfil $\mathfrak g^{\rm ext}_f$ &\hfil Argument of $I_1$, $I_2$ &\hfil $f^1$, $f^2$ \\
\hline
$\langle\DDD_1+b\DDD_2+a\JJ,\p_t\rangle$, $b\ne-1,\tfrac12$ & $R^ae^{(1+b)\Phi}$ &  $R^{\alpha_1}P_1-\mu\zeta_y$, $R^{\alpha_1}P_2+\mu\zeta_x$ \\
$\langle\DDD_1+\tfrac12\DDD_2+a\JJ,\p_t\rangle$  & $R^ae^{3\Phi/2}$ &  $R^2P_1-\mu\zeta_y\ln R$, $R^2P_2+\mu\zeta_x\ln R$ \\
$\langle\DDD_1-\DDD_2+a\JJ,\p_t\rangle$, $a\ne0$ & $R$ &  $e^{\alpha_2\Phi}P_1-\mu\zeta_y$, $e^{\alpha_2\Phi}P_2+\mu\zeta_x$ \\
$\langle\DDD_1+a\JJ,\DDD_2+\hat a\JJ\rangle $ & $|t|^{\hat a - a}R^{\hat a}e^\Phi $ & $t^{-3}P_1$, $t^{-3}P_2$\\
$\langle\p_t+c\JJ^t,\DDD_2+\hat a\JJ\rangle $ & $R^{\hat a}e^{\Phi-ct^2/2}$ & $P_1$, $P_2$  \\
\multirow{2}{*}{$\langle\DDD_1+b\DDD_2,\JJ,\JJ^t\rangle$} & \multirow{2}{*}{$|t|^{b+1}R $} & $|t|^{2b-1}(\zeta_xI^1-\zeta_yI^2+c\zeta_y\Phi)$, \\
& & $|t|^{2b-1}(\zeta_yI^1+\zeta_xI^2-c\zeta_x\Phi)$\\
\multirow{2}{*}{$\langle\p_t+\tilde c\DDD_2, \JJ,\JJ^t\rangle$} & \multirow{2}{*}{$e^{\tilde c t}R$} & $e^{2\tilde ct}(\zeta_x I^1-\zeta_y I^2+c\zeta_y \Phi)$, \\
& & $e^{2\tilde ct}(\zeta_y I^1+\zeta_x I^2-c\zeta_x \Phi)$\\
$\langle\DDD_2,\JJ,\JJ^t\rangle$ & $t$ & $P_1$, $P_2$ \\
\hline
\end{tabular}
\end{table}

\begin{table}[!htp]%
\centering
\caption{Symmetry extensions parameterized by constants\label{tab:ThreeDimensionalSymmetryExtensionsRestrictedCase}}\vspace{1ex}
\begin{tabular}{ll}
\hline
\hfil $\mathfrak g^{\rm ext}_f$ & \hfil $f^1$, $f^2$ \\
\hline
$\langle\DDD_1+a\JJ^1, \p_t, \DDD_2+\hat a\JJ^1\rangle$, $\hat a\ne a$ & $R^{\alpha_3\hat a}e^{\alpha_3\Phi}P_1$, $R^{\alpha_3\hat a}e^{\alpha_3\Phi}P_2$ \\
$\langle\DDD_1+b\DDD_2,\p_t,\JJ,\JJ^t\rangle$, $b\ne-1,\tfrac12$ & $R^{\alpha_1}P_1-\mu\zeta_y$, $R^{\alpha_1}P_2+\mu\zeta_x$ \\
$\langle\DDD_1+\tfrac12\DDD_2,\p_t,\JJ,\JJ^t\rangle$ & $R^2P_1+(\tilde\nu_2\ln R+\nu_1\Phi)\zeta_y$, $R^2P_2-(\tilde\nu_2\ln R+\nu_1\Phi)\zeta_x$ \\
$\langle\DDD_1,\DDD_2,\JJ,\JJ^t\rangle$ & $t^{-3}P_1$, $t^{-3}P_2$ \\
$\langle\p_t,\DDD_2,\JJ,\JJ^t \rangle$ &  $P_1$, $P_2$\\
\hline
\end{tabular}
\end{table}

In Tables \ref{tab:OneDimensionalSymmetryExtensionsRestrictedCase}--\ref{tab:ThreeDimensionalSymmetryExtensionsRestrictedCase},
$I_1$ and $I_2$ are arbitrary functions of two indicated arguments, arbitrary functions of one indicated argument or arbitrary constants, respectively,
\[
R = \sqrt{\zeta_x^2+\zeta_y^2},\quad
\Phi = \arctan\frac{\zeta_y}{\zeta_x},\quad
P_1 = \frac{\zeta_xI_1-\zeta_yI_2}{\zeta_x^2+\zeta_y^2},\quad
P_2=\frac{\zeta_yI_1+\zeta_xI_2}{\zeta_x^2+\zeta_y^2}.
\]
Moreover, $\alpha_1=3/(b+1)$ (for $b\ne-1$), $\alpha_2=3/a$ (for $b=-1$ and $a\ne0$) and $\alpha_3=3/(\hat a-a)$ (for $\hat a\ne a$).
In Table~\ref{tab:OneDimensionalSymmetryExtensionsRestrictedCase}, $\delta$ is an arbitrary function of~$t$.

Up to gauge equivalence, the single parameterization admitting five-dimensional symmetry extension within the class~\eqref{eq:vortdirectSpaceInd} is the trivial parameterization, $f^1=f^2=0$, in which we neglect the eddy vorticity flux. This shows the limits of applicability of the method proposed in~\cite{ober97Ay}, cf.\ Section~\ref{sec:ParameterizationInverseGroupClassification}.

\subsection{Parameterization via preliminary group classification}\label{sec:ParameterizationViaPreliminaryGroupClassification}

The technique of preliminary group classification is based on classifications of extensions of the kernel Lie invariance algebra by operators obtained via projection of elements of the corresponding equivalence algebra to the space of independent and dependent variables~\cite{ibra91Ay}. It is illustrated here with the class~\eqref{eq:vortdirect} whose equivalence algebra~$\mathfrak g^\sim_1$ is calculated in Section~\ref{sec:EquivAlgebra}.

The kernel Lie invariance algebra~$\mathfrak g^\cap_1$ of the class~\eqref{eq:vortdirect}
(i.e., the intersection of the maximal Lie invariance algebras of equations from the class) is $\langle\ZZ(\chi)\rangle$.
Denote by $\tilde{\mathfrak g}^\cap_1$ the ideal of $\mathfrak g^\sim_1$ corresponding to~$\mathfrak g^\cap_1$,
$\tilde{\mathfrak g}^\cap_1=\langle\tilde\ZZ(\chi)\rangle$.
In view of the classification of one-dimensional subalgebras of the equivalence algebra in Appendix~\ref{sec:appendix1} (list~\eqref{eq:OneDimensionalSubalgebrasEquivalenceAlgebra}) and since for preliminary group classification we are only concerned with extensions of the complement of~$\tilde{\mathfrak g}^\cap_1$ in~$\mathfrak g^\sim_1$, we essentially have to consider the inequivalent subalgebras
\begin{align*}
\begin{split}
 &\langle\tilde\DDD_1+a\tilde\DDD_2\rangle,\quad \langle\p_t+b\tilde\DDD_2\rangle, \quad
  \langle\tilde\DDD_2+\tilde\JJ(\beta)+\tilde{\mathcal R}(\sigma)\rangle, \quad
  \langle\tilde\JJ(\beta)+\tilde{\mathcal R}(\sigma)\rangle, \\
 &\langle\tilde\XX(\gamma^1)+\tilde{\mathcal R}(\sigma)\rangle,\quad
  \langle\tilde{\mathcal R}(\sigma)+\tilde{\mathcal H}(\delta)+\tilde{\mathcal G}(\rho)\rangle.
\end{split}
\end{align*}
Here $a\in\mathbb R$, $b\in\{-1,0,1\}$, $\beta=\beta(t)$, $\sigma=\sigma(t)$, $\gamma^1=\gamma^1(t)$ and $\rho=\rho(t,x,y)$ are smooth functions of their arguments and $\delta=\delta(t,x,y)$ is a solution of the Laplace equation, $\delta_{xx}+\delta_{yy}=0$. All parameters are arbitrary but fixed for a particular subalgebra.
For each of the subalgebras, the corresponding arbitrary elements $f^i$ satisfy the equations
\begin{equation}\label{eq:InvariantSurfaceCondForFiGen}
\xi^\mu f^i_\mu+\theta^jf^i_{\zeta_j}=\varphi^i
\end{equation}
where $\xi^\mu$, $\theta^j$ and~$\varphi^i$ respectively are coefficients of $\p_\mu$, $\p_{\zeta_j}$ and~$\p_{f^i}$ in the basis element of the subalgebra.
It now remains to present the parameterization schemes constructed, which can be found in the Table~\ref{tab:OneDimensionalSymmetryExtensionsGeneralCase}.

\begin{table}[!htp]
\centering
\caption{One-dimensional symmetry algebra extensions for the case $f^i=f^i(t,x,y,\zeta_x,\zeta_y)$\label{tab:OneDimensionalSymmetryExtensionsGeneralCase}}\vspace{1ex}
\begin{tabular}{lll}
\hline
\hfil Extension &\hfil Arguments of $I_1$, $I_2$ &\hfil $f^1$, $f^2$ \\
\hline
$\langle\DDD_1+a\DDD_2\rangle$ & $|t|^{-a}x$, $|t|^{-a}y$, $tx\zeta_x$, $ty\zeta_y$ & $t^{-2}xI_1$, $t^{-2}yI_2$
\\[1ex]
$\langle\p_t+a\DDD_2\rangle$ & $e^{-at}x$, $e^{-at}y$, $x\zeta_x$, $y\zeta_y$ & $xI_1$, $yI_2$
\\[1ex]
\raisebox{-2.5ex}[0ex][0ex]{$\langle\DDD_2+\JJ(\beta)+\mathcal R(\sigma)\rangle$}
&  \raisebox{-1ex}[0ex][0ex]{\parbox[t]{21ex}{$t$, $\varphi-\beta\ln r$,\\ $x\zeta_x+y\zeta_y$, $y\zeta_x-x\zeta_y$}} &
$xI^1-yI^2+\dfrac{\sigma}{2}r^2\zeta_y\ln r + (\beta_{tt}+\sigma_t)x\ln r,$\\[1ex]
& & $yI^1+xI^2-\dfrac{\sigma}{2}r^2\zeta_x\ln r + (\beta_{tt}+\sigma_t)y\ln r$
\\[2ex]
\raisebox{-1ex}[0ex][0ex]{\parbox[t]{18ex}{$\langle\JJ(\beta)+\mathcal R(\sigma)\rangle$,\\$\beta\ne0$}}
& \raisebox{-1ex}[0ex][0ex]{\parbox[t]{21ex}{$t$, $r$,\\ $x\zeta_x+y\zeta_y$, $y\zeta_x-x\zeta_y$}}
& $xI^1-yI^2+\dfrac{\sigma}{2\beta}r^2 \zeta_y\varphi + \dfrac{\beta_{tt}+\sigma_t}{\beta}x\varphi$, \\[1.5ex]
&
& $yI^1+xI^2-\dfrac{\sigma}{2\beta}r^2 \zeta_x\varphi + \dfrac{\beta_{tt}+\sigma_t}{\beta}y\varphi$
\\[2ex]
\raisebox{-1ex}[0ex][0ex]{\parbox[t]{18ex}{$\langle\XX(\gamma^1)+\mathcal R(\sigma)\rangle$,\\$\gamma^1\ne0$}}
& \multirow{2}{*}{$t$, $y$, $\zeta_x$, $\zeta_y$}
&  $I_1 + \dfrac{\sigma_t}{\gamma^1}\dfrac{x^2}{2}+\dfrac{\sigma\zeta_y}{6\gamma^1}(x^3+3xy^2)$, \\[1.5ex]
&& $I_2 + \dfrac{\sigma_t}{\gamma^1}xy-\dfrac{\sigma\zeta_x}{6\gamma^1}(x^3+3xy^2)$\\[1.5ex]
\hline
\end{tabular}
\end{table}
In this table, $I_1$ and $I_2$ are arbitrary functions of four indicated arguments, $r=\sqrt{x^2+y^2}$, $\varphi = \arctan{y/x}$ and
$\mathcal R(\sigma)=\pr\tilde{\mathcal R}(\sigma)=\frac12\sigma r^2\p_\psi$.

In the last class of subalgebras no ansatz can be constructed due to the special form of functions~$f^i$. Namely, as the variable $\psi$ is not included in the list of arguments of $f^i$, any nonzero operator of the form $\tilde{\mathcal R}(\sigma)+\tilde{\mathcal H}(\delta)+\tilde{\mathcal G}(\rho)$ gives an incompatible system of the form~\eqref{eq:InvariantSurfaceCondForFiGen} and hence its projection does not belong to Lie invariance algebras of equations from the class~\eqref{eq:vortdirect}.

Note that some of the extensions presented are not maximal even for the general values of the invariant functions~$I^1$ and~$I^2$.
In particular, if an equation from the class~\eqref{eq:vortdirect} possesses a Lie symmetry operator of the form $\XX(\gamma^1)$
with a fixed function~$\gamma^1$, it possesses all the operators of this form.

As the class~\eqref{eq:vortdirect} is normalized (see Corollary~\ref{CorollaryOnNormalizedSuperclass6}),
its complete group classification also can be obtained by the algebraic method.
For this it is enough to classify only special subalgebras of the equivalence algebra~$\mathfrak g^\sim_1$,
cf.\ a similar classification in Section~\ref{SectionOnParameterizationViaDirectGroupClassification} which also is used here.
The restrictions for appropriate subalgebras are mentioned above under the classification of (at least) one-dimensional Lie symmetry extensions.
Now we precisely formulate them:
\begin{itemize}\itemsep=0ex
\item
The projection of a subalgebra~$\mathfrak s$ of~$\mathfrak g^\sim_1$ to the space of variables~$(t,x,y,\psi)$ is a Lie invariance algebra of an equation from  the class~\eqref{eq:vortdirect} if and only if the corresponding system of equations of the form~\eqref{eq:InvariantSurfaceCondForFiGen} for the arbitrary elements $f^1$ and $f^2$ is compatible.
\item
Only subalgebras of~$\mathfrak g^\sim_1$ should be classified whose projections to the space of variables $(t,x,y,\psi)$ are the maximal Lie invariance algebra of certain equations from  the class~\eqref{eq:vortdirect}.
\end{itemize}
As a result, the classification is split into several cases.
For each of the cases we have a common part of Lie symmetry extensions, which may be infinite dimensional.
All additional extensions are finite dimensional and can be classified with reasonable efforts.
We briefly describe only the main cases arising under the classification.
The complete classification will be presented elsewhere.

Let $\mathfrak s$ be an appropriate subalgebra of~$\mathfrak g^\sim_1$.
As remarked above, any appropriate subalgebra does not contain nonzero operators of the form $\tilde{\mathcal R}(\sigma)+\tilde{\mathcal H}(\delta)+\tilde{\mathcal G}(\rho)$ and includes $\tilde{\mathfrak g}^\cap_1=\langle\tilde\ZZ(\chi)\rangle$ as a proper ideal.
Denote by $\mathfrak j$ the subspace of~$\mathfrak g^\sim_1$ spanned by the operators $\tilde\XX(\gamma^1)$, $\tilde\YY(\gamma^2)$, $\tilde{\mathcal R}(\sigma)$, $\tilde{\mathcal H}(\delta)$ and $\tilde{\mathcal G}(\rho)$, where the parameters runs through the corresponding sets of functions, cf.\ Theorem~\ref{TheoremOnEquivAlgebra1}.
Then denote by $r_0$ the rank of the set of tuples of functional parameters $(\gamma^1,\gamma^2)$
appearing in operators from $\mathfrak s\cap\mathfrak j$.
It is obvious that $r_0\in\{0,1,2\}$.
We consider each of the possible values of~$r_0$ separately.

If $r_0=0$, any nonzero operator from~the complement of $\tilde{\mathfrak g}^\cap_1$ in $\mathfrak s$ has a nonzero projection to the subalgebra $\langle\tilde\DDD_1,\tilde\DDD_2,\p_t,\tilde\JJ(\beta)\rangle$, where the parameter-function~$\beta$ runs through the set of smooth functions of~$t$.
Suppose that the operators $Q^a=\tilde\JJ(\beta^a)+T^a$ with fixed linearly independent functions~$\beta^a$ and tails
$T^a=\tilde\XX(\gamma^{a1})+\tilde\YY(\gamma^{a2})+\tilde{\mathcal R}(\sigma^a)+\tilde{\mathcal H}(\delta^a)+\tilde{\mathcal G}(\rho^a)\in\mathfrak j,$
$a=1,\dots,n$, where $n\geqslant2$, belong to~$\mathfrak s$.
Up to $G^\sim_6$-equivalence we can assume that $T^1=\tilde{\mathcal R}(\sigma^1)$, i.e., $\gamma^{11}=0$, $\gamma^{12}=0$, $\delta^1=0$ and $\rho^1=0$.
As $r_0=0$, we have that the commutator of any pair of operators $Q$'s should be a linear combinations of certain $Q$'s and $\tilde\ZZ(\chi)$.
This condition taken for $Q^1$ and the other $Q$'s implies that $\gamma^{a1}=\gamma^{a2}=0$.
Denote by $\mathcal E^a$ the equation of the form~\eqref{eq:InvariantSurfaceCondForFiGen}, associated with the operator~$Q^a$.
For each $a\ne1$, we subtract the equation $\mathcal E^1$ multiplied by~$\beta^a$ from the equation $\mathcal E^a$ multiplied by~$\beta^1$.
This results in the equation that does not involves $f^i$ and, therefore, is an identity.
Splitting it with respect to~$\zeta_x$ and~$\zeta_y$, we obtain the system
\begin{gather*}
(\beta^1\sigma^a-\beta^a\sigma^1)(x^2+y^2)+2\beta^1\delta^a=0,\\
(\beta^1\beta^a_{tt}-\beta^a\beta^1_{tt}+\beta^1\sigma^a_t-\beta^a\sigma^1_t)x+\beta^1\rho^a_y=0,\\
(\beta^1\beta^a_{tt}-\beta^a\beta^1_{tt}+\beta^1\sigma^a_t-\beta^a\sigma^1_t)y-\beta^1\rho^a_x=0,
\end{gather*}
Taking into account that $\delta^a_{xx}+\delta^a_{yy}=0$ and cross differentiating the two last equations of the system,
we then derive that $\beta^1\sigma^a-\beta^a\sigma^1=0$, $\delta^a=0$, $\rho^a_x=0$, $\rho^a_y=0$ and
\begin{equation}\label{EqJbetaSystemForParameters2}
\beta^1\beta^a_{tt}-\beta^a\beta^1_{tt}+\beta^1\sigma^a_t-\beta^a\sigma^1_t=0.
\end{equation}
Since the parameter-functions $\rho^a$ are defined up to summands being arbitrary smooth functions of~$t$,
we can assume, in view of the equations $\rho^a_x=0$ and $\rho^a_y=0$, that $\rho^a=0$.
The equations $\beta^1\sigma^a-\beta^a\sigma^1=0$ mean that the tuples of~$\beta$'s and~$\sigma$'s are proportional for each~$t$,
i.e., there exists a smooth function~$\alpha=\alpha(t)$ such that $(\sigma^1,\dots,\sigma^n)=\alpha(\beta^1,\dots,\beta^n)$.
We combine the last condition with equations~\eqref{EqJbetaSystemForParameters2} and solve the resulting equations
\[
(\beta^1\beta^a_t-\beta^a\beta^1_t)_t+\alpha(\beta^1\beta^a_t-\beta^a\beta^1_t)=0
\]
with respect to $\beta^a$. The solutions are
$
\beta^a=c_{1a}\beta^1\int(\beta^1)^{-2}\tilde\alpha\,dt+c_{2a}\beta^1,
$
where $\tilde\alpha=e^{-\int\alpha\,dt}$ and $c_{1a}$ and $c_{2a}$ are arbitrary constants.
Therefore, the number~$n$ of linearly independent functions~$\beta^a$ cannot be greater than~2.
Summing up the above consideration, we conclude that
basis elements of $\mathfrak s$ belonging to the complement of $\tilde{\mathfrak g}^\cap_1$
can be assumed to have the following form:
\[
S^b+\tilde\JJ(\hat\beta^b)+\hat T^b, \quad b=1,\dots,m, \quad \tilde\JJ(\beta^a)+\tilde{\mathcal R}(\sigma^a), \quad a=1,\dots,n,
\]
where $\langle S^b,\,b=1,\dots,m\rangle$ is an $m$-dimensional subalgebra of $\langle\tilde\DDD_1,\tilde\DDD_2,\p_t\rangle$
and hence $0\leqslant m\leqslant3$, $\hat T^b\in\mathfrak j$, $0\leqslant n\leqslant2$, the functions~$\beta^a$ are linearly independent
and $\sigma^a=-(\ln|\beta^1\beta^2_t-\beta^2\beta^1_t|)_t\beta^a$ if $n=2$.
The total dimension of extension in this case equals $m+n$ and is not greater than~5.

The condition $r_0=1$ implies that the subalgebra~$\mathfrak s$ contains no operators of the form $\tilde\JJ(\beta)+T$, where $\beta\ne0$ and $T\in\mathfrak j$.
Suppose that operators
$
T^s=\tilde\XX(\gamma^{s1})+\tilde\YY(\gamma^{s2})+\tilde{\mathcal R}(\sigma^s)+\tilde{\mathcal H}(\delta^s)+\tilde{\mathcal G}(\rho^s)
$
from~$\mathfrak j$, $s=1,\dots,p$, where $p\geqslant2$ and $(\gamma^{s1},\gamma^{s2})$ are linearly independent pairs of functions, belong to~$\mathfrak s$.
Up to $G^\sim_6$-equivalence we can assume that $\gamma^{12}=0$, $\delta^1=0$ and $\rho^1=0$.
As $r_0=1$, this also means that $\gamma^{\varsigma2}=0$, $\varsigma=2,\dots,p$, and the parameter-functions $\gamma^{s1}=0$, $s=1,\dots,p$, are linearly independent.
Analogously to the previous case,
denote by $\mathcal E^s$ the equation of the form~\eqref{eq:InvariantSurfaceCondForFiGen}, associated with the operator~$T^s$.
For each $s\ne1$, we subtract the equation $\mathcal E^1$ multiplied by~$\gamma^{s1}$ from the equation $\mathcal E^s$ multiplied by~$\gamma^{11}$.
This results in the equation that does not involves $f^i$ and, therefore, is an identity.
Making the same manipulations with the identity as those in the previous case, we obtain $\delta^s=0$, $\rho^s=0$,
$\gamma^{11}\sigma^s_t=\gamma^{s1}\sigma^1_t$, $\gamma^{11}\sigma^s=\gamma^{s1}\sigma^1$ and, therefore, $\gamma^{11}_t\sigma^s=\gamma^{s1}_t\sigma^1$.
In view of the linear independence of~$\gamma^{s1}$ and~$\gamma^{11}$,
the last two conditions form a well-determined homogenous system of linear algebraic equations with respect to~$\sigma^1$ and~$\sigma^s$ and hence imply that
$\sigma^s=\sigma^1=0$.
At the same time, if an equation from the class~\eqref{eq:vortdirect} possesses a Lie symmetry operator $\XX(\gamma^1)$
with a fixed function~$\gamma^1$, it possesses all the operators of this form.
This means that there are only two $G^\sim_6$-inequivalent possibility for $\mathfrak s\cap\mathfrak j$ in this case,
namely, $\mathfrak s\cap\mathfrak j$ is either spanned by a single operator $\tilde\XX(\gamma^{01})+\tilde{\mathcal R}(\sigma^0)$,
where~$\gamma^{01}$ and~$\sigma^0$ are fixed smooth nonvanishing functions of~$t$,
or equal to the entire set of operators of the form $\tilde\XX(\gamma^1)$,
where~$\gamma^1$ runs through the set of smooth functions of~$t$.
Additional extensions are realized only by tuple of operators of the form
$S^b+\tilde\JJ(\hat\beta^b)+\hat T^b$, $b=1,\dots,m$,
where $\hat T^b\in\mathfrak j$, $\langle S^b,\,b=1,\dots,m\rangle$ is an $m$-dimensional subalgebra of $\langle\tilde\DDD_1,\tilde\DDD_2,\p_t\rangle$
and hence $0\leqslant m\leqslant3$.

Let $r_0=2$. We use notations of the previous case and assume summation for the repeated index~$i$.
Suppose that operators~$T^s$, $s=1,\dots,p$, where $p\geqslant3$ and $(\gamma^{s1},\gamma^{s2})$ are linearly independent pairs of functions, belong to~$\mathfrak s$.
In view of the condition $r_0=2$, up to permutation of the operators~$T^s$ we can assume without loss of generality that $\gamma^{11}\gamma^{22}-\gamma^{12}\gamma^{21}\ne0.$
Then for each $s>2$ there exist smooth functions $\alpha^{si}$ of~$t$, $i=1,2$, such that
$(\gamma^{s1},\gamma^{s2})=\alpha^{si}(\gamma^{i1},\gamma^{i2})$.
Subtracting the equation $\mathcal E^i$ multiplied by~$\alpha^{si}$ from the equation $\mathcal E^s$,
we derive the equation which should identically satisfied and, therefore, implies after certain manipulations that
$\delta^s=\alpha^{si}\delta^i$, $\rho^s=\varrho^i$, $\sigma^s=\alpha^{si}\sigma^i$, $\sigma^s_t=\alpha^{si}\sigma^i_t$
and hence $\alpha^{si}_t\sigma^i=0$.
We should separately consider two subcases depending on either vanishing or nonvanishing $\sigma^i\sigma^i$.

If $\sigma^i\sigma^i\ne0$ then $\mathfrak s\cap\mathfrak j$ coincides with the set of operators of the general form
\[
\tilde\XX(\alpha^i\gamma^{i1})+\tilde\YY(\alpha^i\gamma^{i2})+
\tilde{\mathcal R}(\alpha^i\sigma^i)+\tilde{\mathcal H}(\alpha^i\delta^i)+\tilde{\mathcal G}(\alpha^i\rho^i),
\]
where~$(\alpha^1,\alpha^2)$ runs through the set of pairs of smooth functions of~$t$ satisfying the condition $\alpha^i_t\sigma^i=0$.
In view of commutation relations between $\tilde\JJ(\beta)$ and operators from~$\mathfrak j$,
no operator of the form $\tilde\JJ(\beta)+T$, where $\beta\ne0$ and $T\in\mathfrak j$, belongs to~$\mathfrak s$.
Additional extensions are realized only by tuple of operators of the form
$S^b+\tilde\JJ(\hat\beta^b)+\hat T^b$, $b=1,\dots,m$,
where $\hat T^b\in\mathfrak j$, $\langle S^b,\,b=1,\dots,m\rangle$ is an $m$-dimensional subalgebra of $\langle\tilde\DDD_1,\tilde\DDD_2,\p_t\rangle$
and hence $0\leqslant m\leqslant3$.

Suppose that $\sigma^1=\sigma^2=0$.
The condition $[T^1,T^2]\in\mathfrak s$ implies that
\begin{gather*}
\gamma^{11}\delta^2_x+\gamma^{12}\delta^2_y=\gamma^{21}\delta^1_x+\gamma^{22}\delta^1_y,\quad
\gamma^{11}\rho^2_x+\gamma^{12}\rho^2_y=\gamma^{21}\rho^1_x+\gamma^{22}\rho^1_y.
\end{gather*}
Therefore, using the push-forwards of transformations from~$G^\sim_6$, we can set $\delta^i=0$, $\rho^i=0$.
In other words, we can assume that the subalgebra~$\mathfrak s$ contains the operators
$T^i=\tilde\XX(\gamma^{i1})+\tilde\YY(\gamma^{i2})$, where $\gamma^{11}\gamma^{22}-\gamma^{12}\gamma^{21}\ne0$.
The system of equations of the form~\eqref{eq:InvariantSurfaceCondForFiGen}, associated with these operators,
is equivalent to the system $f^i_x=f^i_y=0$, $i=1,2$,
which singles out the subclass~\eqref{eq:vortdirectSpaceInd} from the class~\eqref{eq:vortdirect}.
The complete group classification of this subclass has been carried out in Section~\ref{SectionOnParameterizationViaDirectGroupClassification}.

\section{Conclusion}\label{sec:conclusion}

In this paper we have addressed the question of symmetry-preserving parameterization schemes. It was demonstrated that the problem of finding invariant parameterization schemes can be treated as a group classification problem. In particular, the interpretation of parameterizations as particular elements of classes of differential equations renders it possible to use well-established methods of symmetry analysis for the design of general classes of closure schemes with prescribed symmetry properties. For parameterizations to admit selected subgroups of the maximal Lie invariance group of the unaveraged differential equation, they should be expressed in terms of related differential invariants. The general outline of this approach is depicted in Figure~\ref{fig:schemeInverse}. Differential invariants can be computed either using infinitesimal methods or the method of moving frames, cf.\ Section~\ref{sec:ParameterizationInverseGroupClassification}.

It should be stressed that the selection of subgroups with respect to which a parameterization scheme should be invariant can be naturally justified when considering boundary-value problems. It is usually the case that explicitly taking into account particular initial and/or boundary conditions strongly decreases the number of admitted symmetries, see e.g.~\cite{bihl11By,bihl12By,blum89Ay} for further discussions and particular examples related to geophysical fluid dynamics. For selected subgroups not to be trivial, one can consider a class of similar boundary-value problems instead of a fixed problem and selected those symmetries that are extended to equivalence transformations of this class of boundary-value problems.
Hence, symmetry-subgroup admitting parameterization schemes can be especially useful when a parameterization scheme is constructed for particular boundary-value problems.

\begin{figure}[!htpb]
\centering
\hspace*{5mm}\includegraphics[scale=0.8]{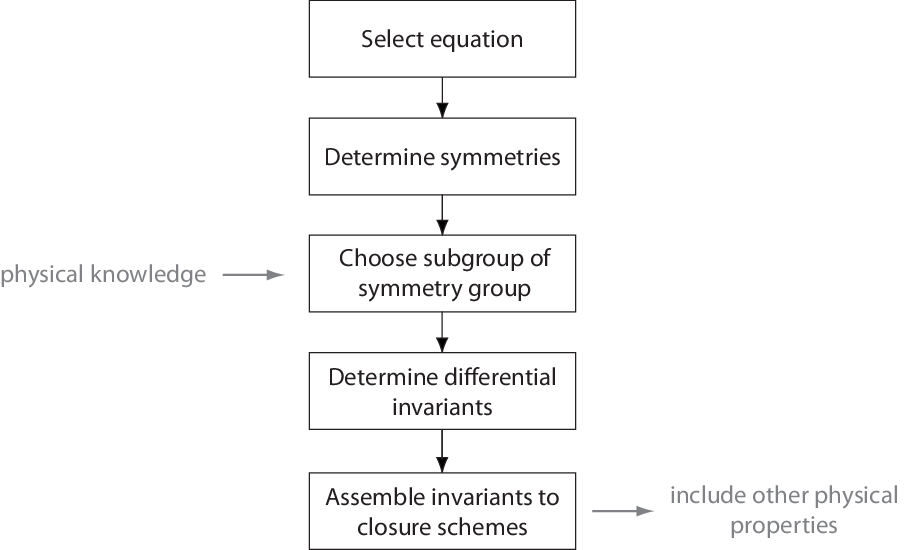}
\caption{Schematic overview of the construction of invariant parameterization schemes based on methods of inverse group classification.
\label{fig:schemeInverse}}
\end{figure}

For parameterization ansatzes with prescribed functional dependence on the resolved quantities and no prescribed symmetry group, the direct group classification problem should be solved. In the case where the given class of differential equations is normalized (which can be checked by the computation of the set of admissible transformations), it is possible and convenient to carry out the classification using the algebraic method~\cite{popo10Ay}. In the case where the class fails to be normalized (or in the case where it is impossible to compute the set of admissible transformations), an exhaustive investigation of parameterizations might be possible due to applying compatibility analysis of the corresponding determining equations or by combining the algebraic and compatibility methods. For more involved classes of differential equations at least symmetry extensions induced by subalgebras of the equivalence algebra can be found, i.e.\ preliminary group classification can be carried out. The framework of invariant parameterization involving methods of direct group classification is depicted in Figure~\ref{fig:scheme}.

\begin{figure}[!htb]
\centering
\includegraphics[scale=0.8]{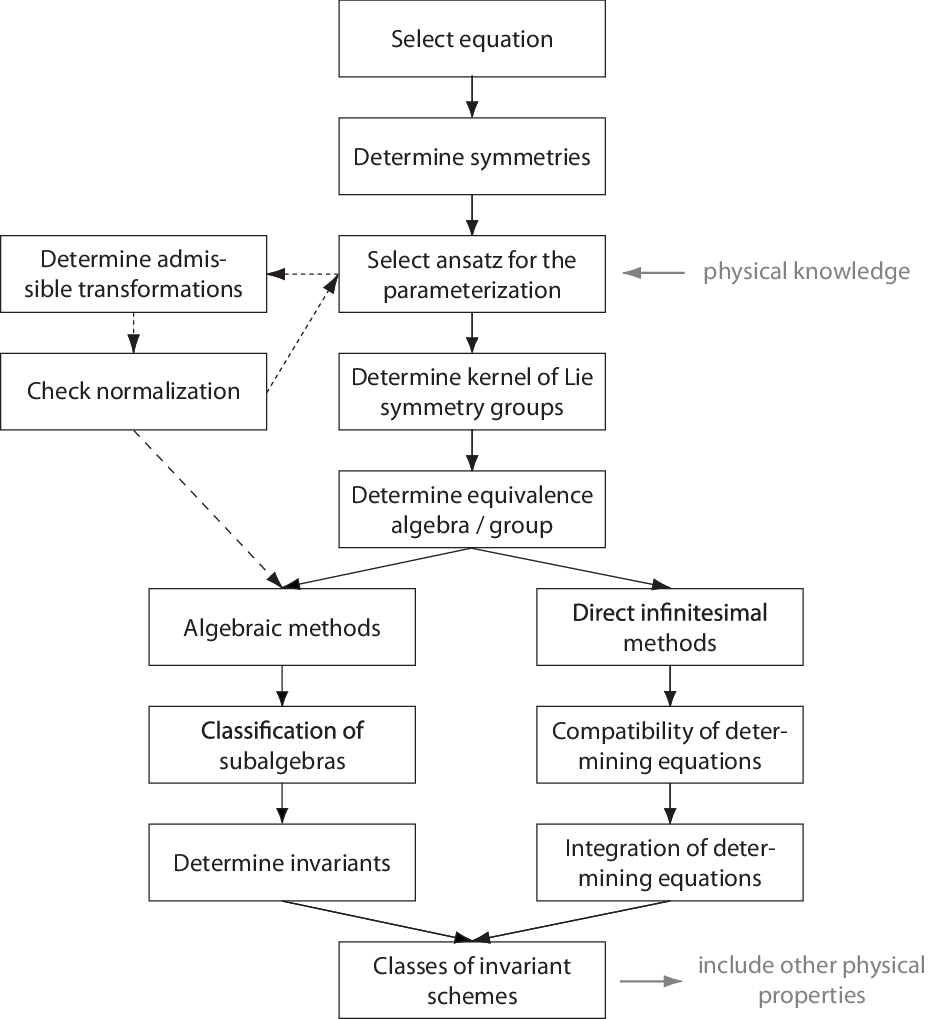}
\caption{Schematic overview of the construction of invariant parameterization schemes based on methods of direct group classification.
\label{fig:scheme}}
\end{figure}

Irrespectively of whether one uses direct or inverse group classification techniques, the procedure of invariant parameterization in fact yield classes of parameterization ansatzes rather than a particular fixed parameterization. This gives a certain degree of freedom which allows one to include other desirable physical or structural features into the parameterization scheme. For example, the specification of the parameterizations in Tables~\ref{tab:OneDimensionalSymmetryExtensionsRestrictedCase}--\ref{tab:OneDimensionalSymmetryExtensionsGeneralCase} can be done by prescribing a particular form of the functions $I_1$ and $I_2$. In the case of inverse group classification, one has to formulate a precise functional relation among related differential invariants. From the point of view of application the freedom in tuning a parameterization is extremely important as the preservation of symmetries is only one feature that might be required when parameterizing a given subgrid-scale process.

Since the primary aim of this paper is a clear presentation of the variety of invariant parameterization methods, we focused on rather simple first order local closure schemes for the classical barotropic vorticity equation, cf.\ the introduction of Section~\ref{sec:vorticity}. That is, we parameterized already the eddy vorticity flux $\overline{\vv'\zeta'}$ using $\bar\zeta$ and its derivatives. Admittedly, this is a quite simple ansatz for one of the simplest physically relevant models in geophysical fluid dynamics. On the other hand, it can be seen that already for this particular simple example the computations involved were rather elaborate. This is in particular true for the computation of the set of admissible transformations for the various classes of vorticity equations considered in Section~\ref{sec:AdmTrans}. Needless to say that irrespectively of practical computational problems the same technique would be applicable to higher order closure schemes as well. In designing such schemes it is necessary to explicitly include differential equations for the first or higher order correlation terms. In the case of the vorticity equations, a second order closure schemes is obtainable upon retaining the equations governing the evolution of $\overline{\vv' \zeta'}$ and parameterize the higher order correlation terms arising in these equations. In practice, however, it becomes increasingly difficult to acquire real atmospheric data for such higher-order correlation quantities, which therefore makes it difficult to propose parameterization schemes based solely on physical considerations~\cite{stul88Ay}. We argue that especially in such cases symmetries could provide a useful guiding principle to determine general classes of relevant parameterizations.

Up to now, we have restricted ourselves to the problem of invariant local closure schemes. Nonlocal schemes constructed using symmetry arguments should be investigated in a subsequent work. This extension to nonlocal parameterization schemes is crucial in order to make general methods available that can be used in the development of parameterization schemes for other types of physical processes in atmosphere-ocean dynamics, including e.g.\ convection. A further perspective for generalization of the present work is the design of parameterization schemes that preserve conservation laws. This is another aspect that is of major importance in practical applications. For parameterizations of conservative processes, it is crucial that the corresponding closed differential equation preserves energy conservation. This is by no means self-evident. In fact, energy conservation is violated by various classes of down-gradient ansatzes~\cite{vall88Ay}, which is straightforward to check also for parameterizations constructed in this paper. The construction of parameterization schemes that retain conservation laws will call for the classification of conservation laws in the way similar as the usual group classification. A main complication is that there is no restriction on the order of conservation laws for general systems of partial differential equations (so far, such restrictions are only known for $(1+1)$-dimensional evolution equations of even order and some similar classes of equations).
The combination of invariant and conservative parameterization schemes is also conceivable. As shown in~\cite{bihl11Fy}, it works for the barotropic vorticity equation on the beta-plane.

It is beyond the scope of the present paper to explicitly test the various parameterization schemes proposed though it was indicated above that some of them might have a physical importance whereas other schemes are obviously flawed. An example on the application of invariant parameterization schemes for the barotropic vorticity equation on the beta-plane to the problem of two-dimensional freely-decaying turbulence has been presented in~\cite{bihl11Fy}.

\section*{Acknowledgments}

The authors are grateful to Sergey Golovin and Michael Kunzinger for useful discussions and interesting comments.
The remarks of the anonymous referee are appreciated.
The research of ROP was supported by the Austrian Science Fund (FWF), project P20632.
AB was a recipient of a DOC-fellowship of the Austrian Academy of Sciences. AB also acknowledges support through the Austrian Science Fund, project J3182--N13.
The present work is carried out within the framework of COST Action ES0905.

\appendix

\section{\protect Inequivalent one-dimensional subalgebras of the equivalence algebra of class~\eqref{eq:vortdirect}}\label{sec:appendix1}

In this appendix, we classify one-dimensional subalgebras of the equivalence algebra~$\mathfrak g^\sim_1$ with basis elements~\eqref{eq:equivalgebravort}. For this means, we subsequently present the commutator table of~$\mathfrak g^\sim_1$. In what follows we omit tildes in the notation of operators.

\begin{table}[!htb]
\footnotesize
\begin{center}
\caption{Commutation relations for the algebra~$\mathfrak g^\sim_1$\label{tab:CommutationRelations}}
\begin{tabular}{|c|ccccc|}
\hline
 & $\DDD_1$ & $\DDD_2$ & $\p_t$ & $\JJ(\beta)$ & $\XX(\gamma^1)$   \\
\hline
$\DDD_1$ & $0$ & $0$ & ${ }$\ \ \,\quad$-\p_t$\quad\ \ \,${ }$ & $\JJ(t\beta_t)$ & $\XX(t\gamma^1_t) $
\\
$\DDD_2$ & $0$ & $0$ & $0$ & $0$ & $-\XX(\gamma^1)$
\\
$\p_t$ & $\p_t$ & $0$ & $0$ & $\JJ(\beta_t)$ &  $\XX(\gamma^1_t)$
\\
$\JJ(\tilde\beta)$ & $-\JJ(t\tilde\beta_t)$ & $0$ & $-\JJ(\tilde\beta_t)$ & $0$ & $-\YY(\tilde\beta\gamma^1)+\mathcal G(\gamma^1\tilde\beta_{tt}y)$
\\
$\XX(\tilde\gamma^1)$ & $-\XX(t\tilde\gamma^1_t) $  & $\XX(\tilde\gamma^1)$ & $-\XX(\tilde\gamma^1_t)$ & $\YY(\beta\tilde\gamma^1)-\mathcal G(\tilde\gamma^1\beta_{tt}y)$ & $0$
\\
$\YY(\tilde\gamma^2)$ & $-\YY(t\tilde\gamma^2_t)$ & $\YY(\tilde\gamma^2)$ & $-\YY(\tilde\gamma^2_t)$ & $\!-\XX(\beta\tilde\gamma^2)+\mathcal G(\tilde \gamma^2\beta_{tt}x)\!$ & $-\ZZ((\gamma^1\tilde\gamma^2)_t)$
\\
$\mathcal R(\tilde\sigma)$ & $-\mathcal R(t\tilde\sigma_t+\tilde\sigma)$ & $0$ & $-\mathcal R(\tilde\sigma_t)$ & $0$ & $-\mathcal H(\gamma^1\tilde\sigma x)+\mathcal G(\gamma^1\tilde\sigma_ty)$
\\
$\mathcal H(\tilde\delta)$ & $-\mathcal H(t\tilde\delta_t+\tilde\delta)$ & $-\mathcal H(x\tilde\delta_x+y\tilde\delta_y-2\tilde\delta)$ & $-\mathcal H(\tilde\delta_t)$ & $-\mathcal H(\beta x\tilde\delta_y-\beta y\tilde\delta_x)$&  $-\mathcal H(\gamma^1\tilde\delta_x)$
\\
$\mathcal G(\tilde\rho)$ & $-\mathcal G(t\tilde\rho_t+2\tilde\rho)$ & $-\mathcal G(x\tilde\rho_x + y\tilde\rho_y+\tilde\rho)$ & $-\mathcal G(\tilde \rho_t)$ & $-\mathcal G(\beta x\tilde\rho_y-\beta y\tilde\rho_x)$ & $\mathcal G(\gamma^1\tilde\rho_x)$
\\
$\ZZ(\tilde\chi)$ & $-\ZZ(t\tilde\chi_t+\tilde\chi)$ & $2\ZZ(\tilde\chi)$ & $-\ZZ(\tilde\chi_t)$ & $0$ & $0$ \\
\hline
\end{tabular}
\end{center}
\begin{center}
\begin{tabular}{|c|ccccc|}
\hline
& $\YY(\gamma^2)$ & $\mathcal R(\sigma)$ & $\mathcal H(\delta)$ & $\mathcal G(\rho)$ & $\ZZ(\chi)$\\
\hline
$\DDD_1$ & $\YY(t\gamma^2_t)$ & $\mathcal R(t\sigma_t+\sigma) $ & $\mathcal H(t\delta_t+\delta)$ & $\mathcal G(t\rho_t+2\rho)$ & $\ZZ(t\chi_t+\chi)$
\\
$\DDD_2$ & $-\YY(\gamma^2)$  & $0$ &$H(x\delta_x+y\delta_y-2\delta)$ & $\mathcal G(x\rho_x+y\rho_y+\rho)$ & $-2\ZZ(\chi)$
\\
$\p_t$ & $\YY(\gamma^2_t)$ & $\mathcal R(\sigma_t)$ & $\mathcal H(\delta_t)$ & $\mathcal G(\rho_t)$ & $\ZZ(\chi_t)$
\\
$\JJ(\tilde\beta)$ & $\XX(\tilde\beta\gamma^2)-\mathcal G(\gamma^2\tilde\beta_{tt}x) $ & $0$ & $\mathcal H(\tilde\beta x\delta_y-\tilde\beta y\delta_x)$ & $\mathcal G(\tilde\beta x\rho_y-\tilde\beta y\rho_x)$ & $0$
\\
$\XX(\tilde\gamma^1)$ & $\ZZ((\tilde\gamma^1\gamma^2)_t)$ & $\mathcal H(\tilde\gamma^1\sigma x)-\mathcal G(\tilde\gamma^1\sigma_ty)$ & $\mathcal H(\tilde \gamma^1\delta_x)$ & $\mathcal G(\tilde\gamma^1\rho_x)$ & $0$
\\
$\YY(\tilde\gamma^2)$ & $0$ & $\mathcal H(\tilde\gamma^2\sigma y)+\mathcal G(\tilde\gamma^2\sigma_t x) $ & $\mathcal H(\tilde \gamma^2\delta_y)$ & $\mathcal G(\tilde\gamma^2\rho_y)$ & $0$
\\
$\mathcal R(\tilde\sigma)$ & $\!-\mathcal H(\gamma^2\tilde\sigma y)-\mathcal G(\gamma^2\tilde\sigma_t x)\!$ & $0$ & $0$ & $0$ & $0$
\\
$\mathcal H(\tilde\delta)$ & $-\mathcal H(\gamma^2\tilde \delta_y)$ & $0$ & $0$ & $0$ & $0$
\\
$\mathcal G(\tilde\rho)$ & $-\mathcal G(\gamma^2\tilde\rho_y)$ & $0$ & $0$ & $0$ & $0$
\\
$\ZZ(\tilde\chi)$ & $0$ & $0$ & $0$ & $0$ & $0$
\\
\hline
\end{tabular}
\end{center}
\end{table}

Based on Table~\ref{tab:CommutationRelations}, it is straightforward to recover the following nontrivial adjoint actions:
\begin{align*}
 &\Ad(e^{\ve\p_t})\DDD_1 = \DDD_1 -\ve\p_t, &                                                              &\Ad(e^{\ve\DDD_1})\p_t = e^{\ve}\DDD_1, \\
 &\Ad(e^{\ve\JJ(\beta)})\DDD_1 = \DDD_1 + \ve\JJ(t\beta_t), &                                              &\Ad(e^{\ve\DDD_1})\JJ(\beta) = \JJ(\beta(e^{-\ve}t)),\\
 &\Ad(e^{\ve\XX(\gamma^1)})\DDD_1 = \DDD_1 + \ve\XX(t\gamma^1_t), &                                        &\Ad(e^{\ve\DDD_1})\XX(\gamma^1) = \XX(\gamma^1(e^{-\ve}t)), \\
 &\Ad(e^{\ve\YY(\gamma^2)})\DDD_1 = \DDD_1 +\ve\YY(t\gamma^2_t), &                                         &\Ad(e^{\ve\DDD_1})\YY(\gamma^2) = \YY(\gamma^2(e^{-\ve}t)), \\
 &\Ad(e^{\ve\mathcal R(\sigma)})\DDD_1 = \DDD_1 + \ve\mathcal R(t\sigma_t+\sigma), &                       &\Ad(e^{\ve\DDD_1})\mathcal R(\sigma) = \mathcal R(e^{-\ve}\sigma(e^{-\ve}t)), \\
 &\Ad(e^{\ve\mathcal H(\delta)})\DDD_1 = \DDD_1 + \ve\mathcal H(t\delta_t+\delta), &                       &\Ad(e^{\ve\DDD_1})\mathcal H(\delta) = \mathcal H(e^{-\ve}\delta(e^{-\ve}t,x,y)), \\
 &\Ad(e^{\ve\mathcal G(\rho)})\DDD_1 = \DDD_1 + \ve\mathcal G(t\rho_t+2\rho), &                            &\Ad(e^{\ve\DDD_1})\mathcal G(\rho) = \mathcal G(e^{-2\ve}\rho(e^{-\ve}t,x,y)), \\
 &\Ad(e^{\ve\ZZ(\chi)})\DDD_1 = \DDD_1 + \ve\ZZ(t\chi_t+\chi), &                                           &\Ad(e^{\ve\DDD_1})\ZZ(\chi) = \ZZ(e^{-\ve}\chi(e^{-\ve}t)), \\
 &\Ad(e^{\ve\JJ(\beta)})\p_t = \p_t + \ve\JJ(\beta_t), &                                                   &\Ad(e^{\ve\p_t})\JJ(\beta) = \JJ(\beta(t-\ve)), \\
 &\Ad(e^{\ve\XX(\gamma^1)})\p_t = \p_t + \ve\XX(\gamma^1_t), &                                             &\Ad(e^{\ve\p_t})\XX(\gamma^1) = \XX(\gamma^1(t-\ve)),\\
 &\Ad(e^{\ve\YY(\gamma^2)})\p_t = \p_t + \ve\YY(\gamma^2_t), &                                             &\Ad(e^{\ve\p_t})\YY(\gamma^2) = \YY(\gamma^2(t-\ve)), \\
 &\Ad(e^{\ve\mathcal R(\sigma)})\p_t = \p_t + \ve\mathcal R(\sigma_t),&                                    &\Ad(e^{\ve\p_t})\mathcal R(\sigma) = \mathcal R(\sigma(t-\ve)), \\
 &\Ad(e^{\ve\mathcal H(\delta)})\p_t = \p_t + \ve\mathcal H(\delta_t), &                                   &\Ad(e^{\ve\p_t})\mathcal H(\delta) = \mathcal H(\delta(t-\ve,x,y)), \\
 &\Ad(e^{\ve\mathcal G(\rho)})\p_t = \p_t + \ve\mathcal G(\rho_t),&                                        &\Ad(e^{\ve\p_t})\mathcal G(\rho) = \mathcal G(\rho(t-\ve,x,y)), \\
 &\Ad(e^{\ve\ZZ(\chi)})\p_t = \p_t + \ve\ZZ(\chi_t), &                                                     &\Ad(e^{\ve\p_t})\ZZ(\chi) = \ZZ(\chi(t-\ve)), \\
 &\Ad(e^{\ve\XX(\gamma^1)})\DDD_2 = \DDD_2 - \ve\XX(\gamma^1), &                                           &\Ad(e^{\ve\DDD_2})\XX(\gamma^1) = \XX(e^\ve\gamma^1), \\
 &\Ad(e^{\ve\YY(\gamma^2)})\DDD_2 = \DDD_2 - \ve\YY(\gamma^2), &                                           &\Ad(e^{\ve\DDD_2})\YY(\gamma^2) = \YY(e^\ve\gamma^2), \\
 &\Ad(e^{\ve\mathcal H(\delta)})\DDD_2 = \DDD_2 + \ve\mathcal H(x\delta_x+y\delta_y-2\delta), &            &\Ad(e^{\ve\DDD_2})\mathcal H(\delta) = \mathcal H(e^{2\ve}\delta(t,e^{-\ve}x,e^{-\ve}y)), \\
 &\Ad(e^{\ve\mathcal G(\rho)})\DDD_2 = \DDD_2 + \ve\mathcal G(x\rho_x+y\rho_y+\rho), &                     &\Ad(e^{\ve\DDD_2})\mathcal G(\rho) = \mathcal G(e^{-\ve}\rho(t,e^{-\ve}x,e^{-\ve}y)) , \\
 &\Ad(e^{\ve\ZZ(\chi)})\DDD_2 = \DDD_2 - 2\ve\ZZ(\chi), &                                                  &\Ad(e^{\ve\DDD_2})\ZZ(\chi) = \ZZ(e^{2\ve}\chi), \\
 &\Ad(e^{\ve\XX(\gamma^1)})\JJ(\beta) = A_1, &                                                             &\Ad(e^{\ve\JJ(\beta)})\XX(\gamma^1) = A_3,  \\
 &\Ad(e^{\ve\YY(\gamma^2)})\JJ(\beta) = A_2, &                                                             &\Ad(e^{\ve\JJ(\beta)})\YY(\gamma^2) = A_4, \\
 &\Ad(e^{\ve\mathcal H(\delta)})\JJ(\beta) = \JJ(\beta) + \ve\mathcal H(\beta x\delta_y-\beta y\delta_x), &&\Ad(e^{\ve\JJ(\beta)})\mathcal H(\delta) = A_5, \\
 &\Ad(e^{\ve\mathcal G(\rho)})\JJ(\beta) = \JJ(\beta) + \ve\mathcal G(\beta x\rho_y-\beta y\rho_x), &      &\Ad(e^{\ve\JJ(\beta)})\mathcal G(\rho) = A_6, \\
 &\Ad(e^{\ve\YY(\gamma^2)})\XX(\gamma^1)=\XX(\gamma^1) +\ve\ZZ((\gamma^1\gamma^2)_t), &                    &\Ad(e^{\ve\XX(\gamma^1)})\YY(\gamma^2) = \YY(\gamma^2) - \ve\ZZ((\gamma^1\gamma^2)_t), \\
 &\Ad(e^{\ve\mathcal R(\sigma)})\XX(\gamma^1) = A_7, &                                                     &\Ad(e^{\ve\XX(\gamma^1)})\mathcal R(\sigma) = A_8, \\
 &\Ad(e^{\ve\mathcal H(\delta)})\XX(\gamma^1) = \XX(\gamma^1) + \ve\mathcal H(\gamma^1\delta_x),&          &\Ad(e^{\ve\XX(\gamma^1)})\mathcal H(\delta) = \mathcal H(\delta(t,x-\ve\gamma^1,y)), \\
 &\Ad(e^{\ve\mathcal G(\rho)})\XX(\gamma^1) = \XX(\gamma^1) + \ve\mathcal G(\gamma^1\rho_x), &             &\Ad(e^{\ve\XX(\gamma^1)})\mathcal G(\rho) = \mathcal G(\rho(t,x-\ve\gamma^1,y)), \\
 &\Ad(e^{\ve\mathcal R(\sigma)})\YY(\gamma^2) = A_9, &                                                     &\Ad(e^{\ve\YY(\gamma^2)})\mathcal R(\sigma) = A_{10}, \\
 &\Ad(e^{\ve\mathcal H(\delta)})\YY(\gamma^2) = \YY(\gamma^2) +\ve\mathcal H(\gamma^2\delta_y), &          &\Ad(e^{\ve\YY(\gamma^2)})\mathcal H(\delta) = \mathcal H(\delta(t,x,y-\ve\gamma^2)), \\
 &\Ad(e^{\ve\mathcal G(\rho)})\YY(\gamma^2) = \YY(\gamma^2) +\ve\mathcal G(\gamma^2\rho_y), &              &\Ad(e^{\ve\YY(\gamma^2)})\mathcal G(\rho) = \mathcal G(\rho(t,x,y-\ve\gamma^2)),
\end{align*}
where
\begin{align*}
 &A_1:= \JJ(\beta) - \ve\bigl(\YY(\beta\gamma^1)-\mathcal G(\beta_{tt}\gamma^1 y)\bigr) + \tfrac12\ve^2\ZZ\bigl((\beta(\gamma^1)^2)_t\bigr), \\
 &A_2:= \JJ(\beta) + \ve\bigl(\XX(\beta\gamma^2)-\mathcal G(\beta_{tt}\gamma^2 x)\bigr) + \tfrac12\ve^2\ZZ\bigl((\beta(\gamma^2)^2)_t), \\
 &A_3:= \XX(\gamma^1\cos\beta\ve)+\YY(\gamma^1\sin\beta\ve) - \ve\mathcal G\bigl(\gamma^1\beta_{tt}(-x\sin\beta\ve+y\cos\beta\ve)\bigr), \\
 &A_4:=-\XX(\gamma^2\sin\beta\ve)+\YY(\gamma^2\cos\beta\ve) + \ve\mathcal G\bigl(\gamma^1\beta_{tt}(x\cos\beta\ve+y\sin\beta\ve)\bigr), \\
 &A_5:= \mathcal H(\delta(t,x\cos\beta\ve+y\sin\beta\ve,-x\sin\beta\ve+y\cos\beta\ve)), \\
 &A_6 := \mathcal G(\rho(t,x\cos\beta\ve+y\sin\beta\ve,-x\sin\beta\ve+y\cos\beta\ve)), \\
 &A_7 := \XX(\gamma^1) + \ve\bigl(\mathcal H(\gamma^1\sigma x)-\mathcal G(\gamma^1\sigma_t y)\bigr), \\
 &A_8:= \mathcal R(\sigma) -\ve\left(\mathcal H(\gamma^1\sigma x)-\mathcal G(\gamma^1\sigma_t y)\right)+\tfrac12\ve^2\mathcal H\left((\gamma^1)^2\sigma\right), \\
 &A_9 := \YY(\gamma^2) + \ve\left(\mathcal H\left(\gamma^2\sigma y\right)+ \mathcal G(\gamma^2\sigma_t x)\right), \\
 &A_{10}:= \mathcal R(\sigma) -\ve\left(\mathcal H(\gamma^2\sigma y)+\mathcal G(\gamma^2\sigma_t x)\right)+\tfrac12\ve^2\mathcal H\left((\gamma^2)^2\sigma\right).
\end{align*}
Using the above adjoint actions, we construct the following optimal list of inequivalent one-dimensional subalgebras of~$\mathfrak g^\sim_1$:
\begin{align}\label{eq:OneDimensionalSubalgebrasEquivalenceAlgebra}
\begin{split}
 &\langle\DDD_1+a\DDD_2\rangle,\quad \langle\p_t+b\DDD_2\rangle, \quad \langle\DDD_2+\JJ(\beta)+\mathcal R(\sigma)\rangle, \quad\langle\JJ(\beta)+\mathcal R(\sigma)+\ZZ(\chi)\rangle,  \\
 &\langle\XX(\gamma^1)+\mathcal R(\sigma)\rangle,\quad \langle\mathcal R(\sigma)+\mathcal H(\delta)+\mathcal G(\rho)+\ZZ(\chi) \rangle,
\end{split}
\end{align}
where $a\in\mathbb R$, $b\in\{-1,0,1\}$. In fact, each element of the above list represents a parameterized class of subalgebras rather than a single subalgebra. Particular subalgebras correspond to arbitrary but fixed values of parameters.
Subalgebras within each of the four last classes can be equivalent.
Thus, in the third class we can use adjoint action $\Ad(e^{\ve\DDD_1})$ to rescale $\sigma$ as well as the argument $t$ of~$\beta$ and~$\sigma$. Using $\Ad(e^{\ve\p_t})$ allows us to shift $t$ in the functions $\beta$ and $\sigma$.
In the fourth class, equivalence is understood up to actions of $\Ad(e^{\ve\DDD_1})$, $\Ad(e^{\ve\DDD_2})$ and $\Ad(e^{\ve\p_t})$, which permit rescaling of $\sigma$, $\chi$ and their argument $t$, scaling of $\chi$ as well as shifts of $t$ in $\beta$, $\sigma$ and $\chi$. Similar equivalence is also included in the fifth class. The last class comprises equivalence with respect to actions of $\Ad(e^{\ve\JJ(\beta)})$, $\Ad(e^{\ve\XX(\gamma^1)})$ and $\Ad(e^{\ve\YY(\gamma^2)})$.
In the three last classes we can also rescale the entire basis elements.

{\footnotesize\setlength{\itemsep}{0ex}

}

\end{document}